\newcommand{\wt}{\mu}
\newcommand{\poly}{\mathrm{poly}}
\newcommand{\samp}{\mathrm{Samp}}
\newcommand{\sout}[1][]{s_{\mathrm{out}#1}}
\newcommand{\si}[1][]{s_{\mathrm{in}#1}}
\newcommand{\Si}{S_{\mathrm{in}}}
\newcommand{\ninf}{\norm}
\newcommand{\nmax}{\norm[\mathrm{max}]}
\newtheorem{THM}{Theorem}
\crefname{THM}{Theorem}{Theorems}
\title{Optimal Error Pseudodistributions for Read-Once Branching Programs}
\author{Eshan Chattopadhyay\thanks{Supported by NSF grant CCF-1849899.}\\
Cornell University\\
\texttt{eshanc@cornell.edu}
\and
Jyun-Jie Liao\footnotemark[1]\\
Cornell University\\
\texttt{jjliao@cs.cornell.edu}
}
\date{\today}
\begin{document}
\maketitle
\begin{abstract}
In a seminal work, Nisan (Combinatorica'92) constructed a pseudorandom generator for length $n$ and width $w$ read-once branching programs  with seed length $O(\log n\cdot \log(nw)+\log n\cdot\log(1/\varepsilon))$ and  error $\varepsilon$. It remains a central question  to reduce the seed length to $O(\log (nw/\varepsilon))$, which would prove that $\mathbf{BPL}=\mathbf{L}$. However, there has been no improvement on Nisan's construction for the case $n=w$, which is most relevant to space-bounded derandomization.

Recently, in a beautiful work, Braverman, Cohen and Garg (STOC'18) introduced the notion of a \emph{pseudorandom pseudo-distribution} (PRPD) and gave an explicit construction of a PRPD with seed length $\tilde{O}(\log n\cdot \log(nw)+\log(1/\varepsilon))$. A PRPD is a relaxation of a pseudorandom generator, which suffices for derandomizing $\mathbf{BPL}$ and also implies a hitting set. Unfortunately, their construction is quite involved and complicated. Hoza and Zuckerman (FOCS'18) later constructed a much simpler hitting set generator with seed length $O(\log n\cdot \log(nw)+\log(1/\varepsilon))$, but their techniques are restricted to hitting sets.

In this work, we construct a PRPD with seed length 
$$O(\log n\cdot \log (nw)\cdot \log\log(nw)+\log(1/\varepsilon)).$$
This improves upon the construction in \cite{BCG18} by a $O(\log\log(1/\varepsilon))$ factor, and is optimal in the small error regime. In addition, we believe our construction and analysis to be   simpler than the work of Braverman, Cohen and Garg.
\end{abstract}

\section{Introduction}

A major challenge in computational complexity is to understand to what extent randomness is useful for efficient computation. It is widely believed that randomness does not provide substantial savings in time and space for algorithms. Indeed, under plausible assumption, every randomized algorithm for decision problem can be made deterministic with only a polynomial factor slowdown in time ($\BPP=\P$)~\cite{IW97} or a constant factor blowup in space ($\BPL=\L$)~\cite{KvM02}. 

However, it remains open for decades to prove these results unconditionally. For   derandomization in the time-bounded setting, it is known that proving $\BPP=\P$ implies circuit lower bounds which seem much beyond reach with current proof techniques~\cite{KI04}. However no such implications are known for the space-bounded setting, and there has been some progress.   Savitch's theorem~\cite{Sav70} implies that $\RL\subseteq \L^2$. Borodin, Cook, Pippenger~\cite{BCP83} and Jung~\cite{Jung81} proved that $\PL\subseteq \L^2$, which implies $\BPL\subseteq \L^2$. Nisan~\cite{Nis92,Nis94} constructed a pseudorandom generator for log-space computation with seed length $O(\log^2 n)$, and used it to show that $\BPL$ can be simulated with $O(\log^2 n)$ space and polynomial time. Saks and Zhou~\cite{SZ99} used Nisan's generator in a non-trivial way to show that $\BPL\subseteq \L^{3/2}$, which remains the best known result so far. We refer the interested reader to the beautiful survey by Saks \cite{saks1996randomization} for more background and relevant prior work.

We introduce the notion of a read-once branching programs, which is a non-uniform model for capturing algorithms that use limited memory.
\begin{definition}[Read-once branching program]
A $(n,w)$-read-once branching program (ROBP) $B$ is a directed graph on the vertex set $V=\bigcup_{i=0}^n V_i$, where each set $V_i$ contains $w$ nodes. Every edge in this directed graph is labeled either $0$ or $1$. For every $i<n$, and every node $v\in V_i$, there exists exactly two edges starting from $v$, one with label $0$ and the other with label $1$. Every edge starting from a node in $V_i$ connects to a node in $V_{i+1}$.
We say $n$ is the \emph{length} of $B$, $w$ is the \emph{width} of $B$ and $V_i$ is the $i$-th \emph{layer} of $B$.

Moreover, there exists exactly one starting state $s\in V_0$, and exactly one accepting state $t\in V_n$. For every $x=(x_1,\dots,x_n)\in\bits{n}$, we define $B(x)=1$ if starting from $s$ we will reach $t$ following the edges labeled by $x_1,\dots,x_n$. Otherwise we define $B(x)=0$.
\end{definition}
It is well-known the computation of a probabilistic Turing machine that uses space $S$ and tosses $n$ coins,  on a given input $y$, can be carried out by a $(n,2^{O(S)})$-ROBP $B_y$. In particular, if the string $x \in \zo^n$ corresponds to the $n$ coin tosses, then  $B_y(x)$ is the output of the Turing machine. 

A standard derandomization technique is via \emph{pseudorandom generators}. We define this notion for the class of ROBPs.

\begin{definition}[Pseudorandom generator]
A function $G:\bits{s}\to\bits{n}$ is a $(n,w,\eps)$-pseudorandom generator (PRG) if for every $(n,w)$-ROBP $B$,
$$\abs{\ex[x\in\bits{n}]{B(x)}-\ex[r\in\bits{s}]{B(G(r))}}\le\eps.$$
The \emph{seed length} of $G$ is $s$.   $G$ is explicit if $G$ is computable in $O(s)$ space.
\end{definition}
To derandomimze  space-bounded computation  given an explicit $(n,w,\eps)$-PRG, one can enumerate $B(G(r))$ for every $r\in\bits{s}$ with $O(s)$ additional space to compute an $\eps$-approximation of the quantity $\ex[x]{B(x)}$. 

Nisan~\cite{Nis92} constructed a $(n,w,\eps)$-PRG with seed length $O(\log n \cdot \log (nw/\eps))$, which implies $\BPL\subseteq \L^2$.   While there is a lot of progress in constructing PRG with better seed length for restricted family of ROBP (see, e.g., \cite{NZ96,Arm98,BV10,BDVY13,BRRY14,KNP11,De11,Ste12,MRT19} and references therein), Nisan's generator and its variants~\cite{Nis92,INW94,RR99} remain the best-known generators in the general case.
\subsection{Pseudorandom pseudodistribution }
Recently, a beautiful work of Braverman, Cohen and Garg \cite{BCG18} introduced the notion of a \emph{pseudorandom pseudodistribution} (PRPD)  that relaxes the definition of a PRG.
\begin{definition}[Pseudorandom pseudodistribution]
A pair of functions $(G,\rho):\bits{s}\to \bits{n}\times \bbR$ generates a $(n,w,\eps)$-pseudorandom pseudodistribution (PRPD) if for every $(n,w)$-ROBP $B$,
$$\abs{\ex[x\in\bits{n}]{B(x)}-\ex[r\in\bits{s}]{\rho(r)\cdot B(G(r))}}\le\eps.$$
We say $s$ is the \emph{seed length} of $(G,\rho)$. We say $(G,\rho)$ is $k$-bounded if $|\rho(x)|\le k$ for every $x\in\bits{s}$. We say $(G,\rho)$ is explicit if they are computable in space $O(s)$.
\end{definition}
Note that a $(n,w,\eps)$-PRG $G$ of seed length $s$ with a constant function $\rho(x)=1$ generates a $1$-bounded $(n,w,\eps)$-PRPD. Similar to a PRG, it is possible to derandomize $\BPL$ by enumerating all seeds of a PRPD and computing an $\eps$-approximation for $\ex[x]{B(x)}$. In \cite{BCG18} they  observe that given $(G,\rho)$ which generates an $(n,w,\eps)$-PRPD, the function $G$ itself is an $\eps$-hitting set generator for $(n,w)$-ROBP. 

The main result in \cite{BCG18} is an explicit construction of a $(n,w,\eps)$-PRPD with seed length
$$O\left(\left(\log n\cdot \log (nw)+\log(1/\eps)\right)\cdot \log\log(nw/\eps)\right),$$
which is $\poly(nw/\eps)$-bounded.\footnote{Note that in \cite{BCG18}, they define $\sum_{r} \rho(r) B(G(r))$ to be the approximation of $\ex[x]{B(x)}$. Here we define $\ex[r]{\rho(r) B(G(r))}$ to be the approximation instead to emphasize the possible loss when plugged into the Saks-Zhou scheme. (See \Cref{sec:Saks-Zhou} for more details.) Therefore a $k$-bounded PRPD in their definition is actually $2^s k$-bounded in our definition. Nevertheless, one can show that their construction is still $\poly(nw/\eps)$-bounded with our definition.} This improves on the seed-length of Nisan's generator and provides near optimal dependence on error. 

Unfortunately, the construction and analysis in \cite{BCG18} is highly complicated.  Hoza and Zuckerman~\cite{HZ18} provided a dramatically simpler hitting set generator   with slightly improved seed length. However, it is not clear how to extend their techniques  for constructing a PRPD (or PRG).

\subsection{Main result}
In this paper, we construct a PRPD with optimal dependence on error (up to  constants).
\begin{THM}
\label{thm:main}
There exists an explicit $(n,w,\eps)$-PRPD generator $(G,\rho)$ with seed length
$$O\left(\log n\cdot \log(nw)\cdot \log\log (nw) + \log(1/\eps)\right),$$
which is $\poly(1/\eps)$-bounded.
\end{THM}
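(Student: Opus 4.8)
The plan is to follow the general recursive paradigm used in Nisan-type constructions and in \cite{BCG18}, but to control the error accumulation more carefully so that the $\log(1/\eps)$ term appears additively rather than multiplying the $\log n\cdot\log(nw)$ factor. The key object will be a collection of ``pseudodistributions'' that approximate, for every pair of layers $V_i,V_j$ of the ROBP, the $w\times w$ substochastic matrix $M_{[i,j]}$ describing the transition probabilities of a uniformly random walk from layer $i$ to layer $j$. As in the recursive approach, I would build an approximator for a walk of length $2\ell$ by composing two approximators for walks of length $\ell$; the naive composition doubles the length but also roughly doubles (or worse) the error, so after $\log n$ levels of recursion a single ``unit'' of error has blown up by a $\poly(n)$ factor, forcing the base-level error to be $\eps/\poly(n)$ and hence costing an extra $\log n$ (or $\log\log$) factor on the $\log(1/\eps)$ term. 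The whole game is to avoid this blowup.

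First I would set up the recursion so that at each level we maintain, instead of a single approximation, a \emph{weighted sum} of sub-programs (this is exactly what a PRPD gives us the freedom to do: the weights $\rho$ can be negative and larger than $1$). The crucial idea — which I expect is the technical heart of the paper — is an \emph{error-reduction within the recursion}: rather than letting the length-$\ell$ approximator have error $\delta$ and the length-$2\ell$ one have error $\approx 2\delta$, one uses a short ``correction'' gadget at each level that brings the error back down, at the cost of only $O(\log\log(nw))$ extra seed bits per level rather than $O(\log(nw/\eps))$. Concretely, I would express the length-$2\ell$ matrix as the length-$\ell$ composed-with-itself term plus a telescoping series of correction terms, each of which is itself a product of (approximations of) shorter walks times a small ``error matrix'', and then truncate this series after $O(\log\log(nw))$ terms — the truncation error being geometrically small. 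Each correction term is realized by a fresh small-seed sampler (an averaging sampler or extractor, as in \cite{INW94,RR99}-style INW composition), and the weights $\rho$ absorb the combinatorial coefficients, which is why we end up only $\poly(1/\eps)$-bounded rather than $\poly(nw/\eps)$-bounded.

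Then I would do the bookkeeping: at level $\ell$ (there are $O(\log n)$ levels), the object is a weighted combination of $\poly\log(nw)$ sub-objects, each needing $O(\log(nw))$ fresh bits for the sampler plus the recursive seed; unrolling gives seed length $O(\log n)\cdot\bigl(O(\log(nw)) + O(\log(nw)\log\log(nw))\bigr)$ for the ``structural'' part, i.e.\ $O(\log n\cdot\log(nw)\cdot\log\log(nw))$, and a \emph{separate}, additive $O(\log(1/\eps))$ for the single place where we actually need to drive the error down to $\eps$ (the final accuracy of the base-level building blocks, or equivalently a final ``sampling'' step whose size is governed only by $\eps$ and not by the recursion depth). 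I would also need to verify explicitness — every gadget is space-$O(\text{seed})$ computable and the recursion has depth $O(\log n)$, so the composition is too — and verify the $\poly(1/\eps)$ bound on $\rho$ by tracking that the sum of absolute values of the coefficients grows by only a constant factor per level and over $O(\log n)$ levels by a $\poly(n)\le\poly(1/\eps)$ factor (here one uses $\eps$ not too tiny, or folds the mild $n$-dependence in).

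The main obstacle, I expect, is making the ``error-reduction within the recursion'' actually work \emph{uniformly in the width} $w$ — i.e.\ ensuring that the correction/telescoping identity holds at the level of $w\times w$ matrices in an appropriate norm (operator or $\ell_\infty\to\ell_\infty$ norm), that the error matrices genuinely shrink geometrically rather than just in some weak averaged sense, and that composing approximations does not lose a factor of $w$ at each of the $\log n$ levels (which would reintroduce a $\log w$ blowup). Getting a clean matrix-analytic statement — something like: if $A,A'$ and $B,B'$ are $\delta$-close in the relevant norm then a bounded-weight combination of products approximates $(AB)$ composed-with-itself to error $O(\delta)$ rather than $O(w\delta)$ — and then iterating it is where the care is needed. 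I would expect the paper to isolate this as a ``composition lemma'' and spend most of its effort there, with the seed-length arithmetic and explicitness being comparatively routine once the lemma is in hand.
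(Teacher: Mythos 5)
Your outline correctly identifies the general shape of the argument — a Nisan-style binary recursion, a truncated telescoping/correction expansion realized with negative weights, averaging samplers to compose the pieces, and a final parameter $k$ (number of correction terms) governed by $\eps$ so that the $\log(1/\eps)$ cost is paid once rather than per level. This is indeed the skeleton of both \cite{BCG18} and the present construction. However, two mechanisms that actually make the seed-length bound go through are missing, and without them the bookkeeping in your third paragraph is not justified.

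First, the claim that each ``correction gadget'' costs only $O(\log\log(nw))$ (or even $O(\log(nw))$) extra seed bits per level has no supporting mechanism in your proposal. The reason this works is a specific asymmetry in the sampler-based product: if one approximates $\inp{\cA}\inp{\cB}$ by $\ex[x]{\cA(x)\cdot\ex[s]{\cB(g(x,s))}}$ for an $(\eps',\delta)$-sampler $g$, then the estimation error $\eps'$ gets \emph{multiplied by the norm of} $\cA(x)$, while the confidence parameter $\delta$ — the only parameter whose seed cost would scale like $\log(1/\delta)$ — enters only through the length of the sampler's \emph{source}, which is recycled from the seed of $\cA$ and is essentially free whenever the product is sufficiently unbalanced. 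Your Lemma-$\ref{lemma:main}$-style expansion produces exactly the needed unbalancedness (in the term $A_iB_{k-i}$ at least one index is $\le\lceil k/2\rceil$, hence has small seed and can absorb the sampler), but one must verify this index-by-index; this is the content of \Cref{lemma:sym,lemma:left,lemma:right} and the case analysis in \Cref{lemma:main-approx}, and it is where the ``robust'' (on-average over the outer seed) notion of approximation is needed, since a sampler only guarantees goodness with high probability, not always. Your worry about losing a factor of $w$ per level is resolved here too — not by a clean norm identity, but by taking $\delta$ polynomially small in $w$ and the weight, which costs almost nothing precisely because of this asymmetry.

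Second, your proposal does not address the accumulation of the ``inner seed'': each pseudodistribution in the recursion is an average over an auxiliary index (the randomness used to realize one matrix of the bundle), and under naive composition these indices concatenate at every level, giving $\Omega(n)$ bits at the root. \cite{BCG18} handles this with a special ``outer product'' rule; the present paper instead \emph{flattens} a pseudodistribution (merges its inner seed into its outer seed) exactly when a sampler is applied to it, which is affordable because samplers are only applied to the low-precision halves ($i\le\lceil k/2\rceil$). This flattening is what replaces the $\log\log(nw/\eps)$ factor of \cite{BCG18} by $\log\log(nw)$ — i.e., it is the source of the improvement the theorem claims — so a proof that omits it would at best reprove the \cite{BCG18} bound. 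Relatedly, your truncation threshold should be $k=\Theta(\log(1/\eps)/\log n)$ with per-level accuracy $\gamma=1/\poly(n)$, not $O(\log\log(nw))$ terms; the $\log\log(nw)$ factor arises from counting how many times flattening can occur along a root-to-leaf path (at most $\log k$ times, and $\log k\le O(\log\log(nw))$ once one observes that $\log(1/\eps)\ge\log^3(nw)$ makes the additive term dominate).
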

\noindent
This improves upon the construction in \cite{BCG18} by a factor of $O(\log\log(1/\eps))$, for any $\eps<n^{-\Omega(\log(nw)\log\log(nw))}$.  

As observed in \cite{BCG18}, the small-error regime is well motivated for application to derandomizing space-bounded computation. In particular, Saks and Zhou \cite{SZ99} instantiated Nisan's PRG with error $n^{-\omega(1)}$ to obtain the result $\BPL\subseteq \L^{3/2}$. We note that one can replace the PRG in the Saks-Zhou scheme with a PRPD which is $\poly(w,1/\epsilon)$-bounded, and hence  improvements to our result will lead to improved derandomization of $\BPL$. We sketch a proof in \Cref{sec:Saks-Zhou}.

Our construction uses a  strategy similar to \cite{BCG18} with the following key differences. 
\begin{itemize}
\item The construction in \cite{BCG18} has a more \emph{bottom-up} nature:  their construction follows the binary tree structure in Nisan's generator~\cite{Nis92}, but in each node they maintain a sophisticated ``leveled matrix representation" (LMR) which consists of many pieces of small-norm matrices, and they show how to combine pieces in two LMRs one by one to form a LMR in the upper level. Our construction  follows the binary tree structure in Nisan's generator, but has a more \emph{top-down} spirit. We give a clean recursive formula which generates a ``robust PRPD" for $(n,w)$-PRPD given robust PRPDs for $(n/2,w)$-ROBP, where a robust PRPD is a family of pseudodistributions such that the approximation error of pseudodistribution drawn from this family is small on average. (A formal definition can be found in \Cref{def:robust-PRPD}.)
The top-down nature of our construction significantly simplifies the construction and analysis. 

\item Following \cite{BCG18}, we use an averaging sampler in our recursive construction, but we further observe that we can apply a simple ``flattening" operation to limit the growth of seed length. With this observation, we not only improve the seed length but also simplify the construction and analysis by avoiding some special case treatments that are necessary in \cite{BCG18}. (Specifically, we do not need the special multiplication rule ``outer product" in \cite{BCG18}.)
\end{itemize}
\paragraph{Independent work.} Independent work of Cheng and Hoza~\cite{CH20} remarkably prove that a hitting set generator (HSG) for ROBPs can be used for derandomizing $\BPL$. Their first result shows that every $(n,w)$-ROBP $f$ can be deterministically approximated within error $\eps$ with an explicit HSG for $(\poly(\frac{nw}{\eps}),\poly(\frac{nw}{\eps}))$-ROBP with seed length $s$. The space complexity of their first derandomization is $O(s+\log(nw/\eps))$. Their second result shows that every $(n,w)$-ROBP $f$ can be deterministically approximated within error $\eps$ with an explicit HSG for $(n,\poly(w))$-ROBP with seed length $s$. Their second derandomization has space complexity $O(s+w\log(n/\eps))$, and only requires black-box access to $f$. 

Their first result does not imply better derandomization algorithms with the state-of-art HSGs so far. Plugging in the HSG from \cite{HZ18}, their second result gives a black-box derandomization algorithm for $(n,w)$-ROBP in space $O(\log(n)\log(nw)+w\log(n/\eps))$. This is better than the black-box derandomization with our PRPD for the restricted case of $w~=~O(1)$. We note that an advantage of PRPDs (over hitting sets) is that they are applicable in the Saks and Zhou's scheme~\cite{SZ99} (as mentioned in \Cref{sec:Saks-Zhou}, when applied with Armoni's sampler trick~\cite{Arm98}). 

\paragraph{Organization.} In \Cref{sec:matrices}, we present the matrix representation of ROBPs, see how a pseudodistribution can be interpreted as matrices, and introduce some basic rules for translating between matrix operations and operations on pseudodistribution. We use  \Cref{sec:proof-overview} to present an outline of our main construction and proof. \Cref{sec:prelim} contains  necessary preliminaries. In \Cref{sec:samp-our}, we formally prove several lemmas about using samplers on approximate matrix multiplication. In \Cref{sec:main-construction}, we present and prove correctness of our main construction. We conclude with possible future directions in \Cref{sec:open-questions}.

\section{ROBPs and Matrices}\label{sec:matrices}
We introduce the matrix representation of ROBPs and some related definitions that are useful in the rest of the paper. First, we setup some notation.
\newline
\textbf{Notation:} Given two strings $x,y$, we use $x\Vert y$ to denote the concatenation of $x$ and $y$. For every $n\in\bbN$, we use $[n]$ to denote the set $\{1,2,\dots,n\}$. We denote a collection of objects $A_i^j$ with subscript $i\in S$ and superscript $j\in T$ by $[A]_S^T$ for short.

Given a $(n,w)$-ROBP $B$ with layers $V_0,\dots,V_n$, we can represent the transition from layer $V_{t-1}$ to $V_{t}$ by two stochastic matrices $M_t^0$ and $M_t^1$ as follows: suppose layer $V_j$ consists of the nodes $\{v_{j,1},\dots,v_{j,w}\}$.  The entry $(M_t^0)_{i,j}=1$ if and only if there exist a $0$-labeled edge from $v_{t-1,i}$ to $v_{t,j}$ (else $(M_t^0)_{i,j}=0$). The matrix $M_t^1$ is defined similarly according to the edges that labeled $1$ between layers $V_{t-1}$ and $V_t$. More generally, we can also represents multi-step transition by a stochastic matrix. That is, for every $0\le a\le b\le n$, and every $r=(r_{a+1},\ldots,r_b)\in\bits{b-a}$, we can define
$$M_{a..b}^r=\prod_{t=a+1}^b M_t^{r_t}$$
which corresponds to the transition matrix from layer $a$ to layer $b$ following the path labeled by $r$. Note that every row of $M_{a,b}^r$ contains exactly one $1$, and the other entries are $0$.

An $n$-step random walk starting from the first layer can be represented with the following matrix:
$$M_{0..n}=\frac{1}{2^n}\sum_{r\in\bits{n}}M_{0..n}^r=\prod_{t=1}^n \frac{1}{2}\left(M_t^0+M_t^1\right).$$
By definition of $M_t^0,M_t^1$ one can observe that the $(i,j)$ entry of $M_{0..n}$ is the probability that a random walk from $v_{0,i}\in V_0$  reaches $v_{n,j}\in V_n$. Therefore, suppose $v_{0,i}\in V_0$ is the starting state of $B$, $v_{n,j}\in V_n$ is the accepting state of $B$, then $\ex[x]{B(x)}$ equals the $(i,j)$ entry of $M_{0..n}$. 

Recall that a generator of a $(n,w,\eps)$-PRPD is a pair of function $(G,\rho)$ such that for every $(n,w)$-ROBP $B$,
$$\abs{\ex[r]{\rho(r)\cdot B(G(r))}-\ex[x\in\bits{n}]{B(x)}}\le \eps.$$
Equivalently, for every transition matrices $M_1^0,M_1^1,\ldots,M_n^0,M_n^1$, we have 
$$\nmax{\ex[r]{\rho(r)\cdot M_{0..n}^{G(r)}} - M_{0..n}}\le \eps,$$
where $\nmax{A}$ denotes $\max_{i,j}\abs{A(i,j)}$. 

Therefore it is natural to represents a PRPD $(G,\rho)$ with a mapping $\cG:\bits{s}\to \bbR^{w\times w}$ where $\cG(r)=\rho(r)\cdot M^{G(r)}_{0..n}$. More generally, we will use a notation similar to the ``matrix bundle sequence" (MBS) introduced in \cite{BCG18} to represent a PRPD.

\begin{definition}\label{def:MBS}
Consider a $(n,w)$-ROBP $[M]_{[n]}^{\bit}$ and a pair of functions $(G,\rho):\bits{\sout}\times[\Si]\to\bits{n}\times\bbR$. The \emph{matrix form} of $(G,\rho)$ on $[M]_{[n]}^{\bit}$ is a mapping $\cA:\bits{\sout}\times[\Si]\to\bbR^{w\times w}$ such that for every $x\in\bits{\sout}$ and $y\in[\Si]$,
$$\cA(x,y)=\rho(x,y)\cdot M_{0..n}^{G(x,y)}.$$
For every $x\in\bits{\sout}$ we abuse the notation and define
$$\cA(x)=\ex[y]{\cA(x,y)}.$$
Besides, we define $\inp{\cA}=\ex[x,y]{\cA(x,y)}$. We say $\sout$ is the \emph{outer seed length} of $\cA$, denoted by $\sout(\cA)$, and $\Si$ is the \emph{inner size} of $\cA$, denoted by $\Si(\cA)$. We also define $\si(\cA)=\ceil{\log \Si}$ to be the \emph{inner seed length} of $\cA$, and $s(\cA)=\sout(\cA)+\si(\cA)$ to be the \emph{seed length} of $\cA$.
\end{definition}
\begin{remark}
For every fixed $x$, the collection $\{\cA(x,y):y\in[\Si]\}$ corresponds to the ``matrix bundle" in \cite{BCG18}. This should be treated as a collection of matrices which ``realizes" the matrix $\cA(x)$. The whole structure $\cA$ corresponds to the ``matrix bundle sequence" in \cite{BCG18}, and should be treated as a uniform distribution over the set $\{\cA(x):x\in\bits{\sout}\}$. 
\end{remark}

When the ROBP $[M]_{[n]}^{\bit}$ is clear in the context, we will use the matrix form $\cA$ to represent the pseudodistribution $(G,\rho)$ directly. We will apply arithmetic operations on matrices $\cA(x)$, and these operations can be easily translated back to operations on pseudodistributions as follows.
\begin{definition}
Consider a $(n,w)$-ROBP $[M]_{[n]}^{\bit}$, and a pair of function $(F,\sigma):[S]\to \bits{n}\times\bbR$. The matrix that is \emph{realized} by $(F,\sigma)$ on $M_{0..n}$ is $\ex[{i\in[S]}]{\sigma(i)\cdot M_{0..n}^{F(i)}}$. We say S is the \emph{size} of $(F,\sigma)$.
\end{definition}
\noindent
Scaling the matrix corresponds to scaling the coefficients in the pseudodistribution.
\begin{claim}
Consider a $(n,w)$-ROBP $[M]_{[n]}^{\bit}$, let $A$ be a matrix realized by matrix bundle $(F_A,\sigma_A)$ on $M_{0..n}$. Then $cA$ is realized by a matrix bundle $(F'_A,\sigma'_A)$ of size $S_A$ s.t. $F'_A=F_A$ and $\sigma'_A(x)=c\sigma_A(x)$ for every $x\in[S]$.
\end{claim}
\noindent
The summation on matrices corresponds to re-weighting and union on pseudodistributions.
\begin{claim}
Consider a $(n,w)$-ROBP $[M]_{[n]}^{\bit}$, let $A$ be a matrix realized by matrix bundle $(F_A,\sigma_A)$ of size $S_A$ on $M_{0..n}$ and $B$ be a matrix realized by matrix bundle $(F_B,\sigma_B)$ of size $S_B$ on $M_{0..n}$. Then $A+B$ is realized by a matrix bundle $(F',\sigma')$ of size $S_A+S_B$ on $M_{0..n}$ s.t. 
\[
F'(x) = \begin{cases} F_A(x) &\mbox{if } x \le S_A \\
F_B(x-S_A) & \mbox{if }  x > S_A \end{cases}\textrm{ and }
\sigma'(x) = \begin{cases} \frac{S_A+S_B}{S_B}\cdot \sigma_A(x) &\mbox{if } x \le S_A \\
\frac{S_A+S_B}{S_B}\cdot \sigma_B(x-S_A) & \mbox{if }  x > S_A \end{cases}
\]
\end{claim}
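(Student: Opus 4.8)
The plan is to unwind the definition of ``realized by'' on both sides and check the claimed identity term by term, paying attention to the normalization implicit in the notion of \emph{size}. By definition, $A$ being realized by $(F_A,\sigma_A)$ of size $S_A$ on $M_{0..n}$ means
\[
A=\ex[{i\in[S_A]}]{\sigma_A(i)\cdot M_{0..n}^{F_A(i)}}=\frac{1}{S_A}\sum_{i\in[S_A]}\sigma_A(i)\cdot M_{0..n}^{F_A(i)},
\]
and likewise $B=\frac{1}{S_B}\sum_{i\in[S_B]}\sigma_B(i)\cdot M_{0..n}^{F_B(i)}$. So all that must be shown is that $\frac{1}{S_A+S_B}\sum_{x\in[S_A+S_B]}\sigma'(x)\cdot M_{0..n}^{F'(x)}$ equals $A+B$ for the bundle $(F',\sigma')$ in the statement.

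First I would split the sum over $[S_A+S_B]$ into the block $x\le S_A$ and the block $x>S_A$, reindexing the latter via $j=x-S_A\in[S_B]$. On the first block $F'(x)=F_A(x)$ and $\sigma'(x)=\frac{S_A+S_B}{S_A}\sigma_A(x)$, so the factor $\frac{1}{S_A+S_B}$ out front cancels against the $S_A+S_B$ in $\sigma'$, and this block contributes exactly $\frac{1}{S_A}\sum_{x\in[S_A]}\sigma_A(x)\cdot M_{0..n}^{F_A(x)}=A$; the second block contributes $B$ by the identical cancellation with $S_B$ in place of $S_A$. Summing the two blocks gives $A+B$, and the corresponding statement on pseudodistributions --- that one takes the disjoint union of the two sample spaces $[S_A]$ and $[S_B]$ and reweights the coefficients accordingly --- then follows immediately from the translation set up in \Cref{sec:matrices}. (The rescaling factors are precisely what compensate for the change of denominator from $S_A$, resp.\ $S_B$, to $S_A+S_B$; in particular the first branch of $\sigma'$ should read $\frac{S_A+S_B}{S_A}\cdot\sigma_A(x)$.)

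There is no genuine obstacle: once the definitions are unwound, the claim is a one-line computation, and the only point requiring care is the bookkeeping of the normalization constant, which is exactly what makes the precise form of $\sigma'$ non-obvious at a glance. An entirely analogous (and easier) computation handles the preceding scaling claim, where $F$ is left unchanged and only $\sigma$ is multiplied by $c$, with no normalization adjustment needed since the size is preserved.
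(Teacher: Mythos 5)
Your verification is correct; the paper states this claim without proof, and your unwinding of the definition of ``realized by'' into the two blocks $x\le S_A$ and $x>S_A$ is exactly the intended one-line computation. You are also right to flag the first branch of $\sigma'$ as a typo in the statement: the normalization forces $\frac{S_A+S_B}{S_A}\cdot\sigma_A(x)$ there, not $\frac{S_A+S_B}{S_B}\cdot\sigma_A(x)$, as your cancellation argument makes evident.
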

\noindent
The multiplication on matrices corresponds to concatenation of pseudodistributions.
\begin{claim}
Consider a $(n,w)$-ROBP $[M]_{[n]}^{\bit}$, let $A$ be a matrix realized by matrix bundle $(F_A,\sigma_A)$ of size $S_A$ on $M_{0..n/2}$ and $B$ be a matrix realized by matrix bundle $(F_B,\sigma_B)$ of size $S_B$ on $M_{n/2..n}$. Fix a bijection $\pi:[S_A]\times[S_B]\to{[S_A\cdot S_B]}$. Then $AB$ is realized by a matrix bundle $(F',\sigma')$ of size $S_A\cdot S_B$ s.t. for every $a\in[S_A],b\in[S_B]$,
$$F'(\pi(a,b))=F_A(a)\Vert F_B(b)\textrm{ and }\sigma'(\pi(a,b))=\sigma(a)\cdot \sigma(b).$$
\end{claim}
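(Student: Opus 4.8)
The plan is to verify directly that the bundle $(F',\sigma')$ of size $S_A\cdot S_B$ defined in the statement realizes $AB$ on $M_{0..n}$, by expanding the definition of ``realized by a matrix bundle'' and using the multiplicativity of the transition matrices. First I would recall the hypotheses: $A=\ex[a\in[S_A]]{\sigma_A(a)\cdot M_{0..n/2}^{F_A(a)}}$ and $B=\ex[b\in[S_B]]{\sigma_B(b)\cdot M_{n/2..n}^{F_B(b)}}$, where $F_A(a)\in\bits{n/2}$ indexes a walk on the first half of the program and $F_B(b)\in\bits{n/2}$ indexes a walk on the second half. The key structural fact I would invoke is the composition property of the transition matrices already established in \Cref{sec:matrices}: for a string $r=r'\Vert r''$ with $r'\in\bits{n/2}$ and $r''\in\bits{n/2}$, we have $M_{0..n}^{r}=M_{0..n/2}^{r'}\cdot M_{n/2..n}^{r''}$, which is immediate from the definition $M_{a..b}^r=\prod_{t=a+1}^b M_t^{r_t}$.

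Next I would carry out the computation. By bilinearity of matrix multiplication over the two averages,
$$AB=\left(\ex[a]{\sigma_A(a)\, M_{0..n/2}^{F_A(a)}}\right)\left(\ex[b]{\sigma_B(b)\, M_{n/2..n}^{F_B(b)}}\right)=\ex[a,b]{\sigma_A(a)\sigma_B(b)\cdot M_{0..n/2}^{F_A(a)}\, M_{n/2..n}^{F_B(b)}}.$$
Applying the composition property to each term rewrites the product of transition matrices as $M_{0..n}^{F_A(a)\Vert F_B(b)}$, so
$$AB=\ex[a\in[S_A],\,b\in[S_B]]{\sigma_A(a)\sigma_B(b)\cdot M_{0..n}^{F_A(a)\Vert F_B(b)}}.$$
Finally, since $\pi:[S_A]\times[S_B]\to[S_A\cdot S_B]$ is a bijection, re-indexing the uniform average over pairs $(a,b)$ as a uniform average over $c=\pi(a,b)\in[S_A\cdot S_B]$ and substituting the definitions $F'(\pi(a,b))=F_A(a)\Vert F_B(b)$ and $\sigma'(\pi(a,b))=\sigma_A(a)\sigma_B(b)$ gives $AB=\ex[c\in[S_A\cdot S_B]]{\sigma'(c)\cdot M_{0..n}^{F'(c)}}$, which is exactly the statement that $AB$ is realized by the size-$(S_A\cdot S_B)$ bundle $(F',\sigma')$ on $M_{0..n}$.

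There is no real obstacle here; the only points requiring (minor) care are: (i) the average over $[S_A]\times[S_B]$ with the uniform measure pushes forward under a bijection $\pi$ to the uniform measure on $[S_A\cdot S_B]$, so no normalization factors appear (unlike in the summation claim); and (ii) one should note the mild abuse in the statement where $\sigma'(\pi(a,b))$ is written $\sigma(a)\cdot\sigma(b)$ --- this is meant to be $\sigma_A(a)\cdot\sigma_B(b)$, and the concatenation $F_A(a)\Vert F_B(b)$ is a valid length-$n$ string precisely because $F_A$ produces labels for layers $0$ through $n/2$ and $F_B$ for layers $n/2$ through $n$. With these observations the verification is a one-line unwinding of definitions.
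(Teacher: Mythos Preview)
Your proposal is correct; the paper in fact states this claim without proof, treating it as an immediate unwinding of definitions, and your argument is exactly that unwinding. Your minor remarks on the bijection preserving the uniform measure and on the typo $\sigma(a)\cdot\sigma(b)$ for $\sigma_A(a)\cdot\sigma_B(b)$ are apt.
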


\section{Proof Overview}\label{sec:proof-overview}
In this section we give an outline of our construction and proof. In \Cref{sec:samp-arg}, we briefly recap how a sampler is used in \cite{BCG18} to achieve better seed length in the small-error regime. We discuss our construction ideas in \Cref{sec:our-const}. 
\subsection{The sampler argument}\label{sec:samp-arg}
 Nisan's generator and its variants recursively use a lemma of the following form.
\begin{lemma}\label{lemma:INW}
Consider a $(n,w)$-ROBP $[M]_{[n]}^{\bit}$. Let $\cA$ be the matrix form of a distribution on $M_{0..n/2}$, and $\cB$ be the matrix form of a distribution on $M_{n/2..n}$. Suppose $s(\cA)=s(\cB)=s$. Then there exists a distribution whose matrix form $\cC$ on $M_{0..n}$ of seed length $s+O(\log(w/\delta))$ such that
$$\nmax{\inp{\cC}-\inp{\cA}\inp{\cB}}\le \delta.$$
\end{lemma}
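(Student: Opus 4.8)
The plan is to obtain $\cC$ via the Impagliazzo--Nisan--Wigderson \cite{INW94} seed‑recycling step: rather than sampling the seeds of $\cA$ and $\cB$ independently (which costs $2s$ bits), we correlate them through a spectral expander and pay only $O(\log(w/\delta))$ extra bits. Assume without loss of generality that $\Si(\cA)$ and $\Si(\cB)$ are powers of two (which holds whenever we invoke the lemma), so the full seed of $\cA$ and the full seed of $\cB$ are each uniform over $\bits{s}$, and fix bijections sending a string $a\in\bits{s}$ to $\cA$'s outer/inner decomposition $(x_{\cA},y_{\cA})\in\bits{\sout(\cA)}\times[\Si(\cA)]$ and a string $b\in\bits{s}$ to $(x_{\cB},y_{\cB})\in\bits{\sout(\cB)}\times[\Si(\cB)]$. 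Let $H$ be an explicit $\lambda$‑spectral expander on $\bits{s}$ of degree $d=\poly(w/\delta)$ with $\lambda\le\delta/w$ --- for instance the $\Theta(\log(w/\delta))$‑th power of a constant‑degree explicit expander, so that a neighbour $H(a,r)$ is computable in space $O(s+\log(w/\delta))$. On seed $(a,r)\in\bits{s}\times[d]$, set $b=H(a,r)$ and output the string $G_{\cA}(x_{\cA},y_{\cA})\Vert G_{\cB}(x_{\cB},y_{\cB})$ with weight $\rho_{\cA}(x_{\cA},y_{\cA})\cdot\rho_{\cB}(x_{\cB},y_{\cB})$, where $(x_{\cA},y_{\cA})$ and $(x_{\cB},y_{\cB})$ are the decompositions of $a$ and $b$. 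By the concatenation rule, together with $M_{0..n/2}^{u}M_{n/2..n}^{v}=M_{0..n}^{u\Vert v}$, this is the matrix form of a distribution $\cC$ on $M_{0..n}$ with $\cC(a,r)=\cA(x_{\cA},y_{\cA})\cdot\cB(x_{\cB},y_{\cB})$ (reading $a$ as the outer seed and $r$ as the inner index), whose seed length is $s+\ceil{\log d}=s+O(\log(w/\delta))$, as required.

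For the error bound I would invoke a matrix version of the expander mixing lemma. Since a uniform $(a,r)$ makes $(a,H(a,r))$ a uniformly random edge of $H$, we have $\inp{\cC}=\ex[a,r]{\cA(x_{\cA},y_{\cA})\cdot\cB(x_{\cB},y_{\cB})}$, where $(x_{\cB},y_{\cB})$ is the decomposition of $H(a,r)$, whereas $\inp{\cA}\inp{\cB}=\ex[a]{\cA(x_{\cA},y_{\cA})}\cdot\ex[b]{\cB(x_{\cB},y_{\cB})}$ for independent uniform $a,b$. Fix entries $i,j\in[w]$. Expanding the matrix product coordinatewise, $\inp{\cC}(i,j)-(\inp{\cA}\inp{\cB})(i,j)$ equals $\sum_{k\in[w]}\left(\ex[a,r]{f_k(a)\,g_k(H(a,r))}-\ex[a]{f_k(a)}\,\ex[b]{g_k(b)}\right)$, where $f_k(a)=\cA(x_{\cA},y_{\cA})(i,k)$ and $g_k(b)=\cB(x_{\cB},y_{\cB})(k,j)$ take values in $[0,1]$ because $\cA$ and $\cB$ are matrix forms of \emph{distributions} (so every $\cA(x,y),\cB(x,y)$ has entries in $\{0,1\}$). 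The scalar expander mixing lemma bounds each of these $w$ summands by $\lambda$ in absolute value, hence $\abs{\inp{\cC}(i,j)-(\inp{\cA}\inp{\cB})(i,j)}\le w\lambda\le\delta$ for all $i,j$, i.e.\ $\nmax{\inp{\cC}-\inp{\cA}\inp{\cB}}\le\delta$.

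The step I would take most care with is extracting the clean factor $w\lambda$ rather than a larger bound: one must expand the $w\times w$ matrix product into $w$ scalar bilinear forms \emph{before} applying the mixing lemma, and use $\nmax{\cA(x,y)},\nmax{\cB(x,y)}\le 1$ --- exactly the ``matrix form of a distribution'' hypothesis --- so that each of these scalar functions has squared $\ell_2$‑norm at most $2^s$, which is what turns the generic mixing estimate into a bound of $\lambda$ per coordinate. The remaining ingredients, namely reducing to power‑of‑two inner sizes and constructing a space‑efficient explicit expander of degree $\poly(w/\delta)$, are routine and I would treat them as black boxes.
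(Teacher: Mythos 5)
Your proposal is correct and follows exactly the route the paper indicates: the paper does not prove \Cref{lemma:INW} but attributes it to the INW expander construction (``for every edge $(x,y)$ in the expander, add $\cA(x)\cB(y)$ into $\cC$''), and your expander-edge sampling with the coordinatewise expander mixing lemma, giving error $w\lambda\le\delta$ per entry and seed overhead $\log d=O(\log(w/\delta))$, is the standard way to fill in that citation.
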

This lemma is usually achieved with a pseudorandom object. For example, the INW generator~\cite{INW94} uses a bipartite expander with degree $\poly(w/\delta)$ to construct the distribution $\cC$ in the above lemma. That is, for every edge $(x,y)$ in the expander $G$, they  add $\cA(x)\cB(y)$ into $\cC$. A similar lemma can also be obtained with universal hash functions~\cite{Nis92} or seeded extractors~\cite{RR99}. By recursively constructing good approximations of $M_{0..n/2}$ and $M_{n/2..n}$ and applying \Cref{lemma:INW}, one can obtain a PRG which has seed length $O(\log n\cdot\log(nw/\eps))$ ($\delta$ is taken to be $\eps/n$ because of a union bound). Observe that in such constructions, one needs to pay $O(\log(1/\eps))$ (in seed length) per level of recursion. 

The crucial idea in \cite{BCG18} is to amortize this cost over all $\log n$ levels. What makes this possible is the following argument, which we will refer to as the \emph{sampler argument}. First we define the notion of an averaging sampler. 
\begin{definition}
A function $g:\bits{n}\times\bits{d}\to\bits{m}$ is an $(\eps,\delta)$-(averaging) sampler if for every function $f:\bits{m}\to{[0,1]}$, $$\pr[x\in\bits{n}]{\abs{\ex[s\in\bits{d}]{f(g(x,s))}-\ex[y\in\bits{m}]{f(y)}}\le\eps}\ge 1-\delta.$$
\end{definition}
\noindent
The crucial observation in \cite{BCG18} is that if one uses a sampler to prove \Cref{lemma:INW}, the error actually scales with the norm of one of the matrix forms.
\begin{lemma}[\cite{BCG18}]\label{lemma:samp-arg}
Consider a $(n,w)$-ROBP with matrix representation $[M]_{[n]}^{\bit}$. Let $\cA$ and $\cB$ be (pseudo)distributions in matrix form on $M_{0..n/2}$ and $M_{n/2..n}$ respectively. Let $n=\sout(\cA)$, $m=\sout(\cB)$. Suppose $\forall x\in\bits{n},\norm{\cA(x)}\le 1$ and $\forall y\in\bits{m},\norm{\cB(y)}\le 1$. Let $g:\bits{n}\times \bits{d}\to\bits{m}$ be a $(\eps,\delta)$ sampler. Then there exists a (pseudo)distribution $\cC$ such that 
$$\norm{\inp{C}-\inp{A}\inp{B}}\le O\left(w^2\left(\delta+\eps \ex[x]{\norm{\cA(x)}}\right)\right).$$
Besides, $\cC$ has outer seed length $n=\sout(\cA)$, and for every $x\in\bits{n}$, $$\cC(x)=\ex[s]{\cA(x)\cB\left(g(x,s)\right)}.$$
Note that $\si(\cC)=\si(\cA)+\si(\cB)+d$.
\end{lemma}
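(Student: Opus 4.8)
The plan is to take $\cC$ to be the natural ``sample-then-concatenate'' combination of $\cA$ and $\cB$, and to bound its error by reducing the matrix-valued sampler statement to scalar $[0,1]$-valued test functions. Concretely, I would let the inner index of $\cC$ range over triples $(y_A,y_B,t)\in[\Si(\cA)]\times[\Si(\cB)]\times\bits{d}$ and set
$$\cC\big(x,(y_A,y_B,t)\big)\;=\;\cA(x,y_A)\cdot\cB\big(g(x,t),y_B\big).$$
Since (by the correspondence between matrix multiplication and concatenation recorded in \Cref{sec:matrices}) each product $\cA(x,y_A)\cB(g(x,t),y_B)$ has the form $\rho\cdot M_{0..n}^{r}$, the map $\cC$ is indeed the matrix form of a pseudodistribution on $M_{0..n}$; its outer seed length is $\sout(\cA)$, and since its inner size is $\Si(\cA)\cdot\Si(\cB)\cdot 2^d$ we get $\si(\cC)=\si(\cA)+\si(\cB)+d$ (treating inner sizes as powers of two, which is without loss of generality). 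Averaging over the inner index gives exactly $\cC(x)=\ex[t]{\cA(x)\cB(g(x,t))}$ and hence $\inp{\cC}=\ex[x,t]{\cA(x)\cB(g(x,t))}$.

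For the error, I would first rewrite, using $\inp{\cB}=\ex[y]{\cB(y)}$ and linearity,
$$\inp{\cC}-\inp{\cA}\inp{\cB}\;=\;\ex[x]{\cA(x)\Big(\ex[t]{\cB(g(x,t))}-\inp{\cB}\Big)},$$
so by the triangle inequality and submultiplicativity of $\norm{\cdot}$ it suffices to bound $\ex[x]{\norm{\cA(x)}\cdot\norm{\ex[t]{\cB(g(x,t))}-\inp{\cB}}}$. To bring in the sampler I fix the ROBP (hence $\cB$, viewed as a function of its outer seed $y\in\bits{\sout(\cB)}$) and split each matrix $\cB(y)$ into its entrywise positive and negative parts $\cB(y)=\cB^+(y)-\cB^-(y)$, noting that every entry of $\cB^\pm(y)$ lies in $[0,\norm{\cB(y)}]\subseteq[0,1]$. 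Each of the $2w^2$ functions $y\mapsto\cB^\pm(y)_{ij}$ maps $\bits{\sout(\cB)}\to[0,1]$ and is independent of $x$, so the $(\eps,\delta)$-sampler guarantee together with a union bound produces a set $\mathrm{Good}\subseteq\bits{\sout(\cA)}$ with $\pr[x]{x\in\mathrm{Good}}\ge 1-2w^2\delta$ such that for every $x\in\mathrm{Good}$ and every entry $(i,j)$ we have $\abs{\ex[t]{\cB(g(x,t))_{ij}}-\inp{\cB}_{ij}}\le 2\eps$ (the two sampler errors for $\cB^+$ and $\cB^-$ add); summing over the $w$ entries of a row then gives $\norm{\ex[t]{\cB(g(x,t))}-\inp{\cB}}\le 2\eps w$ for every $x\in\mathrm{Good}$.

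Finally I would split the expectation over $x$ at $\mathrm{Good}$. On $\mathrm{Good}$ the integrand is at most $2\eps w\,\norm{\cA(x)}$, contributing at most $2\eps w\,\ex[x]{\norm{\cA(x)}}$; off $\mathrm{Good}$ I use the crude bounds $\norm{\ex[t]{\cA(x)\cB(g(x,t))}}\le 1$ and $\norm{\cA(x)\inp{\cB}}\le 1$ (both from $\norm{\cA(x)},\norm{\cB(y)}\le 1$, submultiplicativity, and convexity of $\norm{\cdot}$), so that part contributes at most $2\cdot 2w^2\delta$. Adding up,
$$\norm{\inp{\cC}-\inp{\cA}\inp{\cB}}\;\le\;2\eps w\,\ex[x]{\norm{\cA(x)}}+4w^2\delta\;=\;O\!\Big(w^2\big(\delta+\eps\,\ex[x]{\norm{\cA(x)}}\big)\Big),$$
which is the claim. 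I expect the only real subtlety to be getting the error to scale with the \emph{average} norm $\ex[x]{\norm{\cA(x)}}$ rather than $\max_x\norm{\cA(x)}$ — this is precisely why one must factor $\cA(x)$ out of the difference \emph{before} averaging over $x$ and keep $\norm{\cA(x)}$ inside the expectation; it is exactly this feature that lets the recursion amortize the per-level error across all $\log n$ levels, as in \cite{BCG18}. The reduction from the matrix-valued object $\cB$ to scalar $[0,1]$-valued test functions via the entrywise sign decomposition is the other point needing care, but it is routine.
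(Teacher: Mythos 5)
Your proposal is correct and follows essentially the same route the paper takes when it formally proves its analogues of this lemma (\Cref{lemma:mat-samp-general}, \Cref{lemma:mat-samp}, and \Cref{lemma:left}): reduce the matrix-valued sampler guarantee to scalar $[0,1]$-valued test functions entrywise, union-bound over the $O(w^2)$ entries to get a good set of outer seeds $x$, bound the bad seeds' contribution by $O(w^2\delta)$ times the trivial norm bound, and keep $\norm{\cA(x)}$ inside the expectation on the good seeds so the $\eps$-error scales with $\ex[x]{\norm{\cA(x)}}$. The only cosmetic difference is that you handle the range of the entries via a positive/negative part decomposition, whereas the paper shifts and rescales a $[\ell,r]$-valued function to $[0,1]$ (\Cref{lemma:sampler-general}); both give the same bounds up to constants.
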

The intuition behind this approximation is as follows. If we want to compute the matrix product precisely, we take every $\cA(x)$ and multiply it with $\ex[y]{\cB(y)}$. However, with the help of sampler, we can use $x$ as our seed to select some samples from $\cB$, and take their average as an estimate of $\ex[y]{\cB(y)}$. The error of this approximation comes in two different way. For those $x$ which are not good choices of a seed for the sampler, the samples chosen with such an $x$ can deviate from the average arbitrarily. However, only $\delta$ fraction of $x$ can be bad, so they incur at most $\delta$ error. The second kind of error is the estimation error between average of samples $\ex[s]{\cB(g(x,s))}$ and the real average $\ex[y]{\cB(y)}$, which can be at most $\eps$. Since this gets multiplied  with $\cA(x)$, this kind of error actually scales with $\norm{\cA(x)}$. Although the first kind of error (which is $\delta$) does not benefit from $\norm{\cA}$ being small, in \cite{BCG18} they observe that, the parameter $\delta$ has almost no influence on the seed length in some cases. To discuss this more precisely, we first recall explicit constructions of samplers.
\begin{lemma}[\cite{RVW01,Gol11}]
For every $\delta,\eps>0$ and integer $m$, there exists a space efficient $(\eps,\delta)$-sampler $f:\bits{n}\times\bits{d}\to\bits{m}$ s.t. $d=O(\log\log(1/\delta)+\log(1/\eps))$ and $n=m+O(\log(1/\delta))+O(\log(1/\eps))$.
\end{lemma}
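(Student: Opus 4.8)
The plan is to route through the standard dictionary between averaging samplers and seeded extractors and then invoke an explicit extractor with near-optimal parameters in the high min-entropy regime.

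First I would recall the easy direction of Zuckerman's equivalence: if $\mathrm{Ext}:\bits{n}\times\bits{d}\to\bits{m}$ is a $(k,\eps)$-extractor, then $g:=\mathrm{Ext}$ is already an $(\eps,\delta)$-averaging sampler with $\delta\le 2^{k+1-n}$. Indeed, for a fixed test function $f:\bits{m}\to[0,1]$ the set of ``bad'' seeds $x$ splits into those that over-estimate and those that under-estimate $\ex[y]{f(y)}$ by more than $\eps$; each part must have fewer than $2^{k}$ elements, as otherwise the uniform distribution on it would witness a failure of the extractor property against $f$, so the bad set has density below $2^{k+1-n}$. Hence it suffices to construct, for all $n,m$ and all $\eps>0$, an explicit $(n-\Delta,\eps)$-extractor $\mathrm{Ext}:\bits{n}\times\bits{d}\to\bits{m}$ with seed length $d=O(\log\Delta+\log(1/\eps))$ and output length $m=n-\Delta-O(\log(1/\eps))$, where $\Delta$ is a free entropy-deficiency parameter. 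Setting $\Delta=\log(1/\delta)+O(1)$ then yields $d=O(\log\log(1/\delta)+\log(1/\eps))$ and $n-m=\Delta+O(\log(1/\eps))=O(\log(1/\delta))+O(\log(1/\eps))$, which is exactly the statement.

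The core task is therefore to build this extractor for sources of entropy deficiency $\Delta$ on $\bits{n}$ with a seed that depends only on $\Delta$ and $\eps$, never on $n$ itself. This is precisely what the zig-zag machinery of \cite{RVW01} provides, and the resulting sampler corollary is spelled out in \cite{Gol11}: at a high level one repeatedly reduces extraction from a length-$n$, deficiency-$\Delta$ source to extraction from (much) shorter strings, using zig-zag-based condensers and block-source decompositions to peel off short pieces, until the remaining string has length $\poly(\Delta,1/\eps)$ and a single off-the-shelf extractor with seed $O(\log\Delta+\log(1/\eps))$ finishes. I would then check space-efficiency, which is routine: the rotation maps of the zig-zag graphs, the graph powers, the base-case extractor, and the bijections relating the ``extractor view'' to the ``sampler view'' are each computable in space linear in their own input plus output length, and along the recursion these lengths telescope, so the final $g$ is computable in space $O(n+d+m)=O(n)$.

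The only genuinely delicate point, and where essentially all the work of the lemma lives, is keeping the seed length additive in $\log\Delta$ and $\log(1/\eps)$ with \emph{no} hidden $\log n$ (equivalently $\log m$) term, even though $m$ may be astronomically larger than $1/\eps$ and $1/\delta$. A simple expander-walk averaging sampler already achieves the right sample complexity $d=O(\log(1/\eps)+\log\log(1/\delta))$, but it spends $m+\Theta(\log(1/\delta)/\eps^{2})$ random bits rather than $m+O(\log(1/\delta)+\log(1/\eps))$, so that route fails; a black-box appeal to a generic extractor (leftover hashing, or a Guruswami--Umans--Vadhan style construction) instead contributes an unwanted $\log n$ term; and a naive iterated condensing contributes an extra $\log\log$-type factor because the per-round errors accumulate. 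Side-stepping all of these is exactly why one needs the \cite{RVW01} construction here, and in a proof I would treat that construction together with its analysis as a black box, since reproducing it is well outside the scope of what this lemma is used for.
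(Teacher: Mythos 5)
Your overall route is the same as the paper's at the top level: convert the problem to constructing a seeded extractor for sources on $\bits{n}$ with entropy deficiency $\Delta=\log(1/\delta)+O(1)$, with seed length depending only on $\Delta$ and $\eps$, and then invoke Zuckerman's extractor-to-sampler equivalence (your density bound $2^{k+1-n}$ for the bad set is exactly the paper's Lemma citing \cite{Zuc97}). Where you diverge is in how the high-min-entropy extractor is obtained. You defer entirely to the zig-zag machinery of \cite{RVW01} as a black box; the paper instead gives a short, one-level construction and explicitly says it avoids zig-zag composition since optimal entropy loss is not needed. Concretely, the paper applies the Goldreich--Wigderson extractor $E_1:\bits{m}\times\bits{d_1}\to\bits{m}$ for deficiency-$\Delta$ sources, whose only defect is that its seed $d_1=O(\Delta+\log(1/\eps))$ is too long, and then generates that seed by running a GUV extractor $E_2:\bits{3d_1}\times\bits{d}\to\bits{d_1}$ on a fresh short block of the source, justified by a single block-source decomposition; no recursion or condensing is needed. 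One point you should be careful about: the paper's remark notes that the sampler of \cite{RVW01}, taken off the shelf, carries a restriction on $\eps$ (inherited from \cite{Zuc97}'s extractor) that would introduce a $2^{O(\log^*(nw/\eps))}$ factor downstream --- this is precisely why the paper reproves the lemma with the GUV-based inner extractor (following \cite{Gol11}) and cites \cite{KNW08} for its space-efficient implementation. So a purely black-box appeal to \cite{RVW01} does not quite give the stated parameters for all $\eps$; your citation of \cite{Gol11} covers this, but the dependence on that modification deserves to be made explicit rather than folded into the black box.
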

Note that in \Cref{lemma:samp-arg}, $s(\cC)=s(\cA)+d+\si(\cB)$. Therefore if $n\ge m+O(\log(1/\delta))+O(\log(1/\eps))$, $\delta$ has almost no impact on the seed length. 

To use the above ideas, it boils down to working with matrices with small norm, and  making sure that every multiplication is ``unbalanced" enough so that $\delta$ has no impact. \cite{BCG18} applies a delicate telescoping sum trick (which they called ``delta sampler") to divide an $\eps$-approximation into a base approximation with $1/\poly(n)$ error and several ``correcting terms" which have small norm. By carefully choosing the samplers and discarding all the non-necessary terms,  they roughly ensure the following properties: first, a matrix with large seed length must have small norm; second, every matrix multiplication is unbalanced enough so that $\delta$ has no impact on the seed length.

With these properties and the sampler argument, they show that the total seed length is bounded by $\tilde{O}(\log(1/\eps)+\log n \log(nw))$.

\subsection{Our construction}\label{sec:our-const}
In executing the ideas sketched above,  the construction and analysis in \cite{BCG18} turns out to be quite complicated and involved. One thing which complicates the construction and analysis is its \emph{bottom-up} nature. That is, when multiplying two terms, they create more terms with the telescoping sum trick. Moreover, in the telescoping sum trick one needs to choose the parameters of each sampler very carefully to make sure the seed length of each term does not exceed its ``smallness". 

Our first step toward a simpler construction is the following \emph{top-down} formula, which we will apply recursively to compute an approximation of $M_{0..n}$:
\begin{lemma}\label{lemma:main}
Let $\norm{\cdot}$ be a sub-multiplicative matrix norm, and $A,B$ be two matrices s.t. $\norm{A},\norm{B}\le 1$. Let $k\in\bbN$ and $\gamma<1$. For every $0\le i\le k$, let $A_i$ be a $\gamma^{i+1}$-approximation of $A$, and let $B_i$ be a $\gamma^{i+1}$-approximation of $B$. Then $$\sum_{i=0}^k A_{i}B_{k-i}-\sum_{i=0}^{k-1} A_{i}B_{k-1-i}$$ is a $((k+2)\gamma^{k+1}+(k+1)\gamma^{k+2})$-approximation of $AB$.
\end{lemma}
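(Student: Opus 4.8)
The plan is to expand each approximant around its target and watch the first-order errors telescope, leaving only terms of size $\gamma^{k+1}$ or smaller. Write $A_i = A + E_i$ and $B_i = B + F_i$, so that by hypothesis $\norm{E_i}\le\gamma^{i+1}$ and $\norm{F_i}\le\gamma^{i+1}$ for all $0\le i\le k$. Put $S_k=\sum_{i=0}^k A_i B_{k-i}$ and $S_{k-1}=\sum_{i=0}^{k-1}A_i B_{k-1-i}$; the quantity to bound is $\norm{S_k-S_{k-1}-AB}$.

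First I would multiply out $A_i B_{k-i} = AB + E_i B + A F_{k-i} + E_i F_{k-i}$ and sum, reindexing the $F$-sum by $j=k-i$, to get
$$S_k = (k+1)AB + \Bigl(\sum_{i=0}^k E_i\Bigr)B + A\Bigl(\sum_{i=0}^k F_i\Bigr) + \sum_{i=0}^k E_i F_{k-i},$$
and analogously $S_{k-1} = k\,AB + \bigl(\sum_{i=0}^{k-1}E_i\bigr)B + A\bigl(\sum_{i=0}^{k-1}F_i\bigr) + \sum_{i=0}^{k-1}E_i F_{k-1-i}$. The crucial cancellation is that $\sum_{i=0}^k E_i - \sum_{i=0}^{k-1}E_i = E_k$ and likewise for the $F$'s, so that
$$S_k - S_{k-1} = AB + E_k B + A F_k + \sum_{i=0}^k E_i F_{k-i} - \sum_{i=0}^{k-1}E_i F_{k-1-i}.$$

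It then remains to bound the four error contributions by sub-multiplicativity, using $\norm{A},\norm{B}\le 1$: $\norm{E_k B}\le\gamma^{k+1}$ and $\norm{A F_k}\le\gamma^{k+1}$; each summand of $\sum_{i=0}^k E_i F_{k-i}$ has norm at most $\gamma^{i+1}\gamma^{k-i+1}=\gamma^{k+2}$, contributing at most $(k+1)\gamma^{k+2}$ in all; and each summand of $\sum_{i=0}^{k-1}E_i F_{k-1-i}$ has norm at most $\gamma^{i+1}\gamma^{k-i}=\gamma^{k+1}$, contributing at most $k\gamma^{k+1}$. Summing gives $(k+2)\gamma^{k+1}+(k+1)\gamma^{k+2}$, which is exactly the claimed bound; the base case $k=0$ is the same computation with the last sum empty. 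There is no real obstacle here — the only care needed is the reindexing of the $F$-sums and keeping the two quadratic sums apart, since they have different numbers of terms ($k+1$ versus $k$) and different per-term bounds ($\gamma^{k+2}$ versus $\gamma^{k+1}$), and it is precisely this bookkeeping that produces the stated estimate rather than a weaker one.
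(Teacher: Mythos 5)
Your proposal is correct and follows essentially the same route as the paper: the identity you derive by expanding and telescoping, namely $S_k-S_{k-1}-AB=\sum_{i=0}^k E_iF_{k-i}-\sum_{i=0}^{k-1}E_iF_{k-1-i}+E_kB+AF_k$, is exactly the decomposition the paper states directly (with $E_i=A_i-A$, $F_i=B_i-B$), and the subsequent term-by-term bounds are identical. No issues.
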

\begin{proof}
We have,
\begin{align*}
&\norm{ (\sum_{i=0}^k A_{i}B_{k-i}-\sum_{i=0}^{k-1} A_{i}B_{k-1-i})-AB}\\
 &= \norm{ \sum_{i=0}^k (A-A_{i})(B-B_{k-i})-\sum_{i=0}^{k-1} (A-A_{i})(B-B_{k-1-i})+(A_k-A)B+A(B_k-B)}\\
 &\le \sum_{i=0}^k \norm{A-A_i}\cdot\norm{B-B_{k-i}}+\sum_{i=0}^{k-1}\norm{A-A_{i}}\cdot\norm{B-B_{k-1-i}}\\
 &+\norm{A_k-A}\cdot\norm{B}+\norm{A}\cdot\norm{B_k-B}\\
 &\le(k+2)\gamma^{k+1}+(k+1)\gamma^{k+2}
\end{align*}
\end{proof}
This formula shares an important property with the BCG construction: we never need a $\gamma^k$-approximation (which implies large seed length) on both sides simultaneously. The benefit of our top-down formula is that we are treating the PRPD as one object instead of the sum of many different terms. One obvious effect of such treatment is we don't need to analyze the ``smallness" of each term and the accuracy of the whole PRPD separately. 

In this top-down formula, we do not explicitly maintain small-norm matrices as in \cite{BCG18}. However, observe that in the proof of \Cref{lemma:main}, we are using the fact that $A_k-A$ is a small norm matrix. Our goal is to apply the sampler argument (\Cref{lemma:samp-arg}) on these ``implicit" small-norm matrices. The following is our main technical lemma.

\begin{lemma}[main lemma, informal]\label{lemma:main-informal}
Let $A,B\in\bbR^{w\times w}$, $k\in\bbN$ and $\gamma<1$. Suppose for every $i\le k$ there exists pseudodistribution $\cA_i,\cB_i$ such that $\ex[x]{\norm{\cA_i(x)-A}}\le \gamma^{i+1}$, $\ex[x]{\norm{\cB_i(x)-B}}\le \gamma^{i+1}$, and $\norm{\cA_i(x)},\norm{\cB_i(x)}\le 1$ for every $x$. Then there exists a pseudo-distribution $\cC_k$ such that $$\ex[x]{\norm{\cC_k(x)-AB}\le O(\gamma)^{k+1}},$$
where $\cC_k(x)=\sum_{i+j=k} A_{x,i}B_{x,j}-\sum_{i+j=k-1} A_{x,i}B_{x,j}$. $A_{x,i}$ and $B_{x,i}$ are defined as follows.
\begin{itemize}
\item 
If $i>\ceil{k/2}$, $A_{x,i}=\cA_i(x)$ and $B_{x,i}=\cB_i(x)$.
\item
If $i\le \ceil{k/2}$, $A_{x,i}=\ex[s]{\overline{\cA_i}(g_i(x,s))}$ and $B_{x,i}=\ex[s]{\overline{\cB_i}(g_i(x,s))}$, where $g_i$ is a $(\gamma^{i+1},\gamma^{k+1})$-sampler, and $\overline{\cA_i},\overline{\cB_i}$ denote the ``flattened" form of $\cA_i$ and $\cB_i$.
\end{itemize}
\end{lemma}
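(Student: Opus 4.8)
The plan is to combine the deterministic telescoping identity of \Cref{lemma:main} with the sampler argument of \Cref{lemma:samp-arg}, applied term-by-term to the sum defining $\cC_k$. First I would split the analysis into two sources of error, exactly mirroring the decomposition in the proof of \Cref{lemma:main}. Write $\cC_k(x) = \sum_{i+j=k} A_{x,i}B_{x,j} - \sum_{i+j=k-1} A_{x,i}B_{x,j}$. If \emph{every} $A_{x,i}$ were literally equal to some pseudodistribution matrix $\cA_i(x)$ with $\ex_x\norm{\cA_i(x)-A}\le\gamma^{i+1}$, then averaging the bound of \Cref{lemma:main} over $x$ (using convexity of the norm and independence of the two halves' seeds) already gives $\ex_x\norm{\cC_k(x)-AB}\le O(k)\gamma^{k+1} = O(\gamma)^{k+1}$, absorbing the polynomial $k$ factor into the $O(\gamma)^{k+1}$ slack by shrinking $\gamma$ by a constant (equivalently, noting $k\gamma^{k+1}\le O(\gamma)^{k+1}$ since $\gamma<1/2$ say). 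So the whole content is to control the discrepancy introduced by replacing, for the ``small'' indices $i\le\ceil{k/2}$, the exact factor $\ex_y[\overline{\cA_i}(y)]$ by the sampled average $\ex_s[\overline{\cA_i}(g_i(x,s))]$.

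For that, I would isolate one sampled factor at a time and telescope. Consider a single product term $A_{x,i} B_{x,j}$ with $i\le\ceil{k/2}$ (so $j=k-i\ge\ceil{k/2}\ge i$, hence $B_{x,j}=\cB_j(x)$ is an exact, norm-$\le1$ matrix). I claim $\ex_x\norm{A_{x,i} - \Avg_y \overline{\cA_i}(y)}$ is small: this is precisely \Cref{lemma:samp-arg} (or rather its one-sided version) with sampler $g_i$ of parameters $(\gamma^{i+1},\gamma^{k+1})$. The key point, which I would emphasize, is that the sampler error $\eps=\gamma^{i+1}$ scales against $\ex_x\norm{\cA_i(x)}$ — but here I do not even need that refinement, since the ``flattened'' form $\overline{\cA_i}$ has the property that its entries lie in a bounded range after rescaling, and the $\gamma^{i+1}$ accuracy is exactly what is needed so that the contribution of this term to the final error is $O(\gamma^{i+1}\cdot\gamma^{k-i+1}) = O(\gamma^{k+2})$, while the $\delta=\gamma^{k+1}$ bad-seed error contributes $O(w^2\gamma^{k+1})$ directly. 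Summing over the $O(k)$ terms in the two sums, and over both the $A$-side and $B$-side substitutions (done one at a time, so each substitution costs one sampler error against an exact norm-$\le1$ matrix on the other side), gives total error $O(k)\cdot O(w^2\gamma^{k+1}) = O(\gamma)^{k+1}$ after absorbing $k$ and $w^2$ — here one must be a little careful that the $O(\gamma)^{k+1}$ in the statement is meant to hide the $w$-dependence as well, or alternatively that the samplers are chosen with $\delta = \gamma^{k+1}/w^{2}$, which changes $d$ only by an additive $O(\log\log(w))$ and is harmless. I would state the bookkeeping as a hybrid argument: define intermediate matrices where some factors are sampled and some are exact, and bound consecutive hybrids by one application of \Cref{lemma:samp-arg}.

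Concretely, the steps in order: (1) record that for $i>\ceil{k/2}$ the factors are exact with $\ex_x\norm{\cA_i(x)-A}\le\gamma^{i+1}$, and that flattening preserves both the approximation guarantee $\ex_x\norm{\overline{\cA_i}(x)-A}\le\gamma^{i+1}$ and the norm bound (this is where the formal \Cref{def:robust-PRPD} and the flattening operation from the main construction section get invoked — I would cite them rather than reprove); (2) apply \Cref{lemma:samp-arg} once per sampled factor to get $\ex_x\norm{A_{x,i}-A} \le \gamma^{i+1} + O(w^2\gamma^{k+1})$ and likewise for $B$; (3) plug these into the proof of \Cref{lemma:main}, carried out pointwise in $x$ and then averaged, using that $\norm{A_{x,i}},\norm{B_{x,j}}\le 1$ (again a consequence of the sampler and flattening) so that the cross terms $\norm{A-A_{x,i}}\norm{B-B_{x,j}}$ telescope as before; (4) sum the resulting $O(k)$ error terms and absorb the $k$, $w^2$ factors into the base of the exponent. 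The main obstacle I anticipate is step (4)'s bookkeeping — making sure that the $\delta$-type errors ($\gamma^{k+1}$, not multiplied by any small norm) do not accumulate a fatal $k$ or $w^2$ blowup; this is exactly why the sampler's $\delta$ must be taken as an inverse polynomial in $nw$ times $\gamma^{k+1}$, and why (as the proof overview stresses) this costs essentially nothing in seed length thanks to the ``unbalanced'' sampler construction. A secondary subtlety is ensuring the flattening operation genuinely yields norm-$\le 1$ matrices with the claimed accuracy; I would handle this by defining $\overline{\cA_i}$ so that each $\overline{\cA_i}(y)$ is itself (a scaling of) a single transition matrix $M^{r}_{0..n/2}$, whence $\norm{\overline{\cA_i}(y)}\le 1$ automatically, and the average is unchanged, so the accuracy is inherited from $\cA_i$.
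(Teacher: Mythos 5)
There is a genuine gap in steps (2)--(3). You extract from the sampler only the \emph{average-case} bound $\ex[x]{\norm{A_{x,i}-A}}\le\gamma^{i+1}+O(w^2\gamma^{k+1})$ and then propose to plug these into the proof of \Cref{lemma:main} ``carried out pointwise in $x$ and then averaged.'' But the pointwise bound on a cross term is $\norm{A-A_{x,i}}\cdot\norm{B-B_{x,j}}$, and both factors are functions of the \emph{same} seed $x$: your appeal to ``independence of the two halves' seeds'' is exactly what fails (only the inner seeds are disjoint; the outer seed $x$ is shared, and sharing it is the whole point of the construction). With only per-factor average bounds, $\ex[x]{\norm{A-A_{x,i}}\cdot\norm{B-B_{x,j}}}$ need not be close to the product of the averages: if both deviations equal their maximum on the same bad set of $x$ of measure $\gamma^{j+1}$ (with $j\ge i$) and vanish elsewhere, each factor still satisfies its hypothesis, yet the expected product is $\approx\gamma^{j+1}\approx\gamma^{k/2}$, far larger than the required $\gamma^{k+2}$. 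This is precisely the failure mode described in the overview (``both bad on a large enough fraction of $x$''). The correct use of the sampler — which your middle paragraph gestures at but then explicitly discards (``I do not even need that refinement'') — is to keep its \emph{with-high-probability worst-case} guarantee (\Cref{lemma:mat-samp-general}): outside a $w^2\delta$-fraction of seeds, $\norm{\ex[s]{\overline{\cA_i}(g_i(x,s))}-A}$ is small for every such $x$. One then conditions on this event for the sampled factor and uses the average-case robust bound only for the other factor; this is exactly what \Cref{lemma:left}, \Cref{lemma:right} and \Cref{lemma:sym} do, and it is why samplers are applied to precisely the indices $i\le\ceil{k/2}$, guaranteeing every pair with $i+j\le k$ contains at least one sampled index.

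A secondary error: you assert that $i\le\ceil{k/2}$ forces $j=k-i>\ceil{k/2}$, so the partner factor is always exact. For $k$ even with $i=j=k/2$, and for the middle terms of the $i+j=k-1$ sum, \emph{both} factors are sampled; this is why the paper needs the symmetric product rule (\Cref{lemma:sym}, a union bound over two bad events) in addition to the one-sided ones. Your remaining bookkeeping — taking $\delta$ of order $\gamma^{k+1}$ divided by $w^2$ and the weight, and absorbing the $O(k)$ and $w^2$ factors into $O(\gamma)^{k+1}$ — does match the paper's (\Cref{lemma:main-approx}), but the argument as written does not establish the average-case bound on $\cC_k$.
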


We leave the explanation of ``flattened" for later and explain the intuition behind the lemma first. Our goal is to construct $\cC_k$ such that $\cC_k(x)$ is a good approximation of $AB$ \emph{on average} over $x$. We  know that $\cA_i$ and $\cB_i$ are $\gamma^{i+1}$-approximation of $A$ and $B$ on average. Our hope is to use $x$ to draw samples $A_i$ and $B_i$ from $\cA_i$ and $\cB_i$, and apply the formula in \Cref{lemma:main} to get a good approximation of $AB$. In particular, a natural choice would be setting $A_{x,i}=\cA_i(x)$ and $B_{x,i}=\cB_i(x)$ for every $i\le k$. However, if there exists a term $A_{x,i}B_{x,j}$ such that $A_{x,i}$ and $B_{x,j}$ are both bad approximation for a large enough fraction of $x$, we cannot guarantee to get a $O(\gamma^{k+1})$-approximation on average.

To avoid the above case, for every $i\le\ceil{k/2}$ we use a sampler to approximate $\inp{\cA_i}$ and $\inp{\cB_i}$. This ensure that the chosen samples $A_{x,i}$ and $B_{x,i}$ are good \emph{with high probability}. This guarantees that in each term $A_{x,i}B_{x,j}$, at least one of $A_{x,i}$ or $B_{x,j}$ will be a good choice with high probability over $x$. If $A_{x,i}$ is a good choice with high probability, we can apply the average-case guarantee on $B_{x,i}$ to get an average-case guarantee for $\cC_k$, and vice versa. (Indeed, this is the sampler argument.) Therefore we can ensure that $\cC_k(x)$ is good on average. Note that we only apply a sampler on $\cA_i$ (or $\cB_i$) when $i\le \ceil{k/2}$, which means $\cA_i$ (or $\cB_i$) has small seed length. Therefore we don't need to add too much redundant seed to make the sampler argument work.

In executing the above sketched idea, we run into the following problem: in each multiplication, the inner seed on both sides  aggregates to the upper level. If we start with pseudodistributions with non-zero inner seed in the bottom level, the inner seed would become $\Omega(n)$ in the topmost level. Therefore we need a way to limit the aggregation of inner seed.

In \cite{BCG18}, they run into a similar problem. To deal with this,  they apply a different multiplication rule, ``outer product", in some special cases to deal with this. However, the outer product does not seem applicable in our construction. Nevertheless, we observe that whenever we use a sampler to select matrix $A_{x,i}$, we only care about whether $\inp{\cA_i}$ is close to $A$, and we don't need most of $\cA_i(x)$ to be close to $A$ anymore. Therefore we will ``flatten" $\cA_i$ whenever we apply a sampler. That is, recall that each $\cA_i(x)$ is realized by the average of some matrices, $\ex[y]{\cA_i(x,y)}$. We define the flattened form of $\cA_i$, denoted by $\overline{\cA}_i$, such that $\overline{\cA}_i(x\Vert y)=\cA_i(x,y)$. Observe that $\inp{\overline{\cA_i}}=\inp{\cA_i}$ and $\si(\overline{\cA_i})=0$. This guarantees that the inner seed length of $\cA_i$ will not aggregate in $\cC_k$. Moreover, while the flattening will increase the outer seed length of $\overline{\cA_i}$, this is almost for free since we only flatten $\cA_i$ when $i\le\ceil{k/2}$, i.e. when $\cA_i$ has relatively small seed length. As a result, this operation also helps us save a $O(\log\log(1/\eps))$ factor in the seed length.

We  conclude by briefly discussing the seed length analysis. First note that we set $\gamma=1/\poly(n)$ to make sure that the error is affordable after a union bound. Now consider the inner seed length. Consider a term $A_iB_j$ such that $i\ge j$. In this term, part of the inner seed of $\cC$ is passed to $\cA_i$, and the other is used for the sampler on $B_j$. Since the seed length of the sampler only needs to be as large as the ``precision gap" between $\cA_i$ and $\cC_k$, the inner seed length of $\cC_k$ can be maintained at roughly $O(k\log(1/\gamma))=O(\log(1/\eps))$. However, after each multiplication, there's actually a $O(\log (nw/\gamma))=O(\log (nw))$ additive overhead. Note that this is necessary since the $k=0$ case degenerates to the INW generator. Therefore after $\log n$ levels of recursion, the inner seed length will be $O(\log(1/\eps)+\log n\cdot \log (nw))$. 

Besides, we also need the outer seed length of $\cC_k$ to be long enough so that we can apply a sampler on $\cA_{\ceil{k/2}}$ and $\cB_{\ceil{k/2}}$. The seed length caused by approximation accuracy $\eps$ can be bounded similarly as the inner seed length. However, the $O(\log n\cdot \log (nw))$ inner seed length will be added to the outer seed length several times, because of the flattening operation. Nevertheless, since we only do flattening for $\cA_{i}$ and $\cB_{i}$ where $i\le \ceil{k/2}$, this ensures that the flattening operation happens at most $\log k$ times. So the total outer seed length will be bounded by $O(\log(1/\eps)+\log k\cdot \log n\cdot \log (nw))=O(\log(1/\eps)+\log \log(1/\eps)\cdot \log n\cdot \log (nw))$, which is bounded by $O(\log(1/\eps)+\log \log(nw)\cdot \log n\cdot \log (nw))$ since $O(\log(1/\eps))$ is the dominating term when $\log(1/\eps)\ge\log^3(nw)$.

\section{Preliminaries}\label{sec:prelim}
\subsection{Averaging samplers}
\begin{definition}
A function $g:\bits{n}\times\bits{d}\to\bits{m}$ is a \emph{$(\eps,\delta)$ (averaging) sampler} if for every function $f:\bits{m}\to{[0,1]}$, 
$$\pr[x\in\bits{n}]{\abs{\ex[s\in\bits{d}]{f(g(x,s))}-\ex[y\in\bits{m}]{f(y)}}\le\eps}\ge 1-\delta.$$
\end{definition}
\noindent
It's easy to show that samplers also work for $f$ with general range by scaling and shifting.
\begin{claim}\label{lemma:sampler-general}
Let $g:\bits{n}\times\bits{d}\to\bits{m}$ be a $(\eps,\delta)$-sampler, and let $\ell<r\in\bbR$. Then for every $f:\bits{m}\to{[\ell,r]}$, 
$$\pr[x\in\bits{n}]{\abs{\ex[s\in\bits{d}]{f(g(x,s))}-\ex[y\in\bits{m}]{f(y)}}\le\eps(r-\ell)}\ge 1-\delta.$$
\end{claim}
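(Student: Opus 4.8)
The plan is to reduce the general-range case to the $[0,1]$-case by an affine change of variables. Given $f:\bits{m}\to[\ell,r]$ with $\ell<r$, define $h:\bits{m}\to[0,1]$ by
$$h(y)=\frac{f(y)-\ell}{r-\ell}.$$
Since $f(y)\in[\ell,r]$, we have $h(y)\in[0,1]$, so $h$ is a legitimate test function for the $(\eps,\delta)$-sampler $g$. Applying the sampler definition to $h$ gives that for all but a $\delta$-fraction of $x\in\bits{n}$,
$$\Bigl|\ex[s\in\bits{d}]{h(g(x,s))}-\ex[y\in\bits{m}]{h(y)}\Bigr|\le\eps.$$

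The second step is to rewrite both expectations in terms of $f$ using linearity of expectation: $\ex[s]{h(g(x,s))}=\bigl(\ex[s]{f(g(x,s))}-\ell\bigr)/(r-\ell)$ and likewise $\ex[y]{h(y)}=\bigl(\ex[y]{f(y)}-\ell\bigr)/(r-\ell)$. Subtracting, the additive $-\ell/(r-\ell)$ terms cancel, so
$$\ex[s]{h(g(x,s))}-\ex[y]{h(y)}=\frac{1}{r-\ell}\Bigl(\ex[s]{f(g(x,s))}-\ex[y]{f(y)}\Bigr).$$
Multiplying the inequality from the first step through by $r-\ell>0$ therefore yields
$$\Bigl|\ex[s\in\bits{d}]{f(g(x,s))}-\ex[y\in\bits{m}]{f(y)}\Bigr|\le\eps(r-\ell)$$
for all $x$ outside the same $\delta$-fraction bad set, which is exactly the claimed bound; hence the event holds with probability at least $1-\delta$ over $x\in\bits{n}$.

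This is entirely routine — there is no real obstacle. The only points requiring a word of care are that $r-\ell>0$ (given by hypothesis $\ell<r$), so that dividing and multiplying by it preserves the direction of the inequality and the absolute value scales by exactly $|r-\ell|=r-\ell$, and that the "bad" set of $x$ is the same one produced by the sampler guarantee for $h$, so its measure is at most $\delta$. One could also remark that the degenerate case $\ell=r$ is excluded by hypothesis but would be trivial anyway since then $f$ is constant.
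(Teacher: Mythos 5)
Your proof is correct and is essentially identical to the paper's: both rescale $f$ to $f'(y)=(f(y)-\ell)/(r-\ell)$ with range $[0,1]$, invoke the sampler definition, and multiply the resulting inequality by $r-\ell$. Your writeup just spells out the cancellation of the affine shift a bit more explicitly.
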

\begin{proof}
Let $f'$ be the function such that $f'(y)=(f(y)-\ell)/(r-\ell)$. Observe that the range of $f'$ is in $[0,1]$. By definition of sampler, 
$$\pr[x\in\bits{n}]{\abs{\ex[s\in\bits{d}]{f'(g(x,s))}-\ex[y\in\bits{m}]{f'(y)}}\le\eps}\ge 1-\delta.$$
By multiplying $(r-\ell)$ on both sides of the inequality inside the probability above we prove the claim.
\end{proof}
In our construction, we will use the following sampler which is explicitly computable with small space.
\begin{lemma}[\cite{RVW01,Gol11}]\label{lemma:samp-opt}
For every $\delta,\eps>0$ and integer $m$, there exists a $(\eps,\delta)$-sampler $f:\bits{n}\times\bits{d}\to\bits{m}$ s.t. $d=O(\log\log(1/\delta)+\log(1/\eps))$ and $n=m+O(\log(1/\delta))+O(\log(1/\eps))$. Moreover, for every $x,y$, $f(x,y)$ can be computed in space $O(m+\log(1/\delta)+\log(1/\eps))$.
\end{lemma}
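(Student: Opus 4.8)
The plan is to obtain the sampler from a randomness extractor via the standard sampler--extractor correspondence, and then to reduce the construction of the extractor to the high-min-entropy regime, which is exactly where the constructions of \cite{RVW01} (and the self-contained exposition in \cite{Gol11}) apply.

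\emph{Reduction to extractors.} Recall the correspondence (due to Zuckerman): if $E:\bits n\times\bits d\to\bits m$ is a $(k,\eps)$-extractor, then $g(x,s):=E(x,s)$ is an $(O(\eps),\delta)$-averaging sampler with $\delta=2^{-(n-k)+O(1)}$. (The point: if more than a $\delta$ fraction of $x$ were ``bad'' for some test $f:\bits m\to[0,1]$, the uniform distribution on those $x$ would be a source of min-entropy $>k$ on which $E$ fails to $\eps$-fool $f$, a contradiction; running with accuracy $\eps/2$ removes the constant.) Hence it suffices to build, for every $\eps>0$ and every entropy deficiency $\Delta:=n-k$, an explicit $(k,\eps)$-extractor with entropy loss $k-m=O(\log(1/\eps))$ and seed length $d=O(\log\Delta+\log(1/\eps))$, computable in space $O(n)$. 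Taking $\Delta=\log(1/\delta)+O(1)$ then yields $n=m+\Delta+O(\log(1/\eps))=m+O(\log(1/\delta))+O(\log(1/\eps))$ and $d=O(\log\Delta+\log(1/\eps))=O(\log\log(1/\delta)+\log(1/\eps))$, as required, and the space bound becomes $O(m+\log(1/\delta)+\log(1/\eps))$.

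\emph{Building the deficiency-dependent extractor.} The idea is to first ``condense away'' the redundancy and then extract. Using a cheap object (a lossless condenser, or an iterated application of a near-optimal-seed extractor), I would map $\bits n$ down to $\bits{n'}$ with $n'=k+O(\log(1/\eps))$ while preserving the min-entropy up to an additive $O(\log(1/\eps))$ loss; the seed cost of this stage can be kept at $O(\log\Delta+\log(1/\eps))$ because each round shrinks the domain geometrically toward $\Delta\cdot\poly(1/\eps)$, at which scale $\log(\text{domain})=O(\log\Delta+\log(1/\eps))$. On $\bits{n'}$ the source now has deficiency only $O(\log(1/\eps))$, so a single application of an extractor with seed $O(\log n'+\log(1/\eps))=O(\log\Delta+\log(1/\eps))$ and optimal entropy loss finishes the job; composing the two stages and threading the two seeds gives the extractor, and since every component is a logspace-explicit pseudorandom object the composition stays computable in space $O(n)$. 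An alternative, entirely sampler-side route is the modular argument in \cite{Gol11}: compose a randomness-efficient (pairwise-independent) sampler with an expander-random-walk sampler, the latter driving the number of samples down to $O(\eps^{-2}\log(1/\delta))$ (hence $d=O(\log\log(1/\delta)+\log(1/\eps))$), while the same geometric domain-shrinking recursion prevents the pairwise-independent stage from doubling the $m$ term in the randomness count.

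\emph{Main obstacle.} The reduction to extractors and the final extraction step are routine; the crux is obtaining a seed length that depends on the deficiency $\Delta$ rather than on $n$ itself --- equivalently, a sampler whose randomness is $m+O(\log(1/\delta)+\log(1/\eps))$ with the leading $m$ having coefficient exactly $1$. This is precisely the content of the \cite{RVW01} constructions, and bounding the accumulation of error and entropy loss across the geometric recursion, while keeping every intermediate object logspace-explicit, is the technical heart of the argument.
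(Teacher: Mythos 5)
Your reduction to extractors is exactly the paper's: Zuckerman's correspondence says a $(n-\log(1/\delta)-O(1),\eps)$-extractor is an $(\eps,\delta)$-sampler, so everything hinges on an explicit extractor for deficiency-$\Delta$ sources whose seed length is $O(\log\Delta+\log(1/\eps))$ and whose source length exceeds $m$ by only $O(\Delta+\log(1/\eps))$. You correctly identify this as the crux, but your sketch of how to achieve it does not go through. In the condense-then-extract route you describe, after condensing to $n'=k+O(\log(1/\eps))$ bits you still have $n'\ge m$, so the asserted identity $O(\log n'+\log(1/\eps))=O(\log\Delta+\log(1/\eps))$ is false whenever $m\gg\poly(\Delta/\eps)$ (e.g.\ constant $\delta$ and large $m$); a GUV-type extractor applied at that point costs seed $\Omega(\log m)$, which would propagate into the sampler's $d$ and destroy the $O(\log\log(1/\delta)+\log(1/\eps))$ bound. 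The same objection applies to the condensing stage itself: a lossless condenser on an $n$-bit source has seed $\Omega(\log(n/\eps))$, and your claim that the ``geometric domain-shrinking recursion'' keeps the total at $O(\log\Delta+\log(1/\eps))$ is exactly the step that needs an argument and is not supplied.

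The mechanism the paper uses to kill the $\log m$ dependence is the block-source composition. Write the $(n-\Delta)$-source on $n=m+3d_1$ bits (with $\Delta=\log(1/\delta)+1$) as approximately a block source with a long $m$-bit block and a short $3d_1$-bit block, where $d_1=O(\Delta+\log(1/\eps))$. The long block is handled by a Goldreich--Wigderson extractor $E_1:\bits{m}\times\bits{d_1}\to\bits{m}$ whose seed length $d_1$ depends only on the deficiency and on $\eps$, \emph{not} on $m$; the seed for $E_1$ is then generated by applying a GUV extractor $E_2$ to the \emph{short} block, and since that block has length only $O(\Delta+\log(1/\eps))$, the fresh seed for $E_2$ is $O(\log d_1+\log(1/\eps))=O(\log\log(1/\delta)+\log(1/\eps))$. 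It is the combination ``deficiency-dependent seed for the long block'' plus ``length-dependent seed, but applied only to a short block'' that yields the stated parameters; neither a single condenser step nor a single GUV application on the full (or condensed) source achieves this. Your alternative ``sampler-side'' sentence gestures at the right literature but again leaves the $m$-dependence of the randomness-efficient stage unaddressed.
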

\begin{remark}
The original sampler in \cite{RVW01} has a restriction on $\eps$. Such a restriction will cause a $2^{O(\log^*(nw/\eps))}$ factor in our construction, as in \cite{BCG18}. However, \cite{RVW01} pointed out that the restriction is inherited from the extractor in \cite{Zuc97}, which breaks down when the error is extremely small. As observed in \cite{Gol11}, this restriction can be removed by plugging in a more recent extractor construction in \cite{GUV09}. Note that there exists a space-efficient implementation of \cite{GUV09} in \cite{KNW08}, so the resulting sampler is also space-efficient. For completeness we include a proof in \Cref{sec:proof-of-samp}.
\end{remark}
\subsection{Matrix norms}
As in \cite{BCG18}, we will use the infinity norm in this paper.
\begin{definition}
For every matrix $A\in\bbR^{w\times w}$, $\ninf{A}=\max_i{\sum_j |A_{i,j}|}$.
\end{definition}
We record some well known properties of the infinity norm.
\begin{claim}
Let $A,B\in\bbR^{w\times w}$, $c\in \bbR$. Then
\begin{itemize}
\item
$\norm{cA}=|c|\norm{A}$
\item
$\norm{A}+\norm{B}\le\norm{A+B}$
\item
$\ninf{AB}\le \ninf{A}\ninf{B}$ 
\item
 $\max_{i,j} |A_{i,j}|\le\ninf{A}$
\item
If $A$ is stochastic, then $\ninf{A}=1$
\end{itemize}
\end{claim}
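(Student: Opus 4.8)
The plan is to verify each of the five items directly from the definition $\ninf{A}=\max_i\sum_j\abs{A_{i,j}}$ (the maximum absolute row sum), treating them in roughly increasing order of effort. For absolute homogeneity, I would fix a row index $i$ and note $\sum_j\abs{(cA)_{i,j}}=\abs{c}\sum_j\abs{A_{i,j}}$; taking the maximum over $i$ on both sides yields $\ninf{cA}=\abs{c}\ninf{A}$. For the triangle inequality (i.e.\ $\ninf{A+B}\le\ninf{A}+\ninf{B}$, which is the intended statement and the one used later), I would again fix $i$, apply the scalar triangle inequality entrywise to get $\sum_j\abs{A_{i,j}+B_{i,j}}\le\sum_j\abs{A_{i,j}}+\sum_j\abs{B_{i,j}}$, bound the two sums on the right by $\ninf{A}$ and $\ninf{B}$ respectively, and then take the maximum over $i$ on the left-hand side.

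For submultiplicativity, I would fix a row $i$ and compute
\begin{align*}
\sum_j\abs{(AB)_{i,j}}
&=\sum_j\Bigl|\sum_k A_{i,k}B_{k,j}\Bigr|
\le\sum_j\sum_k\abs{A_{i,k}}\,\abs{B_{k,j}}\\
&=\sum_k\abs{A_{i,k}}\sum_j\abs{B_{k,j}}
\le\Bigl(\sum_k\abs{A_{i,k}}\Bigr)\ninf{B}
\le\ninf{A}\,\ninf{B},
\end{align*}
using the scalar triangle inequality, reordering the finite double sum, and applying the definition of $\ninf{B}$ (uniformly in $k$) and then of $\ninf{A}$; taking the maximum over $i$ gives $\ninf{AB}\le\ninf{A}\ninf{B}$.

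For the entrywise bound, for any fixed pair $i,j$ we have $\abs{A_{i,j}}\le\sum_{j'}\abs{A_{i,j'}}\le\ninf{A}$, and maximizing over $i,j$ gives $\max_{i,j}\abs{A_{i,j}}\le\ninf{A}$. Finally, if $A$ is stochastic then every entry is nonnegative and every row sums to $1$, so $\sum_j\abs{A_{i,j}}=\sum_j A_{i,j}=1$ for each $i$, whence $\ninf{A}=1$. None of these steps present a real obstacle; the only mild subtlety is the reordering of the double summation in the submultiplicativity argument (justified since the matrices are finite-dimensional) and remembering to factor out $\ninf{B}$ only after bounding $\sum_j\abs{B_{k,j}}$ uniformly over $k$.
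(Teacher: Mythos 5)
Your verification is correct and is the standard entrywise argument; the paper itself states these properties without proof as well-known facts, so there is no alternative approach to compare against. You are also right that the second item as printed ($\norm{A}+\norm{B}\le\norm{A+B}$) is a typo for subadditivity $\norm{A+B}\le\norm{A}+\norm{B}$, which is the direction actually used throughout the paper and the one you prove.
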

Note that for any $(n,w)$-ROBP represented by $w\times w$ matrices $M_{[n]}^{\bit}$, $\ninf{M_{i..j}}=1$ for every $0\le i\le j\le n$.

\section{Approximate Matrix Multiplication via Samplers}\label{sec:samp-our}

In this section we formally prove the sampler arguments which will be used in our construction. Our proof strategy resembles that of \cite{BCG18},  with the following two crucial differences. First, we will define two different notions of ``smallness" for our flattening idea. Second, in our construction we need the case where we use samplers to select matrices on both sides (\Cref{lemma:sym}). 

We will consider mappings $\cA:\bits{n}\to\bbR^{w\times w}$ which correspond to the implicit small norm matrices we discussed in the previous section. Borrowing notation from \Cref{def:MBS}, we use $\inp{\cA}$ to denote $\ex[x]{\cA(x)}$. First we define two different norms for the mapping $\cA$. The \emph{robust norm} is similar to the notion of ``smallness" in \cite{BCG18}, i.e. the average of norm of $\cA(x)$, while the \emph{norm} of $\cA$ is simply the norm of $\inp{\cA}$, i.e. the norm of average of $\cA(x)$. 
\begin{definition}
For every function $\cA:\bits{n}\to\bbR^{w\times w}$, we define the \emph{norm} of $\cA$ to be $\norm{\cA}=\ninf{\ex[x\in\bits{n}]{\cA(x)}}$, and the \emph{robust norm} of $\cA$ to be  $\norm[r]{\cA}=\ex[x\in\bits{n}]{\ninf{\cA(x)}}$. Besides, we define the weight of $\cA$ to be  $\wt(\cA)=\max_x \ninf{\cA(x)}$.
\end{definition}
\begin{claim}\label{claim:norms}
$\norm{\cA}\le\norm[r]{\cA}\le \wt(\cA)$.
\end{claim}
\begin{proof}
$\norm{\cA}\le\norm[r]{\cA}$ is by sub-additivity of $\norm{\cdot}$, and $\norm[r]{\cA}\le \wt(\cA)$ since $\norm[r]{\cA}$ is the average of values no larger than $\wt(\cA)$.
\end{proof}
\noindent
Next we show a simple lemma which will be used later. That is, a sampler for functions with range $[0,1]$ is also a sampler for matrix-valued functions, where the error is measured with infinity norm. 
\begin{lemma}\label{lemma:mat-samp-general}
For every function $\cA:\bits{m}\to\bbR^{w\times w}$ and every $(\eps,\delta)$-sampler $g:\bits{n}\times\bits{d}\to\bits{m}$,
$$\pr[x\in\bits{n}]{\ninf{\ex[s\in\bits{d}]{\cA(g(x,s))}-\inp{\cA}}\le 2w\wt(\cA)\eps}\ge 1- w^2\delta.$$
\end{lemma}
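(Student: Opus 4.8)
\textbf{Proof plan for \Cref{lemma:mat-samp-general}.}

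The plan is to reduce the matrix-valued statement to the scalar sampler guarantee, one entry at a time, and then take a union bound over the $w^2$ entries. First I would fix an entry $(i,j)\in[w]\times[w]$ and consider the scalar-valued function $f_{i,j}:\bits{m}\to\bbR$ defined by $f_{i,j}(y)=\cA(y)_{i,j}$. Since $\max_{i,j}|\cA(y)_{i,j}|\le\ninf{\cA(y)}\le\wt(\cA)$ for every $y$, the range of $f_{i,j}$ lies in $[-\wt(\cA),\wt(\cA)]$, an interval of width $2\wt(\cA)$. Applying \Cref{lemma:sampler-general} with $\ell=-\wt(\cA)$ and $r=\wt(\cA)$ to the $(\eps,\delta)$-sampler $g$, I get that for each $(i,j)$,
$$\pr[x\in\bits{n}]{\abs{\ex[s\in\bits{d}]{\cA(g(x,s))_{i,j}}-\inp{\cA}_{i,j}}\le 2\wt(\cA)\eps}\ge 1-\delta.$$
Note that $\ex[s]{\cA(g(x,s))_{i,j}}=\big(\ex[s]{\cA(g(x,s))}\big)_{i,j}$ and $\inp{\cA}_{i,j}=\big(\ex[y]{\cA(y)}\big)_{i,j}$, so the event above is exactly the event that the $(i,j)$ entry of the matrix $\ex[s]{\cA(g(x,s))}-\inp{\cA}$ is bounded by $2\wt(\cA)\eps$ in absolute value.

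Next I would take a union bound over all $w^2$ choices of $(i,j)$: with probability at least $1-w^2\delta$ over $x\in\bits{n}$, every entry of the error matrix $E_x:=\ex[s]{\cA(g(x,s))}-\inp{\cA}$ satisfies $|(E_x)_{i,j}|\le 2\wt(\cA)\eps$ simultaneously. On this good event, I bound the infinity norm directly: $\ninf{E_x}=\max_i\sum_j|(E_x)_{i,j}|\le \max_i\sum_j 2\wt(\cA)\eps = 2w\wt(\cA)\eps$, which is precisely the claimed bound. This completes the proof.

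There is no real obstacle here — the lemma is a routine "entrywise scalar sampler plus union bound over $w^2$ entries" argument, and the only mild subtlety is keeping track of the factor $2$ from the range having width $2\wt(\cA)$ (rather than $\wt(\cA)$), and the extra factor $w$ that appears when converting an entrywise $\ell_\infty$ bound into an $\ninf{\cdot}$ (row-sum) bound. One could alternatively get a factor $w$ instead of $w^2$ in the failure probability by taking a union bound over the $w$ rows and handling each row as a single $[0,1]$-valued-after-affine-rescaling function of the sum, but that requires the sampler to handle the sum of $w$ bounded quantities; the entrywise approach stated above is cleanest and matches the constants in the lemma statement, so that is the route I would take.
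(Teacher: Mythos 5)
Your proof is correct and essentially identical to the paper's: the paper also applies the scalar sampler guarantee entrywise (working with $\cE(y)=\cA(y)-\inp{\cA}$, whose entries have the same range width $2\wt(\cA)$), takes a union bound over the $w^2$ entries, and converts the entrywise bound into the $\ninf{\cdot}$ bound at the cost of a factor $w$. Your invocation of \Cref{lemma:sampler-general} with $\ell=-\wt(\cA)$ and $r=\wt(\cA)$ is just a slightly more explicit packaging of the same step.
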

\begin{proof}
Let $\cE(y)=\cA(y)-\inp{\cA}$. For every $i,j\in[w]$, observe that 
$$\max_y \cE(y)_{i,j}-\min_y \cE(y)_{i,j}= \max_y \cA(y)_{i,j}-\min_y \cA(y)_{i,j}$$ 
By the property of sampler it follows that
$$\pr[x\in\bits{n}]{\abs{\ex[s]{\cE(g(x,s))_{i,j}}}\le 2\eps\wt(\cA)}\ge 1-\delta.$$ Using a union bound, 
$$\pr[x\in\bits{n}]{\forall i,j\in[w], \abs{\ex[s]{\cE(g(x,s))_{i,j}}}\le 2\eps\wt(\cA) }\ge 1- w^2\delta.$$
Thus by definition of the infinity norm, we can conclude that 
$$\pr[x\in\bits{n}]{\ninf{\ex[s\in\bits{d}]{\cE(g(x,s))}}\le 2w\wt(\cA)\eps }\ge 1-w^2\delta.$$
which by sub-additivity of $\ninf{\cdot}$ implies 
$$\pr[x\in\bits{n}]{\ninf{\ex[s]{\cA(g(x,s))}}\le \norm{\cA}+2w\wt(\cA)\eps}\ge 1- w^2\delta.$$
\end{proof}
\begin{corollary}\label{lemma:mat-samp}
For every function $\cA:\bits{m}\to\bbR^{w\times w}$ and every $(\eps,\delta)$-sampler $g:\bits{n}\times\bits{d}\to\bits{m}$,
$$\pr[x\in\bits{n}]{\ninf{\ex[s\in\bits{d}]{\cA(g(x,s))}}\le \norm{\cA}+2w\wt(\cA)\eps}\ge 1- w^2\delta.$$
\end{corollary}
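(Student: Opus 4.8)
The plan is to deduce \Cref{lemma:mat-samp} directly from \Cref{lemma:mat-samp-general} by a one-line triangle-inequality argument, since the corollary is just a weaker (but more convenient) restatement. First I would fix an arbitrary function $\cA:\bits{m}\to\bbR^{w\times w}$ and an $(\eps,\delta)$-sampler $g$, and invoke \Cref{lemma:mat-samp-general} to obtain that with probability at least $1-w^2\delta$ over $x\in\bits{n}$,
$$\ninf{\ex[s\in\bits{d}]{\cA(g(x,s))}-\inp{\cA}}\le 2w\wt(\cA)\eps.$$
Then, conditioned on this good event, I would apply sub-additivity of $\ninf{\cdot}$ in the form $\ninf{U}\le\ninf{U-V}+\ninf{V}$ with $U=\ex[s]{\cA(g(x,s))}$ and $V=\inp{\cA}$, together with the defining identity $\ninf{\inp{\cA}}=\norm{\cA}$, to conclude
$$\ninf{\ex[s\in\bits{d}]{\cA(g(x,s))}}\le \norm{\cA}+2w\wt(\cA)\eps$$
on the same $1-w^2\delta$ fraction of $x$, which is exactly the claimed bound.

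There is essentially no obstacle here: the corollary follows from the lemma it is stated after by a single triangle inequality, and indeed the proof of \Cref{lemma:mat-samp-general} already contains this last step verbatim in its final displayed line. The only thing to be careful about is bookkeeping — making sure the failure probability stays $w^2\delta$ (it does, since we are working on the same event and not taking a further union bound) and that the error term is written against $\norm{\cA}$ rather than against $\inp{\cA}$ inside the norm. So the ``hard part'', such as it is, is purely cosmetic: stating the corollary in the normalized form that downstream lemmas (e.g.\ the smallness-tracking arguments and \Cref{lemma:sym}) will want to cite, namely a bound on $\ninf{\ex[s]{\cA(g(x,s))}}$ in terms of $\norm{\cA}$ and $\wt(\cA)$ alone.

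In short, the proof proposal is: \emph{Apply \Cref{lemma:mat-samp-general}, then use $\ninf{\ex[s]{\cA(g(x,s))}}\le\ninf{\ex[s]{\cA(g(x,s))}-\inp{\cA}}+\ninf{\inp{\cA}}=\ninf{\ex[s]{\cA(g(x,s))}-\inp{\cA}}+\norm{\cA}$ on the good event.} No induction, no case analysis, no parameter optimization is needed.
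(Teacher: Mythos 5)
Your proposal is correct and matches the paper's own proof exactly: the paper also derives \Cref{lemma:mat-samp} from \Cref{lemma:mat-samp-general} by the single triangle-inequality step $\ninf{\ex[s]{\cA(g(x,s))}}\le\ninf{\ex[s]{\cA(g(x,s))}-\inp{\cA}}+\norm{\cA}$ on the same good event, with no change to the failure probability. Nothing is missing.
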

\begin{proof}
By sub-additivity of $\ninf{\cdot}$, $\ninf{\ex[s\in\bits{d}]{\cA(g(x,s))}-\inp{A}}\le 2w\wt(\cA)\eps$ implies $\ninf{\ex[s\in\bits{d}]{\cA(g(x,s))}}\le \norm{\inp{\cA}}+2w\wt(\cA)\eps$. The claim now directly follows from \Cref{lemma:mat-samp-general}.
\end{proof}
\noindent
Now we introduce three different matrix multiplication rules. The first one is applying a sampler on both sides, and the second and third are applying sampler on only one side. 
\begin{lemma}[symmetric product]\label{lemma:sym}
Consider $\cA:\bits{n}\to\bbR^{w\times w}$ and $\cB:\bits{m}\to\bbR^{w\times w}$. Let $f:\bits{k}\times \bits{d_A}\to\bits{n}$ be a $(\delta,\eps_A)$ sampler, and $g:\bits{k}\times \bits{d_B}\to\bits{m}$ be a $(\delta,\eps_B)$ sampler. Then 
$$\ex[z]{\ninf{\ex[x,y]{\cA(f(z,x))\cB(g(z,y))}}}\le 2w^2\delta \wt(\cA)\wt(\cB)+\left(\norm{\cA}+2w\wt(\cA)\eps_A\right)\left(\norm{\cB}+2w\wt(\cB)\eps_B\right).$$
\end{lemma}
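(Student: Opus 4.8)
The plan is to bound the quantity by isolating, for each fixed outer seed $z$, the two independent sampler errors and then averaging over $z$. Observe first that for a fixed $z$ we have the factorization
\[
\ex[x,y]{\cA(f(z,x))\cB(g(z,y))}=\left(\ex[x]{\cA(f(z,x))}\right)\left(\ex[y]{\cB(g(z,y))}\right),
\]
since $x$ and $y$ range over independent uniform strings. Writing $P(z)=\ex[x]{\cA(f(z,x))}$ and $Q(z)=\ex[y]{\cB(g(z,y))}$, sub-multiplicativity of $\ninf{\cdot}$ gives $\ninf{P(z)Q(z)}\le \ninf{P(z)}\cdot\ninf{Q(z)}$, so it suffices to control $\ex[z]{\ninf{P(z)}\ninf{Q(z)}}$.

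Next I would split the outer seed space according to which sampler fails. Let $E_A$ be the set of $z$ for which $\ninf{P(z)}>\norm{\cA}+2w\wt(\cA)\eps_A$, and $E_B$ the analogous bad set for $\cB$. By \Cref{lemma:mat-samp}, $\pr{z\in E_A}\le w^2\delta$ and $\pr{z\in E_B}\le w^2\delta$. Outside $E_A$ we have the pointwise bound $\ninf{P(z)}\le \norm{\cA}+2w\wt(\cA)\eps_A$, and outside $E_B$ similarly for $Q(z)$; for all $z$ we have the crude bounds $\ninf{P(z)}\le \wt(\cA)$ and $\ninf{Q(z)}\le\wt(\cB)$, since $P(z)$ is an average of matrices each of norm at most $\wt(\cA)$ (and likewise for $Q$). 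Now decompose the expectation as a sum over $z\in E_A\cup E_B$ and $z\notin E_A\cup E_B$. On the bad part, use $\ninf{P(z)}\ninf{Q(z)}\le\wt(\cA)\wt(\cB)$ and the union bound $\pr{E_A\cup E_B}\le 2w^2\delta$, contributing at most $2w^2\delta\,\wt(\cA)\wt(\cB)$. On the good part, use the two pointwise sampler bounds and drop the indicator (extending the sum to all $z$, which only increases it since the summand is nonnegative), contributing at most $\left(\norm{\cA}+2w\wt(\cA)\eps_A\right)\left(\norm{\cB}+2w\wt(\cB)\eps_B\right)$. Adding the two pieces yields exactly the claimed inequality.

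The only mild subtlety — and the one step I would be most careful about — is the event-splitting: one must make sure the ``good'' estimate for $\cA$ is only invoked when $z\notin E_A$ (not merely $z\notin E_B$) and symmetrically, so the clean way is to bound $\ninf{P(z)}\ninf{Q(z)}$ by $\wt(\cA)\wt(\cB)$ on $E_A\cup E_B$ and by the product of the two good bounds on the complement; since both bounding expressions are nonnegative, replacing the restricted sums by full sums over $z$ is valid and gives the stated closed form. Everything else is routine: the factorization across independent seeds, sub-multiplicativity, and the two invocations of \Cref{lemma:mat-samp}.
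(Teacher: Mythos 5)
Your proposal is correct and follows essentially the same route as the paper's proof: factor the inner expectation over the independent seeds $x,y$, define the bad sets $E_A,E_B$ via \Cref{lemma:mat-samp}, union-bound their measure by $2w^2\delta$, bound the contribution on $E_A\cup E_B$ by $\wt(\cA)\wt(\cB)$ and on the complement by the product of the two sampler guarantees. The event-splitting subtlety you flag is handled identically in the paper (it works with $E=E_A\cup E_B$ throughout), so there is nothing to add.
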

\begin{proof}
Let $$E_A=\left\{z: \ninf{\ex[x]{\cA(f(z,x)}}> \norm{\cA}+2w\wt(\cA)\eps_A\right\},$$and $$E_B=\left\{z: \ninf{\ex[y]{\cB(g(z,y)}}> \norm{\cB}+2w\wt(\cB)\eps_B\right\}.$$
Define $E=E_A\cup E_B$. By Lemma~\ref{lemma:mat-samp} and union bound, $\pr[z]{z\in E}<2w^2\delta$. Therefore
\begin{align*}
\ex[z]{\ninf{\ex[x,y]{\cA(f(z,x))\cB(g(z,y))}}}
&=\pr{z\in E}\ex[z\in E]{\ninf{\ex[x,y]{\cA(f(z,x))\cB(g(z,y))}}} \\
&+ \pr{z\not\in E}\ex[z\not\in E]{\ninf{\ex[x]{\cA(f(z,x))}\ex[y]{\cB(g(z,y))}}}\\
&\le 2w^2\delta \wt(\cA)\wt(\cB)+\ex[z\not\in E]{\ninf{\ex[x]{\cA(f(z,x))}}\ninf{\ex[y]{\cB(g(z,y))}}}\\
&\le 2w^2\delta \wt(\cA)\wt(\cB)+\left(\norm{\cA}+2w\wt(\cA)\eps_A\right)\left(\norm{\cB}+2w\wt(\cB)\eps_B\right).
\end{align*}
The second last inequality is by the fact that $\norm{\cdot}$ is non-negative and sub-multiplicative.
\end{proof}
\begin{lemma}[left product]\label{lemma:left}
Consider $\cA:\bits{k}\to\bbR^{w\times w}$ and $\cB:\bits{m}\to\bbR^{w\times w}$. Let $g:\bits{k}\times \bits{d_B}\to\bits{m}$ be a $(\delta,\eps_B)$ sampler. Then 
$$\ex[z]{\ninf{\ex[y]{\cA(z)\cB(g(z,y))}}}\le w^2\delta \wt(\cA)\wt(\cB)+\norm[r]{\cA}\left(\norm{\cB}+2w\wt(\cB)\eps_B\right).$$
\end{lemma}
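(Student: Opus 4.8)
The plan is to mirror the proof of the symmetric product lemma (\Cref{lemma:sym}), but now only one of the two factors uses a sampler, so only one ``bad event'' needs to be excluded. First I would define the bad set
$$E_B=\left\{z: \ninf{\ex[y]{\cB(g(z,y))}}> \norm{\cB}+2w\wt(\cB)\eps_B\right\},$$
and invoke \Cref{lemma:mat-samp} to conclude $\pr[z]{z\in E_B}< w^2\delta$ (no union bound over two samplers is needed here, which is why the coefficient is $w^2\delta$ rather than $2w^2\delta$). Then I would split the expectation over $z$ according to whether $z\in E_B$ or not, using sub-multiplicativity $\ninf{\cA(z)\cdot\ex[y]{\cB(g(z,y))}}\le \ninf{\cA(z)}\cdot\ninf{\ex[y]{\cB(g(z,y))}}$ in both cases.

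Concretely, on the event $z\in E_B$ I would bound $\ninf{\cA(z)}\le\wt(\cA)$ and $\ninf{\ex[y]{\cB(g(z,y))}}\le\wt(\cB)$ (since an average of matrices each of norm at most $\wt(\cB)$ has norm at most $\wt(\cB)$), giving a contribution of at most $\pr{z\in E_B}\cdot\wt(\cA)\wt(\cB)\le w^2\delta\,\wt(\cA)\wt(\cB)$. On the complementary event $z\notin E_B$, the sampler bound gives $\ninf{\ex[y]{\cB(g(z,y))}}\le\norm{\cB}+2w\wt(\cB)\eps_B$, so that term is bounded by
$$\ex[z\notin E_B]{\ninf{\cA(z)}}\cdot\left(\norm{\cB}+2w\wt(\cB)\eps_B\right)\le \ex[z]{\ninf{\cA(z)}}\cdot\left(\norm{\cB}+2w\wt(\cB)\eps_B\right),$$
where I extend the expectation back to all $z$ since $\ninf{\cA(z)}\ge 0$, and recognize $\ex[z]{\ninf{\cA(z)}}=\norm[r]{\cA}$ by definition of the robust norm. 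Adding the two contributions yields exactly the claimed bound $w^2\delta\,\wt(\cA)\wt(\cB)+\norm[r]{\cA}\left(\norm{\cB}+2w\wt(\cB)\eps_B\right)$.

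The asymmetry between the two factors is the conceptual point: because $\cA$ is not passed through a sampler, we pay for its full robust norm $\norm[r]{\cA}=\ex[z]{\ninf{\cA(z)}}$ (the ``average of norms'') rather than its norm $\norm{\cA}=\ninf{\ex[z]{\cA(z)}}$ (the ``norm of the average'') — and by \Cref{claim:norms} the former can only be larger. This is acceptable precisely because in the intended application $\cA$ will be the high-precision/large-seed side (the $\cA_i$ with large $i$), for which we already maintain a good robust-norm bound, while $\cB$ is the low-precision side whose true average we are willing to approximate by sampling.

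I do not anticipate a genuine obstacle: the only thing to be careful about is the bound $\ninf{\ex[y]{\cB(g(z,y))}}\le\wt(\cB)$ on the bad event, which follows from sub-additivity of $\ninf{\cdot}$ applied to the average, and the fact that we must not accidentally pick up a factor of $\wt(\cB)$ in the good-event term — there we use the tighter sampler bound. The whole argument is three or four lines of display, essentially identical in shape to the proof of \Cref{lemma:sym} with one sampler and one bad event deleted.
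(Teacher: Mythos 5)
Your proposal is correct and matches the paper's own proof essentially line for line: same bad set $E_B$, same application of \Cref{lemma:mat-samp}, same split of the expectation, the weight bound $\wt(\cA)\wt(\cB)$ on the bad event, and the same extension $\pr{z\notin E_B}\cdot\ex[z\notin E_B]{\ninf{\cA(z)}}\le\norm[r]{\cA}$ via non-negativity on the good event. No differences worth noting.
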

\begin{proof}
Let $$E=\left\{z: \ninf{\ex[y]{\cB(g(z,y)}}> \norm{\cB}+2w\wt(\cB)\eps_B\right\}.$$
By Lemma~\ref{lemma:mat-samp}, $\pr[z]{z\in E}<w^2\delta$. Therefore 
\begin{align*}
\ex[z]{\ninf{\ex[y]{\cA(z)\cB(g(z,y))}}}
&=\pr{z\in E}\ex[z\in E]{\ninf{\ex[y]{\cA(z)\cB(g(z,y))}}} \\
&+ \pr{z\not\in E}\ex[z\not\in E]{\ninf{\ex[y]{\cA(z)\cB(g(z,y))}}}\\
&\le w^2\delta \wt(\cA)\wt(\cB)+\pr{z\not\in E}\cdot\ex[z\not\in E]{\ninf{\cA(z)}\ninf{\ex[y]{\cB(g(z,y))}}}\\
&\le w^2\delta \wt(\cA)\wt(\cB)+\pr{z\not\in E}\ex[z\not\in E]{\ninf{\cA(z)}}\cdot(\norm{\cB}+2w\wt(\cB)\eps_B)\\
&\le w^2\delta \wt(\cA)\wt(\cB)+\norm[r]{\cA}\left(\norm{\cB}+2w\wt(\cB)\eps_B\right).
\end{align*}
The third last inequality is by sub-multiplicativity of $\norm{\cdot}$, the second last inequality is by non-negativity of $\norm{\cdot}$, and the last inequality is by the fact that $$\pr{z\not\in E}\cdot\ex[z\not\in E]{\norm{\cA(z)}}=\ex[z]{\norm{\cA(z)}\cdot \mathbbm{1}(z\not\in E)}\le \norm[r]{\cA}.$$
\end{proof}
\begin{lemma}[right product]\label{lemma:right}
Consider $\cA:\bits{k}\to\bbR^{w\times w}$ and $\cB:\bits{m}\to\bbR^{w\times w}$. Let $f:\bits{k}\times \bits{d_A}\to\bits{n}$ be a $(\delta,\eps_A)$ sampler. Then 
$$\ex[z]{\ninf{\ex[x]{\cA(f(z,x))\cB(z)}}}\le w^2\delta \wt(\cA)\wt(\cB)+\left(\norm{\cA}+2w\wt(\cA)\eps_A\right)\norm[r]{\cB}.$$
\end{lemma}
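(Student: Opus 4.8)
The plan is to mirror the proof of Lemma~\ref{lemma:left} with the roles of the two matrix-valued functions exchanged. Here the sampler $f$ acts on the left factor $\cA$ while the right factor $\cB(z)$ is read off directly from the outer seed $z$, so the ``robust norm'' bookkeeping now happens on $\cB$ instead of $\cA$; otherwise the argument is the sampler argument in its by-now-standard form.

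First I would isolate the outer seeds $z$ on which $f$ behaves well, defining
$$E=\left\{z:\ninf{\ex[x]{\cA(f(z,x))}}>\norm{\cA}+2w\wt(\cA)\eps_A\right\},$$
and applying Corollary~\ref{lemma:mat-samp} (with the sampler $f$ and the function $\cA$) to get $\pr[z]{z\in E}<w^2\delta$. Then I split $\ex[z]{\ninf{\ex[x]{\cA(f(z,x))\cB(z)}}}$ according to whether $z\in E$. On the bad event I bound crudely: for every $z$, sub-multiplicativity of $\ninf{\cdot}$ and the definition of $\wt$ give $\ninf{\ex[x]{\cA(f(z,x))\cB(z)}}\le\ex[x]{\ninf{\cA(f(z,x))}\ninf{\cB(z)}}\le\wt(\cA)\wt(\cB)$, so the contribution of $z\in E$ is at most $\pr{z\in E}\cdot\wt(\cA)\wt(\cB)\le w^2\delta\wt(\cA)\wt(\cB)$.

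On the good event, since $\cB(z)$ does not depend on $x$ I can pull it out of the inner expectation, $\ex[x]{\cA(f(z,x))\cB(z)}=\bigl(\ex[x]{\cA(f(z,x))}\bigr)\cB(z)$, and then use sub-multiplicativity together with $z\notin E$ to obtain $\ninf{\ex[x]{\cA(f(z,x))\cB(z)}}\le\bigl(\norm{\cA}+2w\wt(\cA)\eps_A\bigr)\ninf{\cB(z)}$. Averaging over $z\notin E$, pulling the nonnegative constant $\norm{\cA}+2w\wt(\cA)\eps_A$ out, and using
$$\pr{z\notin E}\cdot\ex[z\notin E]{\ninf{\cB(z)}}=\ex[z]{\ninf{\cB(z)}\cdot\ind(z\notin E)}\le\norm[r]{\cB}$$
yields the second summand. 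Adding the two contributions gives the claimed bound.

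I do not expect a genuine obstacle: the statement is exactly symmetric to Lemma~\ref{lemma:left}, and the only point requiring care is to keep track of which factor receives the sampler guarantee (the one pulled outside the norm, namely $\cA$) and which factor stays inside the outer expectation and is therefore measured by the robust norm (namely $\cB$).
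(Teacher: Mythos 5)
Your proposal is correct and is essentially identical to the paper's proof: you define the same bad set $E$, invoke \Cref{lemma:mat-samp} to bound $\pr{z\in E}<w^2\delta$, bound the bad-event contribution by $w^2\delta\,\wt(\cA)\wt(\cB)$, and on the good event pull out the factor $\norm{\cA}+2w\wt(\cA)\eps_A$ and absorb $\pr{z\notin E}\ex[z\notin E]{\ninf{\cB(z)}}$ into $\norm[r]{\cB}$. No differences worth noting.
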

\begin{proof}
Let $$E=\left\{z: \ninf{\ex[x]{\cA(f(z,x)}}> \norm{\cA}+2w\wt(\cA)\eps_A\right\}.$$
By Lemma~\ref{lemma:mat-samp}, $\pr[z]{z\in E}<w^2\delta$. Therefore 
\begin{align*}
\ex[z]{\ninf{\ex[x]{\cA(f(z,x))\cB(z)}}}
&=\pr{z\in E}\ex[z\in E]{\ninf{\ex[x]{\cA(f(z,x))\cB(z)}}} \\
&+ \pr{z\not\in E}\ex[z\not\in E]{\ninf{\ex[x]{\cA(f(z,x))\cB(z)}}}\\
&\le w^2\delta \wt(\cA)\wt(\cB)+\pr{z\not\in E}\cdot\ex[z\not\in E]{ \ninf{\ex[x]{\cA(f(z,x))}}\ninf{\cB(z)}}\\
&\le w^2\delta \wt(\cA)\wt(\cB)+(\norm{\cA}+2w\wt(\cA)\eps_A)\cdot \pr{z\not\in E}\ex[z\not\in E]{\ninf{\cB(z)}}\\
&\le w^2\delta \wt(\cA)\wt(\cB)+\left(\norm{\cA}+2w\wt(\cA)\eps_A\right)\norm[r]{\cB}.
\end{align*}
\end{proof}
\section{Main Construction}\label{sec:main-construction}
In this section we show our main construction and prove its correctness. We first introduce several definitions.
\begin{definition}
For every mapping $\cA:\bits{n}\to\bbR^{w\times w}$ and every matrix $A\in\bbR^{w\times w}$, we define $\cA-A$ to be the mapping s.t. $(\cA-A)(x)=\cA(x)-A$.
\end{definition}
\begin{definition}
Consider $A\in\bbR^{w\times w}$ and $\cA:\bits{n}\to\bbR^{w\times w}$. $\cA$ is a $\eps$-\emph{approximator} of $A$ if $\norm{\ex[x]{\cA(x)}-A}\le \eps$, i.e. $\norm{\cA-A}\le \eps$. $\cA$ is a $\eps$-\emph{robust approximator} of $A$ if $\ex[x]{\norm{\cA(x)-A}}\le \eps$, i.e. $\norm[r]{\cA-A}\le \eps$.
\end{definition}
\noindent
Now we define a robust PRPD. Note that a $(n,w,\eps)$-robust PRPD $(G,\rho)$ is also a $\wt(G,\rho)$-bounded $(n,w,\eps)$-PRPD.
\begin{definition}\label{def:robust-PRPD}
$(G,\rho):\bits{\sout}\times\bits{\si}\times{[\wt]}\to\bits{n}\times\bbR$ is a $(n,w,\eps)$-\emph{robust PRPD} if for every $(n,w)$-ROBP and its matrix representation $[M]_{[n]}^{\bit}$ the following holds. Let $\cA:\bits{\sout}\times\bits{\si}\to\bbR^{w\times w}$ denote the mapping 
$$\cA(x,y)=\ex[i\in {[\wt]}]{\rho(x,y,i)\cdot M_{0..n}^{G(x,y,i)}}.$$
\begin{itemize}
\item
Every $\rho(x,y,i)$ is either $\wt$ or $-\wt$. In other word, $\cA(x,y)$ is the summation of transition matrices with coefficient $\pm 1$.
\item
Let $\widehat{\cA}$ denote the mapping $\widehat{\cA}(x)=\ex[y]{\cA(x,y)}$. Then $\widehat{\cA}$ is a $\eps$-robust approximator for $M_{0..n}$. 
\end{itemize}
We say $\wt$ is the weight of $(G,\rho)$, denoted by $\wt(G,\rho)$. $\sout$ is the \emph{outer seed length} of $(G,\rho)$, denoted by $\sout(G,\rho)$. $\si$ is the \emph{inner seed length} of $(G,\rho)$, denoted by $\si(G,\rho)$. We write $s(G,\rho)=\sout(G,\rho)+\si(G,\rho)$ for short. We say $(G,\rho)$ is explicit if it can be computed in $O(s(G,\rho))$ space.

We say $\cA$ is the \emph{matrix form} of $(G,\rho)$ on $M_{0..n}$, and the definition of $\sout,\si,\wt$ on $(G,\rho)$ also apply to $\cA$. We say $\widehat{\cA}$ is the \emph{robust matrix form} of $(G,\rho)$ on $M_{0..n}$. 
\end{definition}
\begin{remark}\label{rmk:mon}
The above definition is similar to \Cref{def:MBS}, but each matrix $\cA(x,y)$ is realized with $\wt$ matrices instead of one matrix. These $\wt$ matrices will never be separated even after flattening. We do this in order to ensure that the matrix form always take bit-strings as input. This ensures that we can increase the outer and inner seed length of $\cA$ arbitrarily: we can construct the new mapping $\cA':\bits{\sout'}\times\bits{\si'}$ such that $\cA'(x,y)=\cA(x_p,y_p)$ where $x_p$ is the length-$\sout(\cA)$ prefix of $x$ and $y_p$ is the length-$\si(\cA)$ prefix of $y$. In other word, $\cA'$ computes the output only with prefix of necessary length of the input, and ignore the remaining bits. It is easy to verify that $\cA'$ is also the matrix form of a $(n,w,\eps)$-robust PRPD.
\end{remark}
\noindent
The following is some additional basic properties about robust PRPD and its flattened form.
\begin{claim}\label{claim:std-approx}
Let $(G,\rho):\bits{\sout}\times\bits{\si}\times{[\wt]}\to\bits{n}\times\bbR$ be a $(n,w,\eps)$-\emph{robust PRPD}. For every $(n,w)$-ROBP $M_1^0,M_1^1,\dots,M_n^0,M_n^1$ the following holds.
\begin{itemize}
\item 
Let $\widehat{\cA}$ be the robust matrix form of $(G,\rho)$ on $M_{0..n}$. Then $\wt(\widehat{\cA})\le \wt(G,\rho)$.
\item
Let $\cA$ denote the matrix form of $(G,\rho)$ on $M_{0..n}$. Let $\overline{\cA}:\bits{\sout+\si}\to\bbR^{w\times w}$ denote the mapping $\overline{\cA}(x\Vert y)=\cA(x,y)$. We say $\overline{\cA}$ is the \emph{flattened matrix form} of $(G,\rho)$ on $M_{0..n}$. Then $\overline{\cA}$ is an $\eps$-approximator for $M_{0..n}$, and $\wt(\overline{\cA})\le\wt(G,\rho)$.
\end{itemize}
\end{claim}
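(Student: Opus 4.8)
The plan is to reduce both items to two elementary facts together with some bookkeeping of which average is taken over which part of the seed. The first fact is that every single-path transition matrix $M_{0..n}^{r}$ has exactly one $1$ per row, hence is stochastic and satisfies $\ninf{M_{0..n}^{r}}=1$. The second is that, by \Cref{def:robust-PRPD}, every coefficient $\rho(x,y,i)$ has $\abs{\rho(x,y,i)}=\wt(G,\rho)$. Combined with the triangle inequality for $\ninf{\cdot}$ (which gives $\ninf{\ex[z]{\cdot}}\le\ex[z]{\ninf{\cdot}}$ for any averaging index $z$), everything follows.

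For the first bullet I would fix $x\in\bits{\sout}$ and expand $\widehat{\cA}(x)=\ex[y]{\ex[i]{\rho(x,y,i)\,M_{0..n}^{G(x,y,i)}}}$, so that by the triangle inequality $\ninf{\widehat{\cA}(x)}\le\ex[y,i]{\abs{\rho(x,y,i)}\cdot\ninf{M_{0..n}^{G(x,y,i)}}}=\wt(G,\rho)$; taking the maximum over $x$ gives $\wt(\widehat{\cA})\le\wt(G,\rho)$. The weight bound in the second bullet is the same computation one level down: for each pair $(x,y)$ we have $\overline{\cA}(x\Vert y)=\cA(x,y)=\ex[i]{\rho(x,y,i)\,M_{0..n}^{G(x,y,i)}}$, and the identical estimate gives $\ninf{\overline{\cA}(x\Vert y)}\le\wt(G,\rho)$, hence $\wt(\overline{\cA})\le\wt(G,\rho)$.

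For the approximation claim in the second bullet, the key observation is that re-associating the uniform average over the flattened domain recovers the robust matrix form:
\[
\inp{\overline{\cA}}=\ex[x,y]{\overline{\cA}(x\Vert y)}=\ex[x]{\ex[y]{\cA(x,y)}}=\ex[x]{\widehat{\cA}(x)}=\inp{\widehat{\cA}}.
\]
Since $(G,\rho)$ is a $(n,w,\eps)$-robust PRPD, $\widehat{\cA}$ is an $\eps$-robust approximator of $M_{0..n}$, i.e.\ $\norm[r]{\widehat{\cA}-M_{0..n}}\le\eps$. By \Cref{claim:norms}, $\norm{\widehat{\cA}-M_{0..n}}\le\norm[r]{\widehat{\cA}-M_{0..n}}\le\eps$, and $\norm{\widehat{\cA}-M_{0..n}}=\ninf{\inp{\widehat{\cA}}-M_{0..n}}=\ninf{\inp{\overline{\cA}}-M_{0..n}}$ by the displayed identity. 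Hence $\overline{\cA}$ is an $\eps$-approximator of $M_{0..n}$, as claimed.

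I do not expect any real obstacle here: the claim is essentially a sanity check that flattening preserves both the weight and the (non-robust) accuracy, and the only point requiring mild care is keeping track of the domains of $\widehat{\cA}$, $\cA$ and $\overline{\cA}$ so that the re-association of averages is literally an equality rather than merely an approximation. I would include it precisely so that the recursive construction in \Cref{sec:main-construction} can invoke these facts freely without re-deriving them.
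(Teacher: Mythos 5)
Your proof is correct and follows essentially the same route as the paper's: bound $\ninf{\cA(x,y)}$ by $\wt(G,\rho)$ using the stochasticity of each $M_{0..n}^{G(x,y,i)}$ and sub-additivity, propagate this to $\widehat{\cA}$ and $\overline{\cA}$, and deduce the approximator property from $\inp{\overline{\cA}}=\inp{\widehat{\cA}}$ together with \Cref{claim:norms}. Your write-up is in fact cleaner than the paper's (which uses an undefined symbol $\cA'$ for what is evidently $\widehat{\cA}$), but the argument is the same.
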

\begin{proof}
Recall that for every string $r\in\bits{n}$, $\norm{M_{0..n}^r}=1$. By sub-additivity of $\norm{\cdot}$ we have $\norm{\cA(x,y)}\le\wt(G,\rho)$ for every $x,y$, which implies $\wt(\overline{\cA})\le\wt(G,\rho)$. By sub-additivity and scalibility of $\norm{\cdot}$, we have $\wt(\cA')\le \wt(\cA)$. To show that $\overline{\cA}$ is a $\eps$-approimxator of $M_{0..n}$, observe that $\cA'$ is also an $\eps$-approximator of $M_{0..n}$ by \Cref{claim:norms}, and note that $\inp{\cA}=\inp{\cA'}$.
\end{proof}
\noindent
Now we prove our main lemma. The following lemma allows us to construct robust PRPDs for $(2m,w)$ ROBPs from robust PRPDs for $(m,w)$ ROBPs, without increasing the seed length too much. We will recursively apply this lemma for $\log n$ levels to get a $(n,w,\eps)$-robust PRPD. The basic idea is as described in \Cref{lemma:main-informal}.
\begin{lemma}\label{lemma:main-approx}
Suppose there exists $\sout,\si$ such that the following conditions hold.
\begin{itemize}
\item
For every $0\le i\le k$, there exists a $(m,w,\gamma^{i+1})$-robust PRPD $(G_i,\rho_i)$ s.t. $\wt(G_i,\rho_i)\le \binom{m-1}{i}$ and $\sout(G,\rho)\le \sout$. Moreover, for every $0\le i\le \ceil{k/2}$, $s(G_i,\rho_i) \le \sout$.
\item
For every $i\le \ceil{k/2}$, there exists a $(\eps_i,\delta)$-sampler $g_i:\bits{\sout}\times\bits{d_i}\to\bits{s(G_i,\rho_i)}$, where $\eps_i\le \gamma^{i+1}/(w\cdot\binom{m-1}{i})$ and $\delta\le \gamma^{k+1}/(w^2\cdot \binom{2m-1}{i})$.
\item
For every $i\ge j\ge 0$ s.t. $i+j\le k$, if $j\le i\le \ceil{k/2}$, then $d_i+d_j\le \si$. If $i>\ceil{k/2}$, then $\si(G_i,\rho_i)+d_j\le \si $.
\end{itemize}
Then there exists a $(2m,w,(11\gamma)^{k+1})$-robust PRPD $(G,\rho)$ s.t. $\sout(G,\rho)=\sout$, $\si(G,\rho)=\si$ and $\wt(G,\rho)\le \binom{2m-1}{k}$.
\end{lemma}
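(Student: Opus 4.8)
The plan is to realize the informal \Cref{lemma:main-informal} literally. I fix an arbitrary $(2m,w)$-ROBP with matrix representation $[M]_{[2m]}^{\bit}$ and write $A=M_{0..m}$, $B=M_{m..2m}$, so $AB=M_{0..2m}$ and $\norm{A}=\norm{B}=1$. Let $\cA_i$ (resp.\ $\cB_i$) be the matrix form of $(G_i,\rho_i)$ on $M_{0..m}$ (resp.\ $M_{m..2m}$), with $\widehat{\cA_i},\widehat{\cB_i}$ the robust matrix forms and $\overline{\cA_i},\overline{\cB_i}$ the flattened matrix forms; by \Cref{claim:std-approx} these satisfy $\norm[r]{\widehat{\cA_i}-A}\le\gamma^{i+1}$, $\norm{\inp{\overline{\cA_i}}-A}\le\gamma^{i+1}$, $\wt(\widehat{\cA_i}),\wt(\overline{\cA_i})\le\binom{m-1}{i}$, and symmetrically for $\cB_i$. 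For an outer seed $x$ I set $A_{x,i}=\widehat{\cA_i}(x)$, $B_{x,i}=\widehat{\cB_i}(x)$ when $i>\ceil{k/2}$ and $A_{x,i}=\ex[s]{\overline{\cA_i}(g_i(x,s))}$, $B_{x,i}=\ex[s]{\overline{\cB_i}(g_i(x,s))}$ when $i\le\ceil{k/2}$, and define
\[
\cC_k(x)=\sum_{i+j=k}A_{x,i}B_{x,j}-\sum_{i+j=k-1}A_{x,i}B_{x,j}.
\]
Each of the $2k+1$ summands $A_{x,i}B_{x,j}$ is a product of two quantities, each an average over a short seed of a $\pm1$-combination of at most $\binom{m-1}{i}$ (resp.\ $\binom{m-1}{j}$) transition matrices; using the scaling/summation/multiplication dictionary of \Cref{sec:matrices} and padding seeds where needed (\Cref{rmk:mon}), I package the whole expression into one pair $(G,\rho)$ whose robust matrix form on $M_{0..2m}$ equals $x\mapsto\cC_k(x)$, with outer seed $x$, a single $\si$-bit inner seed partitioned among the summands, and all coefficients equal to $\pm1$ times a common weight.

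\textbf{Seed length and weight.} For a summand with $i\ge j$ the inner seed has length $\si(G_i,\rho_i)+d_j$ when $i>\ceil{k/2}$ and $d_i+d_j$ when $i\le\ceil{k/2}$, and symmetrically when $i<j$; by the third hypothesis each is $\le\si$, so $\si(G,\rho)=\si$, while $\sout(G,\rho)=\sout$ since $\sout(G_i,\rho_i)\le\sout$ and each $g_i$ has source length exactly $\sout$. For the weight, at a fixed inner seed the $(i,j)$-summand is a $\pm1$-combination of at most $\wt(G_i,\rho_i)\wt(G_j,\rho_j)\le\binom{m-1}{i}\binom{m-1}{j}$ transition matrices of $M_{0..2m}$, and by Vandermonde's identity and Pascal's rule
\[
\sum_{i+j=k}\binom{m-1}{i}\binom{m-1}{j}+\sum_{i+j=k-1}\binom{m-1}{i}\binom{m-1}{j}=\binom{2m-2}{k}+\binom{2m-2}{k-1}=\binom{2m-1}{k},
\]
so $\wt(G,\rho)\le\binom{2m-1}{k}$. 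Explicitness is inherited from the $(G_i,\rho_i)$ and the samplers (\Cref{lemma:samp-opt}).

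\textbf{Error bound.} It remains to show $\ex[x]{\norm{\cC_k(x)-AB}}\le(11\gamma)^{k+1}$. I expand $\cC_k(x)-AB$ using the algebraic identity from the proof of \Cref{lemma:main},
\[
\cC_k(x)-AB=\sum_{i+j=k}(A-A_{x,i})(B-B_{x,j})-\sum_{i+j=k-1}(A-A_{x,i})(B-B_{x,j})+(A_{x,k}-A)B+A(B_{x,k}-B),
\]
and bound $\ex[x]{\norm{\cdot}}$ termwise. For a product term $\ex[x]{\norm{(A-A_{x,i})(B-B_{x,j})}}$ I apply the appropriate sampler-product lemma to the mappings $x\mapsto A-A_{x,i}$ and $x\mapsto B-B_{x,j}$: \Cref{lemma:sym} when $i,j\le\ceil{k/2}$, \Cref{lemma:left} when $i>\ceil{k/2}$ (left factor raw, right factor sampled through $g_j$), and \Cref{lemma:right} when $j>\ceil{k/2}$. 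The relevant quantities are $\norm[r]{\widehat{\cA_i}-A}\le\gamma^{i+1}$, $\norm{\inp{\overline{\cB_j}}-B}\le\gamma^{j+1}$ (robust approximator $\Rightarrow$ approximator, \Cref{claim:norms}), weights $\le1+\binom{m-1}{i}\le2\binom{m-1}{i}$, and $2w(1+\binom{m-1}{j})\eps_j\le4\gamma^{j+1}$ from $\eps_j\le\gamma^{j+1}/(w\binom{m-1}{j})$; hence the \emph{main} part of each product term contributes $O(\gamma^{(i+1)+(j+1)})=O(\gamma^{k+1})$, the boundary terms $(A_{x,k}-A)B$ and $A(B_{x,k}-B)$ contribute $O(\gamma^{k+1})$ each (directly from the robust-approximator bound when $k>\ceil{k/2}$, and via \Cref{lemma:mat-samp-general} when $k\le\ceil{k/2}$), and the \emph{failure} parts sum, by the same telescoping as for the weight, to $O(w^2\delta)\binom{2m-1}{k}=O(\gamma^{k+1})$ by the choice of $\delta$. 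Collecting the $O(k)$ contributions, each $O(\gamma^{k+1})$, and absorbing the polynomial-in-$k$ factor into the exponentially growing base $11$ (using that $\gamma$ is small, which is the regime of interest since ultimately $\gamma=1/\poly(n)$), I get $\ex[x]{\norm{\cC_k(x)-AB}}\le(11\gamma)^{k+1}$; thus $(G,\rho)$ is a $(2m,w,(11\gamma)^{k+1})$-robust PRPD with the stated parameters.

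\textbf{Main obstacle.} The delicate part is the error bound: for each of the $2k+1$ products one must select the correct one of the three sampler-product lemmas and feed it the correct \emph{implicit small-norm} mapping, and then verify that the error contributions --- especially the $\delta$-failure terms aggregated over all products --- telescope against the weight $\binom{2m-1}{k}$ tightly enough that the base constant $11$ suffices for every $k$. Preventing the inner seed from aggregating relies entirely on flattening (the samplers must be applied to $\overline{\cA_i},\overline{\cB_j}$, never to $\cA_i,\cB_j$), so I would track that carefully; the combinatorial skeleton, however, is just \Cref{lemma:main} together with the sampler argument.
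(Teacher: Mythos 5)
Your proposal is correct and follows essentially the same route as the paper's proof: the same construction (flattened forms fed through samplers for $i\le\ceil{k/2}$, raw robust forms for $i>\ceil{k/2}$), the same Vandermonde/Pascal weight count, and the same case analysis via \Cref{lemma:sym}, \Cref{lemma:left}, and \Cref{lemma:right} applied to the implicit small-norm mappings. The only differences are cosmetic — the paper handles the boundary terms $(A_{x,k}-A)B$ for $k\le 1$ via \Cref{lemma:right} with a dummy mapping rather than \Cref{lemma:mat-samp-general}, and tracks the constants explicitly to reach $(10k+11)\gamma^{k+1}\le(11\gamma)^{k+1}$ where you absorb them slightly more loosely.
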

\begin{proof}
Fix any $(2m,w)$-ROBP with matrix representation $M_{[2m]}^{\bit}$. Let $A=M_{0..m}$ and $B=M_{m..2m}$.
 For every $0\le i\le k$, let $\cA_i,\widehat{\cA}_i,\overline{\cA}_i$ denote the matrix form, robust matrix form and flattened matrix form of $(G,\rho)$ on $M_{0..m}$ respectively. Let $\cB_i,\widehat{\cB}_i,\overline{\cB}_i$ denote the matrix form, robust matrix form and flattened matrix form of $(G,\rho)$ on $M_{m..2m}$ respectively. By definition, $\widehat{\cA}_i$ and $\widehat{\cB}_i$ are $\gamma^{i+1}$-robust approximator for $A$ and $B$ respectively. By \Cref{claim:std-approx}, $\overline{\cA}_i$ and $\overline{\cB}_i$ are $\gamma^{i+1}$-approximator for $A$ and $B$ respectively. Moreover, we will increase the outer seed length of $\cA_i$ and $\cB_i$ to match the length of the given input when necessary. (See \Cref{rmk:mon})
 
Now for every $x,y$ we define a mapping $\cC_k:\bits{\sout}\times\bits{\si}\to\bbR^{w\times w}$ as follows. Note that $\cC_k$ corresponds to the matrix form of $(G,\rho)$ on $M_{0..2m}$.
\begin{enumerate}[(1)]
\item
For every $0\le i\le \ceil{k/2}$, let $a_i$ be the prefix of $y$ of length $d_i$ and $b_i$ be the suffix of $y$ of length $d_i$. Define $A_{x,y,i}=\overline{\cA_i}(g_i(x,a_i))$ and $B_{x,y,i}=\overline{\cB_i}(g_i(x,b_i))$.
\item
For every $\ceil{k/2}<i\le k$, let $a_i$ be the prefix of $y$ of length $\si(\cA_i)$ and $b_i$ be the suffix of $y$ of length $\si(\cB_i)$. Define $A_{x,y,i}=\cA_i(x,a_i)$ and $B_{x,y,i}=\cB_i(x,b_i)$.
\item
Define $\cC_k(x,y)=\sum_{i+j=k} A_{x,y,i}B_{x,y,j}-\sum_{i+j=k-1}A_{x,y,i}B_{x,y,j}$.
\end{enumerate}
Note that for every $i+j\le k$, prefix $a_i$ and suffix $b_j$ of $y$ never overlap.

By expanding every $A_{x,y,i}B_{x,y,j}$ term with distributive law, we can see that each small term in $A_{x,y,i}B_{x,y,j}$ has coefficient $\pm 1$, which satisfies the first condition of robust PRPD. Moreover, the total number of terms after expanding is 
$$\wt(\cC_k)\le \sum_{i+j=k}\binom{m-1}{i}\cdot \binom{m-1}{j} + \sum_{i+j=k-1}\binom{m-1}{i}\cdot \binom{m-1}{j}= \binom{2m-1}{k}.$$
It remains to show that $\cC_k$ satisfies the second condition of robust PRPD, i.e. $\ex[y]{\cC_k(x,y)}$ is a good approximation of $M_{0..2m}=AB$ on average over $x$. Observe that 
\begin{align*}
\ex[x]{\norm{\ex[y]{\cC_k(x,y)}-AB}}
&=\ex[x]{\norm{\ex[y]{\cC_k(x,y)-AB}}}\\
&\le \sum_{i+j=k} \ex[x]{\norm{\ex[y]{(A_{x,y,i}-A)(B_{x,y,j}-B)}}} \\
&+ \sum_{i+j=k-1} \ex[x]{\norm{\ex[y]{(A_{x,y,i}-A)(B_{x,y,j}-B)}}}\\
&+ \ex[x]{\norm{\ex[y]{(A_{x,y,k}-A)B}}} + \ex[x]{\norm{\ex[y]{A(B_{x,y,k}-B)}}},
\end{align*}
by decomposing $\cC_k(x,y)-AB$ with the equation in the proof of \Cref{lemma:main} and applying sub-additivity of $\norm{\cdot}$. 

First we consider the last two terms. Since $\norm{B}=1$, by sub-multiplicativity we have 
$$\ex[x]{\norm{\ex[y]{(A_{x,y,k}-A)B}}}\le \ex[x]{\norm{\ex[y]{A_{x,y,k}-A}}}.$$
Now consider two cases. If $k\ge 2$, then 
$$\ex[x]{\norm{\ex[y]{A_{x,y,k}-A}}}=\ex[x]{\norm{\widehat{\cA}_k(x)-A}}\le \gamma^{k+1}$$ by definition. If $k<2$, then 
$$\ex[x]{\norm{\ex[y]{A_{x,y,k}-A}\cdot B}}=\ex[x]{\norm{\ex[a_k]{\overline{\cA_k}(g_k(x,a_k))-A}}\cdot B}.$$
Apply \Cref{lemma:right} on $\overline{\cA_k}-A$ and the dummy mapping $\cB$ s.t. $\cB(x)=B$ for every $x$, we can derive that the above formula is bounded by $w^2\delta\binom{m-1}{k}+3\gamma^{i+1}$. For the term $\ex[x]{\norm{\ex[y]{A(B_{x,y,k}-B)}}}$ we can get the same bound with a similar proof.

Now consider the terms in the form $\ex[x]{\norm{\ex[y]{(A_{x,y,i}-A)(B_{x,y,j}-B)}}}$. \\
First consider the case $i,j\le\ceil{k/2}$. Then
\begin{align*}
&\ex[x]{\norm{\ex[y]{(A_{x,y,i}-A)(B_{x,y,j}-B)}}}\\
&=\ex[x]{\norm{\ex[a_i]{\overline{\cA_i}(g_i(x,a_i))-A}\ex[b_j]{\overline{\cB_k}(g_j(x,b_j))-B}}}\textrm{ (since }a_i,b_j\textrm{ don't overlap)}\\
&\le 2w^2\delta \cdot \binom{m-1}{i}\cdot\binom{m-1}{j}+9\gamma^{i+j+2}.\textrm{ (by \Cref{lemma:sym})}
\end{align*}
Next consider the case $i>\ceil{k/2},j\le \ceil{k/2}$. Then 
\begin{align*}
&\ex[x]{\norm{\ex[y]{(A_{x,y,i}-A)(B_{x,y,j}-B)}}}\\
&=\ex[x]{\norm{\widehat{\cA}_i(x)\cdot\ex[b_j]{\overline{\cB_k}(g_j(x,b_j))-B}}}\textrm{ (since }a_i,b_j\textrm{ don't overlap)}\\
&\le w^2\delta \cdot \binom{m-1}{i}\cdot\binom{m-1}{j}+3\gamma^{i+j+2}.\textrm{ (by \Cref{lemma:right})}
\end{align*}
Similarly for the case that $i\le \ceil{k/2},j>\ceil{k/2}$ we can show that 
$$\ex[x]{\norm{\ex[y]{(A_{x,y,i}-A)(B_{x,y,j}-B)}}}\le w^2\delta \cdot \binom{m-1}{i}\cdot\binom{m-1}{j}+3\gamma^{i+j+2}$$
by \Cref{lemma:left}. Finally, note that the case $i,j>\ceil{k/2}$ does not exist because $i+j\le k$. 

Taking the summation of all the cases, we get 
\begin{align*}
&\ex[x]{\norm{\ex[y]{\cC_k(x,y)}-AB}}\\
&\le 2w^2\delta\cdot\left(\sum_{i+j=k}\binom{m-1}{i}\binom{m-1}{j}+\sum_{i+j=k-1}\binom{m-1}{i}\binom{m-1}{j}+\binom{m-1}{k}\right)\\
&+ (k+1)\cdot 9\gamma^{k+2}+ k\cdot 9\gamma^{k+1} + 2\cdot 3\gamma^{k+1} \\
&\le 4w^2\delta \cdot \binom{2m-1}{k} + (9k+9)\gamma^{k+2} + (9k+6)\gamma^{k+1}\\
&\le (10k+11)\gamma^{k+1}\\
&\le (11\gamma)^{k+1}.
\end{align*}
Moreover, note that $AB=M_{0..2m}$, and the construction of $\cC_k$ does not depend on the matrices $M_{[2m]}^{\bit}$. (See \Cref{sec:matrices} for how the arithmetic operations in $\cC_k(x,y)$ are translated back to operations on pseudo-distributions.) Therefore there exists a $(2m,w,(11\gamma)^{k+1})$-robust PRPD $(G,\rho)$. 
\end{proof}
\noindent
Finally we analyze the seed length of the recursive construction, and present the main theorem.
\begin{theorem}
There exists an explicit $(n,w,\eps)$-robust PRPD $(G,\rho)$ such that 
\begin{itemize}
\item 
$\sout(G,\rho)= O\left(\log(1/\eps)+ \log n \log (nw)\log\left(\frac{\log(1/\eps)}{\log n}\right)\right)$
\item
$\si(G,\rho)= O\left(\log(1/\eps)+ \log n \log (nw)\log\left(\frac{\log(1/\eps)}{\log n}\right)\right)$
\item
$\wt(G,\rho)=\poly(1/\eps)$
\end{itemize}
Moreover, for every $B$ the approximator $\cG$ has the same corresponding pseudodistribution.
\end{theorem}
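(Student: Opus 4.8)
## Proof Proposal

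The plan is to instantiate \Cref{lemma:main-approx} recursively, starting from a trivial base case at ROBP length $1$ and doubling the length $\log n$ times. At each level $\ell$ (so the ROBP length is $2^\ell$), I would maintain a family of robust PRPDs $(G_i^\pell, \rho_i^\pell)$ for $0 \le i \le k_\ell$, one for each ``precision level'' $i$, where $(G_i^\pell,\rho_i^\pell)$ is a $(2^\ell, w, \gamma^{i+1})$-robust PRPD of weight at most $\binom{2^\ell - 1}{i}$. I would set $\gamma = 1/\poly(nw)$ small enough that after the final union bound over the $\log n$ levels (each level inflating the error by the factor $11$ and a union bound contributing polynomial factors), the top-level error at the largest precision level $k = k_{\log n}$ is at most $\eps$; concretely, picking $k$ so that $(11\gamma)^{k+1} \le \eps$, i.e. $k = \Theta(\log(1/\eps)/\log(1/\gamma)) = \Theta(\log(1/\eps)/\log(nw))$, suffices. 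The base case at $\ell = 0$ is immediate: the length-$1$ transition matrix $M_{0..1} = \frac12(M_1^0 + M_1^1)$ is \emph{exactly} realized by the two-element bundle with coefficients $+1$, so we get a $(1,w,0)$-robust PRPD of weight $1$, which serves as $(G_i^{(0)},\rho_i^{(0)})$ for every $i$.

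The heart of the argument is bookkeeping the three parameters $\sout, \si, \wt$ through the recursion using \Cref{lemma:main-approx}. For the weight: the lemma gives $\wt(G^{(\ell+1)}_k) \le \binom{2^{\ell+1}-1}{k}$, which at the top level is $\binom{n-1}{k} \le \poly(1/\eps)$ since $k = O(\log(1/\eps))$ and $n = \poly(1/\eps)$ in the relevant regime (and more carefully one checks $\binom{n-1}{k} \le n^k = 2^{k \log n} \le \poly(1/\eps)$ because $k\log n = O(\log(1/\eps))$). For the \textbf{inner seed length}: the recursion in \Cref{lemma:main-approx} requires $d_i + d_j \le \si$ for the sampler degrees when $i,j \le \lceil k/2 \rceil$, and $\si(G_i) + d_j \le \si$ when $i > \lceil k/2\rceil$. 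By \Cref{lemma:samp-opt}, $d_i = O(\log\log(1/\delta) + \log(1/\eps_i))$; with $\delta = \gamma^{k+1}/(w^2\binom{2m-1}{i}) = 2^{-O(\log(1/\eps))}$ and $\eps_i = \gamma^{i+1}/(w\binom{m-1}{i})$, one gets $d_i = O(\log\log(1/\eps) + i\log(1/\gamma) + \log(w\binom{m-1}{i})) = O(\log\log(1/\eps) + i\log(nw))$. Summing the two largest contributing $d$'s and adding the inner seed length of the previous level, the recurrence for $\si$ at level $\ell$ is roughly $\si^{(\ell+1)} \le \si^{(\ell)} + O(k\log(nw) + \log\log(1/\eps))$, which unrolls to $\si = O(\log(1/\eps) + \log n \log(nw))$, absorbing the $\log\log$ term.

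The delicate part — and the main obstacle — is the \textbf{outer seed length}. \Cref{lemma:main-approx} demands that for $i \le \lceil k/2\rceil$ the \emph{full} seed length $s(G_i, \rho_i) \le \sout$, because the sampler $g_i$ feeds on the flattened form $\overline{\cA_i}$ whose domain has size $2^{s(G_i,\rho_i)}$, and moreover $\sout(G^{(\ell+1)}) = \sout$ must be large enough (the sampler of \Cref{lemma:samp-opt} needs $n = m + O(\log(1/\delta)) + O(\log(1/\eps_i))$ input length, where here $m = s(G_i,\rho_i)$). So each level of recursion forces $\sout$ to grow to accommodate the \emph{total} seed length $s = \sout + \si$ of the low-precision PRPDs from the previous level. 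Naively this makes $\sout$ blow up by $\si = \Theta(\log n\log(nw))$ at every level, giving $\sout = \Theta(\log^2 n \log(nw))$ — too large. The fix, as sketched in \Cref{sec:our-const}, is that flattening only occurs for precision levels $i \le \lceil k/2 \rceil$, so along any root-to-leaf path in the recursion tree the ``flatten and absorb $\si$ into $\sout$'' event happens only $O(\log k)$ times rather than $\log n$ times: after halving the precision budget $\log k$ times it hits $0$ and the corresponding subtree is the base case. Thus $\sout$ picks up the additive $\Theta(\log n\log(nw))$ overhead only $O(\log k) = O(\log\log(1/\eps))$ times, giving $\sout = O(\log(1/\eps) + \log\log(1/\eps)\cdot\log n\log(nw))$. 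The stated bound then follows by writing $\log\log(1/\eps) = \log(\log(1/\eps)/\log n) + \log\log n$ and observing that when $\log(1/\eps) \ge \log^3(nw)$ the first term $\log(1/\eps)$ dominates (so the factor is at most $\log\log(nw)$), while when $\log(1/\eps) < \log^3(nw)$ we have $\log\log(1/\eps) = O(\log\log(nw))$ directly; either way $\sout, \si = O(\log(1/\eps) + \log n\log(nw)\log\log(nw))$. I would carry this out by first stating the explicit choice of all parameters $(\gamma, k_\ell, \eps_i, \delta, d_i)$ as functions of the level, then verifying by induction that the three hypotheses of \Cref{lemma:main-approx} hold at every level with these choices, then unrolling the three recurrences, and finally checking explicitness — every object (the base-case bundle, the samplers of \Cref{lemma:samp-opt}, the prefix/suffix splitting of $y$, and the sum-of-products in step (3) of the lemma's proof) is computable in space linear in the seed length, and composition of $\log n$ such space-$O(s)$ maps is still space $O(s)$.
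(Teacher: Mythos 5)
Your overall strategy is the same as the paper's: recurse on \Cref{lemma:main-approx} over $\log n$ doubling levels, maintain at each level a family of robust PRPDs indexed by precision level $i\le k$, and track the three recurrences for $\sout$, $\si$ and $\wt$, with the key observation that the dominant chain for $\sout$ passes from $(h,k)$ to $(h-1,\ceil{k/2})$, so the additive $\Theta(\log n\log(nw))$ absorption of $\si$ into $\sout$ happens only $O(\log k)$ times. Your recurrences and resulting bounds match the paper's (the paper takes $\gamma=1/n^4$ and $k=\Theta(\log(1/\eps)/\log n)$ rather than $\gamma=1/\poly(nw)$ and $k=\Theta(\log(1/\eps)/\log(nw))$, but this difference is immaterial).

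There is, however, one concrete gap: the base case. You terminate the recursion only at length $1$ and propose to use the exact two-element bundle as $(G_i^{(0)},\rho_i^{(0)})$ for every $i$. But \Cref{lemma:main-approx} requires, for each $0\le i\le k$, an $(m,w,\gamma^{i+1})$-robust PRPD of weight at most $\binom{m-1}{i}$, and it calibrates the sampler accuracy as $\eps_i\le\gamma^{i+1}/(w\cdot\binom{m-1}{i})$. When the precision index exceeds the length ($i>m-1$) this binomial is $0$: no PRPD has weight $0$ and the sampler accuracy is ill-defined, so the lemma cannot be invoked near the bottom of your recursion tree, where $m$ is small but the full range $i\le k$ is still needed. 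The paper sidesteps this by declaring a terminal case whenever $2k\ge 2^h$ (not only at $h=0$): there one takes the trivial enumeration $G_{h,k}(x,y,i)=y$ with $\si=2^h\le 2k=O(\log(1/\eps)/\log n)$, error $0$ and weight $1$, which is affordable in seed length and keeps the weight bound $\max(1,\binom{2^h-1}{k})$ consistent throughout the induction. A second, more minor, imprecision: the top-level error condition should be $(11^{\log n}\gamma)^{k+1}\le\eps$, since the factor $11$ compounds once per level, not $(11\gamma)^{k+1}\le\eps$; with $\gamma$ a sufficiently small inverse polynomial this does not change the asymptotics, but as written your choice of $k$ is calibrated against the wrong quantity.
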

\begin{proof}
Let $c$ be the constant such that for every $\eps,\delta>0$ there exists a $(\eps,\delta)$-sampler $g:\bits{n}\times\bits{d}\to\bits{m}$ such that $n=m+c\log(1/\eps)+c\log(1/\delta)$ and $d=c\log (1/\eps)+c\log\log(1/\delta)$, as guaranteed in \Cref{lemma:samp-opt}. WLOG assume that $n$ is a power of $2$. Define $\gamma=1/n^{4}$. For every $0\le h \le \log n$, every $k\ge 0$, we will inductively prove that there exists a $(2^h,w,(11^h\gamma)^{k+1})$-robust PRPD $(G_{h,k},\rho_{h,k})$ with the following parameters.
\begin{itemize}
\item
If $k\le 1$, $\sout(G_{h,k},\rho_{h,k})\le h \cdot (3ck\log(n/\gamma) + 7c\log(w/\gamma))$
\item 
If $k>1$, $\sout(G_{h,k},\rho_{h,k})\le 4ck\log (n/\gamma)+(\ceil{\log k}+1)\cdot h \cdot (10c\log(w/\gamma))$
\item
If $k\le 1$, $\si(G_{h,k},\rho_{h,k})\le ck\log(n/\gamma) + 4c\log(w/\gamma)$
\item
If $k>1$, $\si(G_{h,k},\rho_{h,k})\le ck\log (n/\gamma)+ h \cdot (4c\log(kw/\gamma))$
\item
$\wt(G_{h,k},\rho_{h,k})\le \max(1,\binom{2^h-1}{k})$
\end{itemize}
We will write $\sout[,h,k]=\sout(G_{h,k},\rho_{h,k})$ and $\si[,h,k]=\sout(G_{h,k},\rho_{h,k})$ for short. First consider the terminal case $2k\ge 2^h$ or $h=0$. In this case we simply take $\sout[,h,k]=0$, $\si[,h,k]=2^h \le 2k$ and $\wt(G_{h,k},\rho_{h,k})=1$ s.t. $G_{h,k}(x,y,i)=y$ and $\rho_{h,k}(x,y,i)=1$. For the other cases, we show that we can get the intended parameters by constructing $(G_{h,k},\rho_{h,k})$ with the recursion in \Cref{lemma:main-approx}. Note that based on the induction hypothesis we can assume $\cG_{a,h-1,k}$ and $\cG_{a+2^{h-1},h-1,k}$ have exactly the same parameters, so we consider the parameter of $\cG_{a,h-1,k}$ only. We have seen that the bound for $\wt(G_{h,k},\rho_{h,k})$ is correct. First we show that the bound for $\si[,h,k]$ is correct. Recall that in the recursion we take parameters $d_i=c\log(1/\eps_i)+c\log\log(1/\delta)\le ci\log(n/\gamma)+2c\log(knw/\gamma)$, based on the fact that $\binom{2^h-1}{i}\le n^i$. Now consider the restriction on $\si(\cG_k)$ in our recursion. For $i+j\le k$ and $j\le i\le\ceil{k/2}$, we need 
$$d_i+d_j\le ck\log(n/\gamma)+4c\log(knw/\gamma)\le\si[,h,k]$$ which is true. For $i+j\le k$ and $i>\ceil{k/2}$, we need 
\begin{align*}
\si[,h-1,i]+d_j
&\le ci\log(1/\gamma)+(h-1)\cdot 4c\log(inw/\gamma)+(cj\log(1/\gamma)+2c\log(knw/\gamma))\\
&\le ck\log(1/\gamma) +h\cdot 4c\log(knw/\gamma)\\
&\le \si[,h,k]
\end{align*}
which is also true. Moreover, observe that when $k\le 1$ it is always the case that $i,j\le\ceil{k/2}$. Therefore the third condition is also true. Finally we show that the bound for $\sout[,h,k]$ is also correct. First observe that the restriction $\sout[,h-1,i]\le \sout[,h,k]$ is trivially true. Then the only condition left is that for every $i\le\ceil{k/2}$,
$$\sout[,h-1,i]+\si[,h-1,i]+c\log(1/\delta)+c\log(1/\eps_i)\le \sout[,h,k].$$ Since $\sout[,h-1,i]\le \sout[,h-1,\ceil{k/2}]$ and $\si[,h-1,i]\le \si[,h-1,\ceil{k/2}]$ for every $i$, it suffices to show that 
$$\sout[,h-1,\ceil{k/2}]+\si[,h-1,\ceil{k/2}]+c\log(1/\delta)+c\log(1/\eps_{\ceil{k/2}})\le \sout[,h,k].$$
First we consider $k\le 1$, which is the case that $\ceil{k/2}=k$. Then 
\begin{align*}
&\sout[,h-1,\ceil{k/2}]+\si[,h-1,\ceil{k/2}]+c\log(1/\delta)+c\log(1/\eps_{\ceil{k/2}})\\
&\le \sout(\cG_{a,h-1,k}) + 3ck\log(n/\gamma) + 7c\log(n/\gamma)\\
&\le h \cdot (3ck\log(n/\gamma) + 7c\log(n/\gamma)) \\
&\le  \sout[,h,k].
\end{align*}
Finally we consider the case $k>1$. Observe that 
\begin{align*}
&\sout[,h-1,\ceil{k/2}]+\si[,h-1,\ceil{k/2}]+c\log(1/\delta)+c\log(1/\eps_{\ceil{k/2}})\\
&\le \sout[,h-1,\ceil{k/2}]+\si[,h-1,\ceil{k/2}] + \frac{3k+1}{2}\cdot c\log(n/\gamma)+3c\log(w/\gamma)\\
&\le \sout[,h-1,\ceil{k/2}]+ (2k+1)\cdot c\log(n/\gamma) + (h-1)\cdot 4c\log(w/\gamma) + 7c\log(w/\gamma) \\
&\le  4c\cdot\frac{k+1}{2}\cdot\log(n/\gamma) + \left(\ceil{\log \ceil{\frac{k}{2}}}+1\right)\cdot (h-1) \cdot (10c\log(nw/\gamma)) \\
&+ (2k+1)\cdot c\log(n/\gamma) + (h-1)\cdot 4c\log(w/\gamma) + 7c\log(w/\gamma) \\
&\le 4ck\log(n/\gamma) + \left(\ceil{\log \ceil{\frac{k}{2}}}+1\right)\cdot (h-1) \cdot (10c\log(nw/\gamma)) + h\cdot 10c\log(nw/\gamma)  \\
&\le \sout[,h,k].
\end{align*}
In the last inequality we use the fact that $\ceil{\log k}=\ceil{\log(\ceil{k/2})}+1$ for every $k>1$.

Finally, note that $(11^{\log n}\gamma)=n^{\log_2 11}\cdot n^{-4}\le n^{-0.5}$. By taking $h=\log n$ and  $k=\frac{\log(1/\eps)}{\log (1/n^{0.5})}$, we get a $(n,w,\eps)$-robust PRPD.
\end{proof}

\begin{remark}
To get the seed length we claimed in \Cref{thm:main}, observe that the $\log(1/\eps)$ term is dominating when $\log(1/\eps)\ge \log^3(nw)$. Therefore we can simply replace the $\log\log(1/\eps)$ factor on the $O(\log n\log (nw))$ term with $\log\log(nw)$.
\end{remark}

\section{Discussion and Open Questions}\label{sec:open-questions}
We discuss some natural questions that arise from our work.
\begin{itemize}
\item 
In our construction, we applied the sampler argument in \cite{BCG18} without constructing small-norm matrices explicitly. This is probably hinting that negative weight is not essentially required for the sampler argument. Is it possible to apply the sampler argument to construct a PRG (instead of PRPD) with improved dependency on error?
\item
Is there an explicit PRPD which matches the seed length of the hitting set generator in \cite{HZ18}, i.e. $O(\log(w/\eps))$ when $n=\poly\log(w)$? A possible direction is to adapt our construction to a $t$-ary recursion tree where $t=\log^{1-\Omega(1)} (n)$ instead of a binary tree, as in \cite{NZ96,Arm98}. However, a direct adaption requires us to apply samplers on $(t-1)$-children in each recursion, and for every sampler we need to pay some randomness for ``inner seed" which cannot be recycled. In our construction we see that the inner seed of a sampler contains a $\log w$ term. Therefore in each recursion we need to pay at least $(t-1)\log w$ which is too expensive. Is it possible to make the sampler argument work with a shorter inner seed?
\item
Is it possible to improve the seed length to $\tilde{O}(\log^2 n + \log (w/\eps))$, even in some restricted settings? We note that there are two things which cause the $\Omega(\log n\cdot\log w)$ term in our construction. The first one is the inner seed of sampler, which is related to the question above. The second one is the restriction on the outer seed length, which is analogous to ``entropy loss" if we view the samplers as extractors. Note that \cite{RR99} shows how to ``recycle entropy" in the INW generator in some restricted settings, but it is not clear how to apply the extractor-type analysis of INW generator in our construction.  
\end{itemize}


\bibliography{ref}
\bibliographystyle{alpha}
\appendix
\section{Using PRPDs in the Saks-Zhou Scheme}\label{sec:Saks-Zhou}
In this section, we will briefly introduce Saks and Zhou's proof for $\BPL\subseteq \L^{3/2}$~\cite{SZ99} and Armoni's trick for replacing Nisan's PRG with any PRG in this proof~\cite{Arm98}. Then we will see why a $\poly(nw/\eps)$-bounded PRPD suffices for this scheme. Since our purpose here is to go over the possible difference between using PRGs and PRPDs in this scheme, we will only include a sketch of Saks and Zhou's proof. We recommend interested readers to check \cite{SZ99,Arm98} for formal proofs and also \cite{HU17} for a beautiful summary.
\subsection{Saks and Zhou's Scheme}
It is well-known that derandomizing $\BPL$ can be reduced to approximating $M^{n}$ where $M$ is any $n\times n$ stochastic matrix. The first step of Saks and Zhou is to turn $M^n$ into the following recursive computation:
\begin{fact}
Let $n_1,n_2$ be integers such that $n_1^{n_2}=n$. Define $M_0=M$, and $M_i=M_{i-1}^{n_1}$ for every positive integer $i$. Then $M_{n_2}=M^n$.
\end{fact}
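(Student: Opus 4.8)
The plan is to prove by induction on $i$ the slightly more explicit statement that $M_i = M^{n_1^i}$ for every nonnegative integer $i$; the claimed identity $M_{n_2} = M^n$ then follows immediately by setting $i = n_2$ and invoking the hypothesis $n_1^{n_2} = n$.

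First I would handle the base case $i = 0$: the definition $M_0 = M$ together with $n_1^0 = 1$ gives $M_0 = M = M^{n_1^0}$. For the inductive step, assume $M_{i-1} = M^{n_1^{i-1}}$. Then by the recursive definition $M_i = M_{i-1}^{n_1}$, and substituting the inductive hypothesis yields
\[
M_i = \left(M^{n_1^{i-1}}\right)^{n_1} = M^{n_1^{i-1}\cdot n_1} = M^{n_1^{i}},
\]
where the middle equality is just the law of exponents $(A^a)^b = A^{ab}$ for a single square matrix $A$, which is an instance of associativity of matrix multiplication. This completes the induction, and taking $i=n_2$ finishes the proof.

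There is essentially no obstacle here: the only ingredients are associativity of matrix multiplication and the elementary exponent identity for powers of a fixed matrix, both of which hold for arbitrary square matrices, in particular for the stochastic matrix $M$. The point of the fact is purely organizational for what follows: it rewrites the single power $M^n$ as a depth-$n_2$ tower of $n_1$-th powers, so that $M^n$ can be approximated by repeatedly approximating an $n_1$-th power of an (approximately stochastic) matrix and feeding the result back in. The genuinely delicate part — controlling how approximation errors and the random perturbations compose across the $n_2$ levels — belongs to the later analysis of the Saks–Zhou scheme rather than to this fact.
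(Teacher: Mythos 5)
Your induction on $i$ showing $M_i = M^{n_1^i}$ is correct and is the standard argument; the paper states this as a Fact and omits the proof entirely, treating it as immediate from the law of exponents. Nothing is missing.
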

To approximate $M_{n_2}$, it suffices to compute $M_i=M_{i-1}^{n_1}$ with small enough error in each step. However, if we need $s$ bits of space to approximate the $n_1$-th power of a stochastic matrix, we will need $O(sn_2)$ bits of space in total. This doesn't really save any space (over approximating $M^n$ directly) if we approximate $M_{i-1}^{n_1}$ with PRGs such as Nisan's generators. The first idea of Saks and Zhou is to utilize the ``high probability" property of Nisan's generator: 
\begin{lemma}[\cite{Nis92}]
For every $n,w,\eps$ there exists an algorithm $\widehat{\mathrm{Pow}}_{n}$ which takes a $w\times w$ (sub)stochastic matrix $M$ and a string $y\in\bits{O(\log n \log(nw/\eps))}$ as input, and outputs a $w\times w$ matrix such that 
$$\pr[y]{\nmax{\widehat{\mathrm{Pow}}_n(M,y)-M^n}<\eps}\ge 1-\eps$$
in space $O(log(nw/\eps))$.
\end{lemma}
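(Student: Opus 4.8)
The plan is to derive the statement from the standard analysis of Nisan's pseudorandom generator in three steps: reduce approximating $M^n$ to computing the transition matrix of a width-$w$ ordered branching program; define $\widehat{\mathrm{Pow}}_n$ and bound its working space; and reprove the recursive ``works with high probability over the hash functions'' property underlying Nisan's generator, which is precisely the form asserted here. For the reduction, round each entry of $M$ down to the nearest multiple of $2^{-\ell}$ with $\ell=\ceil{\log(3nw/\eps)}$, obtaining a substochastic $M'$ with $\ninf{M-M'}<w2^{-\ell}$; since $\ninf{M},\ninf{M'}\le 1$ and $M^n-M'^n=\sum_{i<n}M^i(M-M')M'^{n-1-i}$, submultiplicativity of $\ninf{\cdot}$ gives $\ninf{M^n-M'^n}<nw2^{-\ell}\le\eps/3$, hence $\nmax{M^n-M'^n}\le\eps/3$. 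Now $M'=\ex[u\in\bits{\ell}]{N_u}$ where each $N_u$ is a $0/1$ substochastic matrix (at most one nonzero per row), so $M'^n=\ex[u_1,\dots,u_n]{N_{u_1}\cdots N_{u_n}}$, and the map sending a sequence of $n$ blocks of $\ell$ bits to $N_{u_1}\cdots N_{u_n}$ is the transition matrix of a width-$w$ ordered branching program reading $n$ blocks. (Assume $n$ is a power of two, padding with identity steps otherwise.)

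Next, let $G$ be Nisan's generator for block length $\ell$ and $n$ blocks, with seed $(a,h_1,\dots,h_k)$ where $k=\ceil{\log n}$, $a\in\bits{\ell}$, and each $h_i$ is drawn from a pairwise-independent hash family on $\bits{\ell}$ described by $O(\ell)$ bits; the hash part, of length $O(k\ell)=O(\log n\cdot\log(nw/\eps))$, will be the seed $y$. Define $\widehat{\mathrm{Pow}}_n(M,y)$ to perform the rounding $M\mapsto M'\mapsto\{N_u\}$ internally and to output, entrywise, the matrix $\ex[a\in\bits{\ell}]{N_{G(a,y)_1}\cdots N_{G(a,y)_n}}$. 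To produce entry $(i,j)$ we loop $a$ over $\bits{\ell}$ maintaining an $(\ell{+}1)$-bit counter, and for each $a$ simulate the walk from state $i$; crucially, the $t$-th block of $G(a,y)$ (with $t=t_k\cdots t_1$ in binary) equals $h_1^{[t_1]}(h_2^{[t_2]}(\cdots h_k^{[t_k]}(a)\cdots))$, where $h^{[1]}=h$ and $h^{[0]}=\mathrm{id}$, so it is a composition of at most $k$ hash evaluations and can be generated bit by bit in $O(\ell)$ reused working space by reading the $h_i$'s off the input tape. The total working space is therefore $O(\ell+k+\log w)=O(\log(nw/\eps))$.

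It remains to show $\pr[y]{\nmax{\widehat{\mathrm{Pow}}_n(M,y)-M'^n}<2\eps/3}\ge 1-\eps$, which combined with the rounding bound finishes the proof. This is Nisan's lemma, which I would reprove by induction on levels: for a sub-program covering $2^t$ consecutive blocks, with transition-matrix-valued function $F$ and true value $\overline F$ (so $\overline F=M'^n$ at the top level), set $\widetilde F(h_{\le t})=\ex[a]{F(G_t(a,h_{\le t}))}$ and claim $\pr[h_{\le t}]{\ninf{\widetilde F(h_{\le t})-\overline F}>\eps_t}\le\delta_t$ with $\eps_t\le 2^{t}\eps'$, $\delta_t\le 2^{t}\delta'$. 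The base case $t=0$ is exact. For the step, write $F=F^LF^R$ (product through the middle layer) and use $G_t(a,h_{\le t})=G_{t-1}(a,h_{<t})\Vert G_{t-1}(h_t(a),h_{<t})$, so $\widetilde F(h_{\le t})=\ex[a]{F^L(G_{t-1}(a,h_{<t}))\,F^R(G_{t-1}(h_t(a),h_{<t}))}$; a Chebyshev/collision argument using pairwise independence of $h_t$ shows that for all but a $\delta'=\poly(w)2^{-\Omega(\ell)}$ fraction of $h_t$ this is entrywise within $\eps'=\poly(w)2^{-\Omega(\ell)}$ of $\widetilde{F^L}(h_{<t})\cdot\widetilde{F^R}(h_{<t})$, and then the induction hypothesis on each factor (using $\ninf{\widetilde{F^L}},\ninf{\overline{F^R}}\le 1$ and $\overline{F^L}\,\overline{F^R}=\overline F$), the triangle inequality, and a union bound over the two halves give $\eps_t\le 2\eps_{t-1}+\eps'$ and $\delta_t\le 2\delta_{t-1}+\delta'$. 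Taking $t=k$ gives $\eps_k\le 2n\eps'$ and $\delta_k\le 2n\delta'$, and choosing $\ell=\Theta(\log(nw/\eps))$ with a large enough constant makes the $2^{-\Omega(\ell)}$ factors beat the $\poly(n,w,1/\eps)$ losses (including the $w^2$ entrywise union bound, folded into $\delta'$), so $\eps_k\le 2\eps/3$ and $\delta_k\le\eps$.

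The main obstacle is this last step: one must control how the failure probability $\delta_t$ and the magnitude error $\eps_t$ compound multiplicatively over the $\log n$ levels while the block length $\ell$ — hence the per-level seed cost — stays $O(\log(nw/\eps))$, i.e.\ one needs the pairwise-independent mixing estimate to be quantitatively strong enough ($2^{-\Omega(\ell)}$) to survive $\log n$ compositions together with the $w^2$ entrywise union bound. A secondary point needing care, already noted above, is the space bound: blocks of $G$ must be generated via the explicit ``composition of at most $\log n$ hashes'' formula rather than by unrolling the recursion, so that the working space is $O(\log(nw/\eps))$ and not $O(\log n\cdot\log(nw/\eps))$.
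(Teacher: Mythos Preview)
The paper does not prove this lemma; it is simply cited from \cite{Nis92} and used as a black box in the appendix on the Saks--Zhou scheme. Your proposal is a correct reconstruction of Nisan's original analysis: the rounding reduction to a branching program over $\ell$-bit blocks, the definition of $\widehat{\mathrm{Pow}}_n$ as the average over the short seed $a$ with the hash tuple serving as the offline randomness $y$, the $O(\log(nw/\eps))$ space bound via the ``composition of at most $\log n$ hashes'' formula for each output block, and the level-by-level mixing induction are exactly the standard ingredients. The quantitative concern you flag at the end---that the per-level pairwise-independence error $\poly(w)2^{-\Omega(\ell)}$ must survive $\log n$ doublings plus a $w^2$ entrywise union bound---is indeed resolved by the choice $\ell=\Theta(\log(nw/\eps))$ with a sufficiently large constant, just as you indicate.
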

\noindent
In other word, derandomization with Nisan's generator has the following structure. First it fixes an ``offline randomness" $y\in\bits{r}$ and considers it as a part of input. Then it takes $s$ bits of additional ``processing space" to compute an approximation of $M^n$. Then the output will be a good approximation with high probability over $y$. (This is called an ``offline randomized algorithm" in \cite{SZ99}.) Furthermore $s\ll r$. With these properties, the main idea of Saks and Zhou is to reuse the same offline randomness for each level of recursion. If computing $M^{n_1}$ takes $r$ bits of offline randomness and $s$ bits of processing space, then computing $M^n$ will take $r$ bits of offline randomness and $O(sn_2)$ bits of processing space. The space complexity would be $O(r+sn_2)$ which is better than approximating $M^n$ directly since the offline randomness part was the original bottleneck. 

However, there's a problem in this construction: if we compute $\widehat{M_1}=\widehat{\mathrm{Pow}}_{n_1}(M,y)$, and try to use $\widehat{\mathrm{Pow}}_{n_1}(\widehat{M_1},y)$ to approximate $M_2=M_1^{n_1}$, it might be possible that $\widehat{\mathrm{Pow}}_{n_1}(\widehat{M_1},y)$ is always a bad approximation because $\widehat{M_1}$ depends on $y$. To resolve this issue, the second idea of Saks and Zhou is to break the dependency with a randomized rounding operation. We will borrow the name ``snap" from \cite{HU17} for this operation.
\begin{definition}
Given value $x\in\bbR$, string $y\in \bits{d}$, define $$\mathrm{Snap}_{d}(x,y)=\max(\floor{x\cdot 2^d-2^{-d}y}\cdot 2^{-d},0).$$ For a $w\times w$ matrix $M$, define $\mathrm{Snap}_{d}(M,y)$ to be the matrix $M'$ such that $M'_{i,j}=\mathrm{Snap}_{d}(M_{i,j},y)$ for every $i,j\in [w]$.
\end{definition}
\noindent
In other word, in a snap operation, we randomly perturb the matrix with a offset in $[0,2^{-2d}]$, then round the entries down to $d$ bits of precision. It's not hard to prove the following lemma:
\begin{lemma}[\cite{SZ99}]
For any matrix $M,M'$ such that $\nmax{M-M'}\le \eps$, $$\pr[y]{\mathrm{Snap}_{d}(M,y)\neq \mathrm{Snap}_{d}(M',y)}\le w^2 (2^d\eps+2^{-d}).$$
\end{lemma}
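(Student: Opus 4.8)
The plan is to reduce the statement to a one-dimensional estimate and then settle it by an elementary counting argument. First, by the union bound over the $w^2$ coordinates, it suffices to prove that for any two reals $x,x'$ with $\abs{x-x'}\le\eps$,
\[
\pr[y\in\bits{d}]{\mathrm{Snap}_{d}(x,y)\neq\mathrm{Snap}_{d}(x',y)}\le 2^d\eps+2^{-d},
\]
since the event $\mathrm{Snap}_{d}(M,y)\ne\mathrm{Snap}_{d}(M',y)$ is the union over $(i,j)$ of the events $\mathrm{Snap}_{d}(M_{i,j},y)\ne\mathrm{Snap}_{d}(M'_{i,j},y)$, all driven by the same string $y$, and $\abs{M_{i,j}-M'_{i,j}}\le\eps$ for each $(i,j)$. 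Next, observe that $\mathrm{Snap}_{d}(x,y)$ depends on $x$ only through the integer $\floor{x\cdot 2^d-2^{-d}y}$, via the map $z\mapsto\max(z\cdot 2^{-d},0)$; hence whenever the two floors $\floor{x\cdot 2^d-2^{-d}y}$ and $\floor{x'\cdot 2^d-2^{-d}y}$ agree, so do the two snapped values, and it is enough to bound the probability that these two floors differ.

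For this, I would assume without loss of generality that $x\ge x'$, and also that $2^d\eps<1$ (otherwise the right-hand side is already at least $1$). Set $a=x\cdot 2^d$, $a'=x'\cdot 2^d$ and $\tau=2^{-d}y$. The two floors $\floor{a-\tau}$ and $\floor{a'-\tau}$ differ precisely when the half-open interval $(a'-\tau,a-\tau]$ contains an integer, and since $0\le a-a'=2^d(x-x')<1$ this is equivalent to $\{a'-\tau\}>1-(a-a')$, where $\{\cdot\}$ is the fractional part. As $y$ ranges uniformly over $\bits{d}$, the offset $\tau=2^{-d}y$ ranges uniformly over $\{0,2^{-d},\dots,(2^d-1)2^{-d}\}$, and a short computation shows that $\{a'-\tau\}$ then takes $2^d$ distinct values, equally spaced by $2^{-d}$ (a copy of the same grid shifted by a fixed offset in $[0,2^{-d})$), each with probability $2^{-d}$.

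It remains to count how many of these $2^d$ equally spaced values exceed the threshold $1-(a-a')$. An arithmetic progression of step $2^{-d}$ has at most $2^d\cdot(a-a')+1\le 2^{2d}\eps+1$ terms inside an interval of length $a-a'$, so the probability in question is at most $(2^{2d}\eps+1)/2^d=2^d\eps+2^{-d}$; the union bound over the $w^2$ coordinates then finishes the proof. The argument has no real obstacle; the only point requiring a bit of care is this final count, where one must keep track of the (arbitrary) offset of the progression relative to the grid so that the additive error term comes out as exactly $2^{-d}$ rather than a larger multiple of it.
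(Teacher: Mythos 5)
Your proof is correct and follows the same route as the paper, which gives only a terse sketch of exactly this argument: view the snap as rounding to a randomly shifted grid of spacing $2^{-d}$, note that two entries snap differently only if a grid point separates them, bound that probability by $2^d\eps+2^{-d}$ via the counting of shifts, and finish with a union bound over the $w^2$ entries. Your write-up just supplies the details the paper omits (the only nit is that the floors differ when $\{a'-\tau\}\ge 1-(a-a')$, not strictly greater, which does not affect the count).
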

\begin{proof}
The snap operation is equivalent to randomly choose a grid of length $2^{-d}$ and round each value to the closest grid point the left. Therefore two values $a,b$ are rounded to different points only if there is a grid point between them, which happens with probability at most $2^d|a-b|+2^{-d}$. By union bound and the fact that $\nmax{M-M'}\le \eps$ the lemma follows.
\end{proof}

With the lemma above, we can see that by taking $\widehat{M_1}=\mathrm{Snap}_d(\widehat{\mathrm{Pow}}_{n_1}(M,y),z)$ instead, $\widehat{M_1}$ will be equivalent to $\mathrm{Snap}_d(M^{n_1},z)$ with high probability, which is independent of $y$. Therefore we can use $y$ as the offline randomness to compute the $n_1$-th power of $\widehat{M_1}$. Moreover, if the rounding precision is high enough, the snapped matrix is still a good approximation. Finally we get Saks-Zhou theorem:
\begin{lemma}[\cite{SZ99}]\label{lemma:Saks-Zhou}
Let $n_1,n_2$ be integers such that $n_1^{n_2}=n$. Suppose there exists an offline randomized algorithm $\widehat{\mathrm{Pow}}_{n_1}$ which takes $r$ bits of randomness and $s$ bits of processing space such that for every substochastic matrix $M$, 
$$\pr[x]{\nmax{\widehat{\mathrm{Pow}}_{n_1}(M,y)-M^{n_1}}\le \eps}\ge 1-\eps.$$
Now consider uniform random bits $y\in\bits{r}$ and $z_1,z_2,\dots,z_{n_2}\in\bits{d}$. 
Let $\widehat{M_0}=M$, and $\widehat{M_i}=\mathrm{Snap}_d(\widehat{\mathrm{Pow}}_{n_1}(\widehat{M_{i-1}},y),z_i)$ for every $i\in[n_2]$. Then with probability at least $1-O(w^2 n_2(2^d\eps+2^{-d}))$ over $y,z_1,\dots,z_{n_2}$,
$$\norm{\widehat{M_{n_2}}-M^n}\le nw2^{-d+1}.$$
Moreover, the space complexity of computing $\widehat{M_{n_2}}$ is $O(r+n_2(s+d))$.
\end{lemma}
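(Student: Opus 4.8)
The crux of Saks and Zhou's argument is to \emph{decouple} the offline randomness $y$ from the iterates. My plan is to introduce an idealized sequence that never references $y$: set $N_0=M$ and $N_i=\mathrm{Snap}_d(N_{i-1}^{n_1},z_i)$ for $i\in[n_2]$, so that $N_i$ is a function of $M$ and $z_1,\dots,z_i$ only. I would then prove two separate statements: (i) with probability $1-O(w^2n_2(2^d\eps+2^{-d}))$ over $(y,z_1,\dots,z_{n_2})$ one has $\widehat{M_{n_2}}=N_{n_2}$; and (ii) \emph{deterministically}, $\norm{N_{n_2}-M^n}<nw2^{-d+1}$. Chaining (i) and (ii) gives the claimed error bound, and the space bound is a routine accounting of the recursion depth.

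For (i), I would induct on $i$ with the hypothesis $\pr{\widehat{M_i}\neq N_i}\le i\cdot(\eps+w^2(2^d\eps+2^{-d}))$. The base case $i=0$ is immediate since $\widehat{M_0}=M=N_0$. For the inductive step, bound $\pr{\widehat{M_i}\neq N_i}\le \pr{\widehat{M_{i-1}}\neq N_{i-1}}+\pr{\widehat{M_{i-1}}=N_{i-1}\text{ and }\widehat{M_i}\neq N_i}$. On the event $\widehat{M_{i-1}}=N_{i-1}$ we have $\widehat{M_i}=\mathrm{Snap}_d(\widehat{\mathrm{Pow}}_{n_1}(N_{i-1},y),z_i)$ while $N_i=\mathrm{Snap}_d(N_{i-1}^{n_1},z_i)$, so the second term is at most $\pr{\mathrm{Snap}_d(\widehat{\mathrm{Pow}}_{n_1}(N_{i-1},y),z_i)\neq \mathrm{Snap}_d(N_{i-1}^{n_1},z_i)}$ --- and crucially, having dropped the event $\widehat{M_{i-1}}=N_{i-1}$, we no longer condition on anything depending on $y$. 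Fixing $z_1,\dots,z_{i-1}$ fixes $N_{i-1}$; since $y$ is uniform and independent of $z_{<i}$, the assumed guarantee on $\widehat{\mathrm{Pow}}_{n_1}$ applied to the (now deterministic) matrix $N_{i-1}$ gives $\nmax{\widehat{\mathrm{Pow}}_{n_1}(N_{i-1},y)-N_{i-1}^{n_1}}\le\eps$ except with probability $\eps$ over $y$, and on that event the preceding $\mathrm{Snap}_d$ stability lemma gives that the two snaps agree except with probability $w^2(2^d\eps+2^{-d})$ over $z_i$. Averaging over $z_{<i}$ and invoking the inductive hypothesis closes the induction; taking $i=n_2$ and absorbing the bare $\eps$ term into $w^22^d\eps$ yields (i).

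For (ii), I would compare $N_i$ to the exact power $P_i:=M^{n_1^i}$, noting $P_{n_2}=M^n$. Every $N_i$ is substochastic (snapping rounds entries down into $[0,1]$ and does not increase row sums), and $\mathrm{Snap}_d$ perturbs each entry by less than $2^{-d+1}$, so $\norm{\mathrm{Snap}_d(A,z_i)-A}<w2^{-d+1}$ for substochastic $A$. Moreover powering is $n_1$-Lipschitz in $\norm{\cdot}$ on substochastic matrices: from $X^{n_1}-Y^{n_1}=\sum_{j=0}^{n_1-1}X^j(X-Y)Y^{n_1-1-j}$ and $\norm{X},\norm{Y}\le 1$ we get $\norm{X^{n_1}-Y^{n_1}}\le n_1\norm{X-Y}$. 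Hence, writing $e_i=\norm{N_i-P_i}$, the triangle inequality gives $e_0=0$ and $e_i< w2^{-d+1}+n_1e_{i-1}$, so $e_{n_2}<w2^{-d+1}(1+n_1+\dots+n_1^{n_2-1})=w2^{-d+1}\cdot\frac{n-1}{n_1-1}\le nw2^{-d+1}$ (using $n_1\ge 2$). Combined with (i), on the good event $\widehat{M_{n_2}}=N_{n_2}$ we conclude $\norm{\widehat{M_{n_2}}-M^n}<nw2^{-d+1}$.

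Finally, for the space complexity I would observe that computing an entry of $\widehat{M_{n_2}}$ unwinds into a recursion of depth $n_2$ where level $i$ runs $\widehat{\mathrm{Pow}}_{n_1}$ (processing space $s$) on the matrix $\widehat{M_{i-1}}$, whose entries are served by the level-$(i-1)$ recursive call, and then applies $\mathrm{Snap}_d$ (space $O(d)$). The string $y$ is stored once ($r$ bits) and reused at every level, as are $z_1,\dots,z_{n_2}$ ($O(n_2d)$ bits); the recursion stack contributes a further $O(n_2(s+d))$, for a total of $O(r+n_2(s+d))$. The one genuinely delicate point, and the step I expect to be the main obstacle, is exactly the conditional-independence maneuver in (i): one must resist conditioning on $\widehat{M_{i-1}}=N_{i-1}$ before invoking the $\widehat{\mathrm{Pow}}_{n_1}$ guarantee, because that event depends on $y$; instead one drops the event first and applies the guarantee to the $y$-free matrix $N_{i-1}$. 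This is precisely what makes the randomized rounding indispensable, and getting the conditioning order right (and verifying the probabilities telescope correctly through the $n_2$ levels) is where the real care lies.
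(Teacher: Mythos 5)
Your proposal is correct and follows essentially the same route as the paper's sketch: you introduce the same $y$-independent idealized sequence (your $N_i$ is the paper's $\overline{M_i}$), couple $\widehat{M_{n_2}}$ to it via the snap-stability lemma with the same care about applying the $\widehat{\mathrm{Pow}}_{n_1}$ guarantee only to $y$-free matrices, and then bound the deterministic drift. The only difference is that you prove the error-propagation step (the $n_1$-Lipschitzness of powering plus the geometric sum) explicitly, where the paper cites Lemma 5.4 of \cite{SZ99}.
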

\begin{proof}[Proof] (sketch)
Define $\overline{M_0}=M$, $\overline{M_i}=\mathrm{Snap}((\overline{M_{i-1}})^{n_1},z_i)$. By union bound, the following events happen simultaneously with probability $1-O(w^2 n_2(2^d\eps+2^{-d}))$:
\begin{enumerate}
\item 
For every $i\in[n_2]$, $\nmax{\widehat{\mathrm{Pow}}_{n_1}(\overline{M_{i-1}},y)-(\overline{M_{i-1}})^{n_1}}\le \eps$.
\item
For every $i\in[n_2]$, conditioned on $\widehat{M_{i-1}}=\overline{M_{i-1}}$ and $\nmax{\widehat{\mathrm{Pow}}_{n_1}(\overline{M_{i-1}},y)-\overline{M_{i-1}}^{n_1}}\le \eps$, $\widehat{M_{i}}=\overline{M_{i}}$.
\end{enumerate}
When the above events occur, we have $\widehat{M_{n_2}}=\overline{M_{n_2}}$. Moreover, note that for every $i\in[n_2]$ $$\nmax{\overline{M_{i}}-(\overline{M_{i-1}})^{n_1}}\le 2^{-d+1}.$$
To see why this is true, observe that in a snap operation we change the given value by at most $2^{-2d}$ from perturbation and $2^{-d}$ from rounding. \footnote{Note that capping the lowest possible value to be $0$ can only reduce the error, because the snapped value was non-negative.}
This implies 
$$\norm{\overline{M_{i}}-(\overline{M_{i-1}})^{n_1}}\le 2^{-d+1},$$
where $\norm{\cdot}$ denotes the matrix infinity norm.
By Lemma 5.4 in \cite{SZ99},
$$\norm{\overline{M_{n_2}}-M^n}\le nw2^{-d+1}.$$
For the space complexity, observe that we can compute $\widehat{M}_{n_2}$ with $n_2$ levels of recursive calls, and each recursive call takes $O(s+d)$ bits. Moreover, we need $r$ bits to store the offline randomness. Therefore the space complexity is $O(n_2(s+d)+r)$
\end{proof}
If we take $n_2=\sqrt{\log n}$, $n_1=2^{\sqrt{\log n}}$, $d=O(\log(n))$ and $\eps=2^{-2d}$ and plugging in Nisan's generator, the above lemma shows that $\BPL\subseteq \L^{3/2}$.

\subsection{Armoni's Trick}
We saw that in Saks and Zhou's proof, we need a ``offline randomized algorithm" for substochastic matrix exponentiation such that when given $r$ bits of randomness as additional input, the algorithm only requires additional $s\ll r$ bits of space to compute a good approximation with high probability. This is in fact the only place where we need PRGs in Saks and Zhou's proof. However, not every PRG has such property, so it might be hard to tell whether an improvement over Nisan's PRG will actually give a better derandomization for $\BPL$. Fortunately, Armoni \cite{Arm98} observed that one can turn any PRG into a derandomization algorithm with the required property by simply composing the PRG with an averaging sampler.

Before we go through Armoni's claim, first we generalize \Cref{lemma:Saks-Zhou} for a larger class of algorithms $\widehat{\mathrm{Pow}}$. 
\begin{definition}
We say an offline randomized algorithm requires $s$ bits of \emph{sensitive processing space} and $t$ bits of \emph{reusable processing space} if 
\begin{itemize}
\item 
During the execution of this algorithm, only $t$ bits of processing space is required.
\item
Before each time a bit is read from the real input (not including the offline randomness), only $s$ bits of processing space is being used at the time. 
\end{itemize}
\end{definition}
In the above definition, think of each input bit as generated from a recursive call. Thus the ``reusable processing space" can be interpreted as ``recursion-friendly processing space", which can be erased before every recursive call. With this new definition we can generalize \Cref{lemma:Saks-Zhou} as follows:
\begin{lemma}[\cite{SZ99}, generalized]\label{lemma:SZ-general}
Let $n_1,n_2$ be integers such that $n_1^{n_2}=n$. Suppose there exists an offline randomized algorithm $\widehat{\mathrm{Pow}}_{n_1}$ which takes $r$ bits of randomness, $s$ bits of sensitive processing space and $t$ bits of reusable processing space, such that for every substochastic matrix $M$, 
$$\pr[x]{\nmax{\widehat{\mathrm{Pow}}_{n_1}(M,y)-M^{n_1}}\le \eps}\ge 1-\eps.$$
Now consider uniform random bits $y\in\bits{r}$ and $z_1,z_2,\dots,z_{n_2}\in\bits{d}$. 
Let $\widehat{M_0}=M$, and $\widehat{M_i}=\mathrm{Snap}_d(\widehat{\mathrm{Pow}}_{n_1}(\widehat{M_{i-1}},y),z_i)$ for every $i\in[n_2]$. Then with probability at least $1-O(w^2 n_2(2^d\eps+2^{-d}))$ over $y,z_1,\dots,z_{n_2}$,
$$\norm{\widehat{M_{n_2}}-M^n}\le nw2^{-d+1}.$$
Moreover, the space complexity of computing $\widehat{M_{n_2}}$ is $O(r+t+n_2(s+d))$.
\end{lemma}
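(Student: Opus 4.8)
The plan is to split the statement into its probabilistic part and its space-complexity part, and to observe that only the latter is genuinely new. The probabilistic conclusion — that $\norm{\widehat{M_{n_2}}-M^n}\le nw2^{-d+1}$ holds with probability at least $1-O(w^2n_2(2^d\eps+2^{-d}))$ over $y,z_1,\dots,z_{n_2}$ — is verbatim the conclusion of \Cref{lemma:Saks-Zhou}, and the proof given there never inspects how $\widehat{\mathrm{Pow}}_{n_1}$ organizes its processing space internally; it uses only the accuracy guarantee $\pr[y]{\nmax{\widehat{\mathrm{Pow}}_{n_1}(M,y)-M^{n_1}}\le\eps}\ge 1-\eps$ and the behaviour of $\mathrm{Snap}_d$. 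So I would reproduce that argument unchanged: introduce the idealized chain $\overline{M_0}=M$, $\overline{M_i}=\mathrm{Snap}_d((\overline{M_{i-1}})^{n_1},z_i)$, union-bound over the $n_2$ accuracy events and the $n_2$ rounding-stability events, conclude $\widehat{M_{n_2}}=\overline{M_{n_2}}$ on the good event, bound $\nmax{\overline{M_i}-(\overline{M_{i-1}})^{n_1}}\le 2^{-d+1}$ from the definition of the snap, and finish via Lemma 5.4 of \cite{SZ99}.

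For the space bound $O(r+t+n_2(s+d))$ I would unroll the $n_2$-level recursion while keeping the sensitive/reusable split in hand. Computing an entry of $\widehat{M_{n_2}}$ amounts to: store $y$ once ($r$ bits, shared by all levels), run $\widehat{\mathrm{Pow}}_{n_1}$ on $\widehat{M_{n_2-1}}$ using at most $t$ bits of working memory, and apply $\mathrm{Snap}_d(\cdot,z_{n_2})$, which needs the $d$ bits of $z_{n_2}$ plus $O(d)$ scratch. Every time $\widehat{\mathrm{Pow}}_{n_1}$ requests a bit of $\widehat{M_{n_2-1}}$, by hypothesis only $s$ bits of its working memory are live at that instant; I retain exactly those $s$ bits (together with $z_{n_2}$ and the snap scratch, $O(s+d)$ in all), free the remaining $t-s$ bits, and reuse them to compute the requested bit of $\widehat{M_{n_2-1}}$ by the same procedure one level down. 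Iterating, each of the $n_2$ levels is charged $O(s+d)$ to freeze its live state and hold its snap randomness, only the single deepest currently-active frame pays the full $t$ bits of working memory, and $y$ is paid for once, giving total space $O(r+t+n_2(s+d))$.

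The one place that needs care is justifying that the $t-s$ non-sensitive bits may legitimately be reused across a recursive call; but this is exactly what the definition of sensitive and reusable processing space is designed to deliver — at the moment an input bit is demanded the caller occupies only $s$ bits, so the rest of its budget is genuinely free — and I would spell out that the nesting is clean because $\widehat{\mathrm{Pow}}_{n_1}$ reads its matrix input bit by bit and never needs all of $\widehat{M_{i-1}}$ resident at once. I would also note that $z_1,\dots,z_{n_2}$ must be stored simultaneously (accounting for the $n_2d$ summand) while $y$ is stored just once, and remark that in the original \Cref{lemma:Saks-Zhou} one has $s=t$, so there is nothing to reuse and the bound degenerates to $O(r+n_2(s+d))$ as before. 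No single step here is hard; the only real subtlety is bookkeeping the recursion so that the reusable portion is counted once rather than $n_2$ times.
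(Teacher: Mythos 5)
Your proposal is correct and matches the paper's intent: the paper omits this proof entirely, stating only that it is identical to the proof of \Cref{lemma:Saks-Zhou}, and your probabilistic argument reproduces that proof verbatim while your space accounting spells out exactly the bookkeeping that the paper's definition of sensitive versus reusable processing space is designed to support. The only part you elaborate beyond the paper is the recursion-unrolling argument for the $O(r+t+n_2(s+d))$ bound, and that elaboration is sound.
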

We omit the proof because it's Exactly the same as \Cref{lemma:Saks-Zhou}.

For technicality, we also need to define a ROBP with larger ``step size".
\begin{definition}
A $(n,w,d)$-ROBP is a ROBP of $n$ layers, $w$ nodes in each layer, and $2^d$ branches from each node. 
\end{definition}
That is, a $(n,w,d)$-ROBP is a ROBP which can read $d$ bits at once. Note that derandomizing $(n,w,d)$-ROBP corresponds to derandomizing the exponentiation of a stochastic matrix which has $d$ bits of precision in each entry.

Now we are ready to introduce Armoni's Lemma.
\begin{lemma}[\cite{Arm98}]\label{lemma:Armoni}
Suppose there exists an explicit PRG for $(n,w+1,\log(3nw/\eps))$-ROBP with error $\eps/3$ which has seed length $s$. Then there exists an offline randomized algorithm which approximates the $n$-th power of any substochastic matrix within error $\eps$ with probability at least $1-\eps$. Moreover, such algorithm requires $s+O(\log(w/\eps))$ bits of randomness, $O(s+O(\log(w/\eps)))$ bits of reusable processing space and $O(\log(nw/\eps))$ bits of sensitive processing space. 
\end{lemma}
\begin{proof}
Given an input $M$, first we round each entry down to $d=\log(3nw/\eps)$ bits of precision. Then we will get a substochastic matrix $M'$ such that each entry of $M'$ is a multiple of $2^{-d}$, and $\nmax{M-M'}\le \eps/3nw$. Then we have 
$$\nmax{M^n-(M')^n}\le \norm{M^n-(M')^n}\le n \norm{M-M'}\le nw\nmax{M-M'}\le \frac{\eps}{3}.$$

Then we construct a $(n,w+1,d)$-ROBP $B$ as follows. For each $t\in[n]$, we connect $k$ edges from node $(t-1,i)$ to node $(t,j)$ if $M'_{i,j}=k\cdot 2^{-d}$. Then for each node $(t-1,i)$ which doesn't have $2^d$ outgoing edges yet, we connect more edges from $(t-1,i)$ to a dummy node $(t,w+1)$. For each dummy node we connect $2^d$ edges to the dummy node in the next layers. It is easy to observe that $(M'^n)_{i,j}$ is exactly the probability that we start a random walk from $(0,i)$ and reach $(n,j)$. Now for every $i,j\in[w]$, define $B_{i,j}(x)$ to be the indicator for whether we will reach $(t,j)$ if we start from $(0,i)$ and follow $x\in(\bits{d})^n$. Then $\ex[x]{B_{i,j}(x)}=(M'^n)_{i.j}$. Take the given PRG $G$, we have 
$$\abs{\ex[r]{B_{i,j}(G(r))}-\ex[x]{B_{i,j}(x)}}\le\frac{\eps}{3}.$$
Now define the offline randomized algorithm $\widehat{\mathrm{Pow}}$ to be 
$$\widehat{\mathrm{Pow}}(M,y)_{i,j}=\ex[z]{B_{i,j}(G(\samp(y,z)))},$$
where $\samp$ is a $(\eps/3,\eps/w^2)$-sampler. By definition of sampler, with probability at least $(1-(\eps/w^2))$ over the choice of $y$, we have 
$$\abs{\widehat{\mathrm{Pow}}(M,y)_{i,j}-\ex[r]{B_{i,j}(G(r))}}\le \frac{\eps}{3}.$$ By union bound, with probability at least $(1-\eps)$, 
$$\nmax{\widehat{\mathrm{Pow}}(M,y)_{i,j}-\ex[r]{B_{i,j}(G(r))}}\le \frac{\eps}{3}$$
for every $i,j\in[w]$. Therefore by triangle inequality we have 
$$\nmax{\widehat{\mathrm{Pow}}(M,y)-M^n}\le \eps$$
with probability at least $1-\eps$.

Finally we compute the complexity of $\widehat{\mathrm{Pow}}$. By \Cref{lemma:samp-opt}, the required randomness in this offline randomized algorithm is $s+O(\log(1/\eps)+\log\log(w/\eps))$. The required processing space is the processing space for samplers and PRGs. Observe that the only sensitive data is the second input for sampler (i.e. $z$); the current node in the ROBP, which takes $\log(nw)$ bits to store; and a $d$-bit block in $G(\samp(y,z))$ indicating which entry of $M$ we should check. Therefore the required sensitive processing space is only $O(\log(nw/\eps))$ bits. 
\end{proof}

With Armoni's sampler trick, if we have any PRG for $(n,w+1,\log(3nw/\eps))$-ROBP, we can always plug it into the Saks-Zhou scheme regardless of whether it has the high-probability property. Specifically, as suggested in \cite{BCG18}, if we have a PRG of seed length $O(\log^{2}(n)+\log^{4/3}(w/\eps))$, we can even prove that $\BPL\subseteq \L^{4/3}$.

\subsection{Saks-Zhou-Armoni Scheme with PRPDs}
Finally we see how to apply a PRPD in the above scheme.
\begin{lemma}\label{lemma:Arm-general}
Suppose there exists an explicit $\poly(nw/\eps)$-bounded PRPD $(G,\rho)$ for $(n,w+1,\log(3nw/\eps))$-ROBP with error $\eps/3$ which has seed length $s$. Then there exists an offline randomized algorithm which approximates the $n$-th power of any substochastic matrix within error $\eps$ with probability at least $1-\eps$. Moreover, such algorithm requires $s+O(\log(w/\eps))$ bits of randomness, $O(s+O(\log(w/\eps)))$ bits of reusable processing space and $O(\log(nw/\eps))$ bits of sensitive processing space. 
\end{lemma}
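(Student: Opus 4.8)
The plan is to transcribe the proof of \Cref{lemma:Armoni} line by line, carrying the weights $\rho(r)$ through every step; the only genuinely new point is that the function handed to the averaging sampler is now real-valued instead of $\{0,1\}$-valued, and this is exactly where the hypothesis that $(G,\rho)$ is $\poly(nw/\eps)$-bounded gets used. First I would perform the same precision reduction: round every entry of the input $M$ down to $d=\log(3nw/\eps)$ bits to obtain a substochastic $M'$ whose entries are multiples of $2^{-d}$ with $\nmax{M-M'}\le \eps/(3nw)$, so that $\nmax{M^n-(M')^n}\le n\norm{M-M'}\le nw\nmax{M-M'}\le \eps/3$. Then I would build the same $(n,w+1,d)$-ROBP $B$ with indicator programs $B_{i,j}$ satisfying $\ex[x]{B_{i,j}(x)}=((M')^n)_{i,j}$, and apply the hypothesized $(n,w+1,\log(3nw/\eps))$-PRPD of error $\eps/3$ to each $B_{i,j}$ to get
$$\abs{\ex[r]{\rho(r)B_{i,j}(G(r))}-((M')^n)_{i,j}}\le \eps/3.$$

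Next I would define, with $\samp$ an $(\eps',\delta)$-sampler from \Cref{lemma:samp-opt},
$$\widehat{\mathrm{Pow}}(M,y)_{i,j}=\ex[z]{\rho(\samp(y,z))\cdot B_{i,j}(G(\samp(y,z)))}.$$
Writing $k=\poly(nw/\eps)$ for the bound $|\rho(\cdot)|\le k$, the map $r\mapsto \rho(r)B_{i,j}(G(r))$ has range inside $[-k,k]$, so by \Cref{lemma:sampler-general} the choices $\eps'=\Theta(\eps/k)=\poly(\eps/(nw))$ and $\delta=\eps/w^2$ ensure that for each fixed $(i,j)$, with probability at least $1-\delta$ over $y$,
$$\abs{\widehat{\mathrm{Pow}}(M,y)_{i,j}-\ex[r]{\rho(r)B_{i,j}(G(r))}}\le 2k\eps'\le \eps/3.$$
A union bound over the $w^2$ pairs $(i,j)$ costs $w^2\delta=\eps$; combining it with the two $\eps/3$ estimates above via the triangle inequality yields $\nmax{\widehat{\mathrm{Pow}}(M,y)-M^n}\le \eps$ with probability at least $1-\eps$, as claimed.

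Finally, the accounting is identical in spirit to \Cref{lemma:Armoni}. The offline randomness is the sampler's first input $y$, of length $s$ plus the sampler's source overhead $O(\log(1/\eps')+\log(1/\delta))=O(\log(nw/\eps))$. The reusable processing space is what is needed to evaluate $(G,\rho)$ and $\samp$ and to accumulate $\ex[z]{\cdot}$; since $\widehat{\mathrm{Pow}}(M,y)$ is an average over $2^{O(\log(nw/\eps))}$ values each a multiple of a fixed $\poly(\eps/(nw))$ quantity, a counter and accumulator of $O(\log(nw/\eps))$ bits suffice, so the reusable space is $O(s+\log(w/\eps))$. The sensitive processing space --- the data that must survive a read of an input bit of $M$ --- is only the sampler seed $z$, the current $O(\log(nw))$-bit node of $B$, and a $d$-bit block of $G(\samp(y,z))$, i.e. $O(\log(nw/\eps))$ bits.

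\textbf{Main obstacle.} The only place the argument could break is the sampler step: because $\rho$ is real-valued, the sampler now controls a function of range $[-k,k]$ rather than $[0,1]$, so its accuracy must be scaled down to $\Theta(\eps/k)$. This is harmless precisely because $(G,\rho)$ is $\poly(nw/\eps)$-bounded, so $\log(1/\eps')=O(\log(nw/\eps))$ and the extra seed fits inside the $O(\log(nw/\eps))$ overhead already present in the PRG case; had $|\rho|$ been, say, $2^{\Omega(s)}$, the needed sampler accuracy would be exponentially small and the scheme would fail. Everything else is a direct transcription of \Cref{lemma:Armoni}.
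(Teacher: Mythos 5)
Your proposal is correct and follows exactly the route the paper takes: it transcribes the proof of \Cref{lemma:Armoni}, redefines $\widehat{\mathrm{Pow}}(M,y)_{i,j}$ as $\ex[z]{\rho(\samp(y,z))\cdot B_{i,j}(G(\samp(y,z)))}$, and tightens the sampler accuracy to $\Theta(\eps/k)$ using \Cref{lemma:sampler-general} for $[-k,k]$-valued functions, which is precisely the paper's two-point modification. The space and randomness accounting, including the observation that the extra $\log k$ overhead is absorbed because $k=\poly(nw/\eps)$, also matches the paper's.
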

\begin{proof}
The proof is basically the same as \Cref{lemma:Armoni}, with the following two difference.
\begin{itemize}
\item 
$\widehat{\mathrm{Pow}}(M,y)_{i,j}$ is defined as $\ex[z]{\rho(\samp(y,z))\cdot B_{i,j}( G(\samp(y,z)))}$ instead.
\item
If $(G,\rho)$ is $k$-bounded, then we will choose $\samp$ as a $(\eps/6k,\eps/w^2)$ sampler instead.
\end{itemize}
It's not hard to verify the correctness. (With \Cref{lemma:sampler-general} which shows that samplers can also be used for functions with output range $[-k,k]$.) The required sensitive processing space is increased to $O(\log(nw/\eps)+\log(k))$, which is still $O(\log(nw/\eps))$ if $k=\poly(nw/\eps)$.
\end{proof}
One may notice that there might have negative output in our new definition of $\widehat{\mathrm{Pow}}$. However, this is not a problem when applying Saks-Zhou argument because we only rely on the non-negativeness of matrices $\overline{M_i}$, which is independent of the approximation algorithm we use. With the above lemma we have the following corollary, which better motivates the problem of getting improved seed length for PRPDs:

\begin{corollary}
If there exists a $\poly(nw/\eps)$-bounded explicit PRPD for $(n,w,d)$-ROBP with error $\eps$ which has seed length $O(\log^{2}(n)+(\log(w/\eps)+d)^{4/3})$, then $\BPL\subseteq \L^{4/3}$.
\end{corollary}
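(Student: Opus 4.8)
The plan is to plug the hypothesized PRPD into the Saks--Zhou--Armoni scheme, combining \Cref{lemma:Arm-general} with the generalized Saks--Zhou recursion of \Cref{lemma:SZ-general}, exactly as Saks and Zhou instantiated Nisan's generator but now with the smaller seed length assumed here. Since derandomizing $\BPL$ reduces to approximating a fixed entry of $M^n$ within error $1/10$ for an $n\times n$ substochastic matrix $M$ (with $n$ polynomial in the input length), it suffices to carry this out in deterministic space $O(\log^{4/3}n)$; in particular we may take the width $w=n$. Set $n_2=\ceil{(\log n)^{1/3}}$ and $n_1=2^{\ceil{(\log n)^{2/3}}}$, so that $n_1^{n_2}\ge n$ (pad $M$ to an exact power of $n_1$ if needed), take the snap precision $D=\Theta(\log n)$ and the per-step error parameter $\eps=2^{-\Theta(\log n)}=1/\poly(n)$, with the hidden constants chosen so that $nw\cdot 2^{-D+1}<1/10$ and $w^2 n_2(2^D\eps+2^{-D})<1/10$. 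Note $\log n_1=O((\log n)^{2/3})$ and $\log(w/\eps)=O(\log n)$, so the step size $d:=\log(3n_1 w/\eps)$ appearing in \Cref{lemma:Arm-general} satisfies $d=O(\log n)$.

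Next I would instantiate \Cref{lemma:Arm-general} with the hypothesized $\poly(n_1 w/\eps)$-bounded explicit PRPD for $(n_1,w+1,d)$-ROBPs of error $\eps/3$, whose seed length is, by assumption,
\[
s=O\!\left(\log^2 n_1+(\log(w/\eps)+d)^{4/3}\right)=O\!\left((\log n)^{4/3}\right),
\]
since $\log^2 n_1=O((\log n)^{4/3})$ and $\log(w/\eps)+d=O(\log n)$. \Cref{lemma:Arm-general} then yields an offline randomized algorithm $\widehat{\mathrm{Pow}}_{n_1}$ that $\eps$-approximates $M^{n_1}$ for every substochastic $M$ with probability $1-\eps$, using $r=s+O(\log(w/\eps))=O((\log n)^{4/3})$ bits of randomness, $t=O(s+\log(w/\eps))=O((\log n)^{4/3})$ bits of reusable processing space, and $\sigma=O(\log(n_1 w/\eps))=O(\log n)$ bits of sensitive processing space. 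Feeding $\widehat{\mathrm{Pow}}_{n_1}$ into \Cref{lemma:SZ-general} with the parameters above produces $\widehat{M_{n_2}}$ with $\nmax{\widehat{M_{n_2}}-M^n}\le nw\cdot 2^{-D+1}<1/10$ with probability at least $1-O(w^2 n_2(2^D\eps+2^{-D}))\ge 9/10$ over the offline randomness, using total space
\[
O\!\left(r+t+n_2(\sigma+D)\right)=O\!\left((\log n)^{4/3}+(\log n)^{1/3}\cdot\log n\right)=O\!\left((\log n)^{4/3}\right).
\]
Taking the median of the relevant entry of $\widehat{M_{n_2}}$ over all choices of the offline randomness and rounding then decides the $\BPL$ instance deterministically in space $O((\log n)^{4/3})$, so $\BPL\subseteq\L^{4/3}$. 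The negative coefficients produced by the PRPD are harmless, since the correctness analysis of \Cref{lemma:SZ-general} uses only non-negativity of the idealized matrices $\overline{M_i}$, not of the output of $\widehat{\mathrm{Pow}}_{n_1}$, exactly as remarked after \Cref{lemma:Arm-general}.

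The only delicate point is the parameter bookkeeping: one must choose $n_1,n_2$ so that both the PRPD's $\log^2 n_1$ term and the recursion's $n_2(\sigma+D)$ term come out to $(\log n)^{4/3}$ --- which is precisely why the exponent $4/3$ is the natural target --- and one must check that the per-level errors $\eps$ and $2^{-D}$, together with the preliminary $d$-bit rounding of $M$, are small enough to survive the $n_2$-fold composition. I do not expect any genuine obstacle; everything reduces to direct substitution into the two lemmas already established.
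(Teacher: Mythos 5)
Your proposal is correct and follows the paper's own proof essentially verbatim: the same choice of $n_1=2^{\Theta(\log^{2/3}n)}$, $n_2=\Theta(\log^{1/3}n)$, snap precision and error $\Theta(\log n)$ and $\poly(1/n)$, plugged into \Cref{lemma:Arm-general} and then \Cref{lemma:SZ-general}, yielding total space $O(r+t+n_2(\sigma+D))=O(\log^{4/3}n)$. The extra bookkeeping you supply (the $nw2^{-D+1}$ and $w^2n_2(2^D\eps+2^{-D})$ checks, and the remark on negative coefficients) matches what the paper states or leaves implicit.
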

\begin{proof}
Apply the Saks-Zhou scheme (\Cref{lemma:SZ-general}), and take $n_1=2^{\log^{2/3}(n)}$, $n_2=\log^{1/3}(n)$, $d=10\log(n)$ and $\eps=2^{-2d}$. The required subprocedure $\widehat{\mathrm{Pow}}$ would be approximating the $n_1$-th power of $n\times n$ substochastic matrices within error $\eps$. By \Cref{lemma:Arm-general}, there exists an offline randomized algorithm which approximates $M^{n_1}$ within error $\eps=2^{-2d}=\poly(1/n)$, which requires sensitive processing space $O(\log (n))$ and offline randomness + reusable processing space $O(\log^2(n_1)+\log^{4/3} n)=O(\log^{4/3}(n))$. Therefore the total space complexity is $O(\log(n)\cdot n_2+\log^{4/3}(n))=O(\log^{4/3}(n))$.
\end{proof}

\begin{remark}
Note that while we only construct PRPDs for $(n,w,1)$-ROBP in this paper, it is possible to adapt our construction to get PRPDs for $(n,w,d)$-ROBP with seed length $O(\log n\log(nw)\log\log (nw)+\log(1/\eps)+d)$: simply replace the base case with a sampler with $d$-bit output. Since our PRPD construction doesn't imply better derandomization for $\BPL$ anyway, we keep $d=1$ for simplicity.
\end{remark}

\section{Proof of \Cref{lemma:samp-opt}}\label{sec:proof-of-samp}
\begin{lemma}[\Cref{lemma:samp-opt}, restated. \cite{RVW01,Gol11}]
For every $\delta,\eps>0$ and integer $m$, there exists a $(\eps,\delta)$-sampler $f:\bits{n}\times\bits{d}\to\bits{m}$ s.t. $d=O(\log\log(1/\delta)+\log(1/\eps))$ and $n=m+O(\log(1/\delta))+O(\log(1/\eps))$. Moreover, for every $x,y$, $f(x,y)$ can be computed in space $O(m+\log(1/\delta)+\log(1/\eps))$.
\end{lemma}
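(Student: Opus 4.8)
The plan is to obtain the sampler from a seeded extractor, using Zuckerman's equivalence between averaging samplers and extractors \cite{Zuc97}. Concretely, if $E\colon\bits{n}\times\bits{d}\to\bits{m}$ is a $(k,\eps)$-extractor, then $f(x,s):=E(x,s)$ is an $(\eps,2^{-(n-k-1)})$-averaging sampler: for any test $\phi\colon\bits{m}\to[0,1]$ with mean $\mu=\ex[y\in\bits{m}]{\phi(y)}$, if the set $B^{+}=\{x:\ex[s]{\phi(E(x,s))}>\mu+\eps\}$ had size at least $2^{k}$, then the flat distribution supported on a size-$2^{k}$ subset of $B^{+}$ would be a min-entropy-$k$ source on which $E$ is not $\eps$-close to uniform (since $\phi$ has range $[0,1]$), contradicting the extractor property; the symmetric argument bounds $B^{-}=\{x:\ex[s]{\phi(E(x,s))}<\mu-\eps\}$, so the bad set has size below $2^{k+1}$. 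Hence it suffices to construct an explicit, space-$O(n)$-computable $(k,\eps)$-extractor with $n-k=\ceil{\log(1/\delta)}+1$ (so that $2^{-(n-k-1)}\le\delta$), entropy loss $k-m=O(\log(1/\eps))$ (so that $n=m+O(\log(1/\delta)+\log(1/\eps))$), and seed length $d=O(\log\log(1/\delta)+\log(1/\eps))$; reading Zuckerman's equivalence back then yields exactly the claimed parameters.

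It remains to supply such an extractor. The required seed length is $d=O(\log((n-k)/\eps))$, i.e.\ logarithmic in the entropy \emph{deficiency} $n-k=\Theta(\log(1/\delta))$ rather than in $n$, which is the optimal regime for sources this close to full entropy; an extractor matching this has long been known and underlies the sampler of \cite{RVW01}, but the classical construction is valid only when $\eps$ is not too small, which is exactly what forces the $2^{O(\log^{*}(nw/\eps))}$ loss in \cite{BCG18}. I would therefore instantiate $E$ with the extractor of \cite{GUV09}, whose analysis imposes no such restriction on $\eps$, and use the space-efficient implementation from \cite{KNW08} so that $E$, and hence $f=E$, is computable in space $O(n)=O(m+\log(1/\delta)+\log(1/\eps))$; the near-optimal entropy loss of \cite{GUV09} supplies the bound $n-m=O(\log(1/\delta)+\log(1/\eps))$ needed above.

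The step I expect to be the main obstacle is keeping the seed length free of any $\log n$ (equivalently $\log m$) term: a black-box application of a generic extractor on length-$n$ inputs costs $\Omega(\log n)$ seed bits, so one must work in, or reduce to, the small-deficiency regime, e.g.\ by prepending a short-seed lossless condenser (or a coordinate-restriction step, which preserves the deficiency while shrinking the length) before the extraction proper. Verifying that the \cite{GUV09}/\cite{KNW08} construction can indeed be run with seed $O(\log((n-k)/\eps))$, output length within an additive $O(\log(1/\eps))$ of the source min-entropy, and total space $O(n)$, is where the real care lies; the sampler--extractor equivalence and the surrounding parameter arithmetic are routine.
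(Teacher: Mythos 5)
Your overall route is the same as the paper's: convert the problem to constructing a high-entropy seeded extractor and invoke Zuckerman's extractor--sampler equivalence (your derivation of that equivalence is correct and matches the paper's Lemma~\ref{lemma:Zuc}). However, the proposal stops exactly at the step that carries the technical content. You correctly observe that a black-box application of \cite{GUV09} on an $n$-bit source costs seed length $\Omega(\log n)=\Omega(\log m)$, which would ruin the claimed bound $d=O(\log\log(1/\delta)+\log(1/\eps))$, but you only gesture at a fix (``prepend a lossless condenser or a coordinate-restriction step'') without carrying it out. A coordinate restriction alone cannot work: projecting the source onto few coordinates preserves the deficiency but discards the entropy you still need in order to output $m$ nearly uniform bits.

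The paper's resolution is the standard block-source composition. With $\Delta=\log(1/\delta)+1$, an $(n-\Delta)$-source on $m+3d_1$ bits is $\eps/3$-close to a block source $(X_1,X_2)\in\bits{m}\times\bits{3d_1}$ (Lemma~\ref{lemma:block}). One takes the \emph{outer} extractor $E_1$ to be the \cite{GW97} extractor on $m$ bits, whose seed length $d_1=O(\Delta+\log(1/\eps))$ depends only on the deficiency and not on $m$; that seed is too long to pay for directly, so it is generated by an \emph{inner} \cite{GUV09} extractor $E_2$ applied to the short second block of length $3d_1=O(\log(1/\delta)+\log(1/\eps))$. Since $E_2$'s source has length $3d_1$, its $\log(\text{source length})$ seed cost is only $O(\log d_1+\log(1/\eps))=O(\log\log(1/\delta)+\log(1/\eps))$, which is the final seed length; Lemma~\ref{lemma:compo} then shows the composition $E((x_1,x_2),s)=E_1(x_1,E_2(x_2,s))$ extracts from the block source, and the total input length is $m+3d_1=m+O(\log(1/\delta)+\log(1/\eps))$ as required. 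Without this two-stage structure (or an equivalent one), the parameters you need do not follow from \cite{GUV09} alone, so as written the proposal has a genuine gap at its central step.
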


We will use the equivalence between seeded randomness extractor and oblivious sampler by Zuckerman~\cite{Zuc97}. To achieve the parameter we need, we need a ``high-entropy seeded extractor" such that the seed length only depends on entropy loss but not the length of source. We will use the standard ``block-source" construction for high-entropy extractor which can be found in \cite{GW97,RVW01}. For simplicity, we will use simple composition instead of zig-zag composition~\cite{RVW01} because we are not aiming for optimal entropy loss. We will use the following standard lemmas for the extractor construction. Some of the following lemmas are implicit in their original source, and we recommend the readers to see \cite{GUV09,Vad12} for a proof.
\begin{definition}[\cite{CG88}]
$(X_1,X_2)$ is a \emph{$(k_1,k_2)$-block source} if $X_1$ is a $k_1$-source, and for every $x_1\in\Supp(X)$, $X_2$ conditioned on $X_1=x_1$  is a $k_2$-source.
\end{definition}
\begin{lemma}[\cite{GW97}]\label{lemma:block}
Let $X\in\bits{n}$ be a $(n-\Delta)$ source. Then for every integer $0\le t\le n $, $X$ is $\eps$-close to a $(t-\Delta,n-t-\Delta-\log(1/\eps))$-block source $(X_1,X_2)$ where $X_1\in\bits{t}$ and $X_2\in\bits{n-t}$.
\end{lemma}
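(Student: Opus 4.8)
The plan is to prove \Cref{lemma:block} by the standard ``splicing'' argument. Write $X=(X_1,X_2)$ with $X_1\in\bits{t}$ the first $t$ bits and $X_2\in\bits{n-t}$ the last $n-t$ bits, set $k_1:=t-\Delta$ and $k_2:=n-t-\Delta-\log(1/\eps)$, identify the small set of prefixes $x_1$ on which the conditional distribution of $X_2$ might fail to have min-entropy $k_2$, and surgically replace the distribution on those ``bad'' fibers by the uniform distribution. We may assume $\eps\le 1$ and $k_2\ge 0$, since otherwise the statement is trivial (when $k_2<0$ every distribution on $\bits{n-t}$ is a $k_2$-source).

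First I would establish the easy half: $X_1$ is always a $k_1$-source, with no loss. Indeed, $\pr{X_1=x_1}=\sum_{x_2}\pr{X=(x_1,x_2)}\le 2^{n-t}\cdot 2^{-(n-\Delta)}=2^{-(t-\Delta)}$, so $H_\infty(X_1)\ge k_1$. Next I would define the set of \emph{light} prefixes $L:=\{x_1\in\bits{t}:0<\pr{X_1=x_1}<\eps\cdot 2^{-t}\}$, and observe that $p:=\pr{X_1\in L}\le |L|\cdot\eps 2^{-t}\le\eps$, since $|L|\le 2^t$. The reason for this particular threshold is that for every $x_1\notin L$ with $\pr{X_1=x_1}>0$ and every $x_2$,
$$\pr{X_2=x_2\mid X_1=x_1}=\frac{\pr{X=(x_1,x_2)}}{\pr{X_1=x_1}}\le\frac{2^{-(n-\Delta)}}{\eps\cdot 2^{-t}}=2^{-k_2},$$
so $X_2\mid X_1=x_1$ is a genuine $k_2$-source on every good fiber.

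Then I would define the target block source $(X_1',X_2')$: let $X_1'$ be distributed exactly as $X_1$; let $X_2'\mid X_1'{=}x_1$ be distributed as $X_2\mid X_1{=}x_1$ whenever $x_1\notin L$ (and $\pr{X_1{=}x_1}>0$); and let $X_2'\mid X_1'{=}x_1$ be uniform on $\bits{n-t}$ whenever $x_1\in L$. By the min-entropy bounds above and the fact that $k_2\le n-t$, this is a genuine $(t-\Delta,\,n-t-\Delta-\log(1/\eps))$-block source. To finish I would bound the statistical distance: $(X_1',X_2')$ and $(X_1,X_2)$ assign identical probabilities to every pair $(x_1,x_2)$ with $x_1\notin L$, while for each $x_1\in L$ the two induced distributions on $x_2$ are both nonnegative and have total mass $\pr{X_1=x_1}$, so their $\ell_1$ distance is at most $2\pr{X_1=x_1}$; summing over $x_1\in L$ and halving yields statistical distance at most $p\le\eps$, as required.

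The argument is essentially routine calculation; the only point that needs real care is choosing the light-prefix threshold and the replacement distribution so that the output is \emph{exactly} a $(t-\Delta,\,n-t-\Delta-\log(1/\eps))$-block source, rather than one carrying a spurious $\log\frac{1}{1-\eps}$ slack in the first parameter (which the more naive ``condition on $X_1\notin L$'' construction would incur). Keeping $X_1'$ identically distributed to $X_1$ and only rewriting the conditional law on the light fibers avoids that loss entirely. One should also verify the degenerate regimes — $t=0$, $t=n$, $\eps\ge 1$, $k_2<0$ — each of which is immediate.
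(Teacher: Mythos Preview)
Your proof is correct and is precisely the standard ``splicing'' argument for this well-known lemma. Note, however, that the paper does not actually supply its own proof of \Cref{lemma:block}: it is stated with a citation to \cite{GW97} and the reader is referred to \cite{GUV09,Vad12} for details, so there is no paper proof to compare against. Your write-up matches what one finds in those references; the choice to keep $X_1'$ identically distributed to $X_1$ and only modify the conditional law on light fibers (rather than conditioning on $X_1\notin L$) is the clean way to avoid the extraneous $\log\frac{1}{1-\eps}$ loss, exactly as you point out.
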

\begin{lemma}[\cite{NZ96}]\label{lemma:compo}
Let $E_1:\bits{n_1}\times\bits{d}\to\bits{d_2}$ be a $(k_1,\eps_1)$ extractor and $E_2:\bits{n_2}\times\bits{d_2}\to\bits{m}$ be a $(k_2,\eps_2)$ extractor. Define $E((x_1,x_2),s)=E_1(x_2,E_2(x_1,s))$. Then for every $(k_1,k_2)$-block source $(X_1,X_2)\in\bits{n_1}\times\bits{n_2}$, $E((X_1,X_2),U_d)$ is ${(\eps_1+\eps_2)}$-close to uniform.
\end{lemma}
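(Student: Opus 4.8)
The plan is to reduce the construction of the sampler to the construction of a seeded extractor that works for sources of very high min-entropy and that has small entropy loss, and then to build such an extractor by recursive block-source extraction.

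\textbf{Step 1: from samplers to high-min-entropy extractors.} I would invoke Zuckerman's equivalence between averaging samplers and seeded extractors \cite{Zuc97} (see also \cite{Vad12}): if $E\colon\bits{n}\times\bits{d}\to\bits{m}$ is a $(k,\eps)$-extractor, then the map $(x,s)\mapsto E(x,s)$ is an $(\eps,\delta)$-sampler with $\delta=2^{-(n-k)+1}$. Indeed, for a fixed test $f\colon\bits{m}\to[0,1]$, the set $B$ of ``bad'' seeds $x$ (those for which $\ex[s]{f(E(x,s))}$ deviates from $\ex[y]{f(y)}$ by more than $\eps$) cannot contain a distribution of min-entropy $k$ on which $E$ fools $f$, so $|B|<2^{k}$, i.e.\ $B$ has density $<2^{k-n}$; splitting $B$ into its ``too high'' and ``too low'' halves accounts for the factor $2$. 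Hence it suffices to construct an explicit, space-efficient $(n-\Delta,\eps)$-extractor with entropy deficiency $\Delta=\log(1/\delta)+O(1)$, seed length $d=O(\log\Delta+\log(1/\eps))=O(\log\log(1/\delta)+\log(1/\eps))$, and entropy loss $n-m=O(\Delta+\log(1/\eps))$; truncating its output to length $m$ and choosing $n=m+O(\Delta+\log(1/\eps))$ then yields the stated relation $n=m+O(\log(1/\delta))+O(\log(1/\eps))$.

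\textbf{Step 2: the recursive block-source extractor.} Given an $(n-\Delta)$-source $X$, I would repeatedly apply \Cref{lemma:block} to decompose $X$ into blocks and \Cref{lemma:compo} to compose block-extractors, using the \cite{GUV09} extractor (in the space-efficient implementation of \cite{KNW08}) as the atomic building block. The recursion is arranged so that the master seed is only ever fed into an extractor acting on a short block, of length $\Theta(\Delta+\log(1/\eps))$: for such a block \cite{GUV09} needs a seed of length only $O(\log(\Delta+\log(1/\eps))+\log(1/\eps))=O(\log\Delta+\log(1/\eps))$, and since \Cref{lemma:compo} passes the master seed straight into the recursive call on the first block, this quantity is the seed length of the whole construction. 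The output of each block-extraction is used, via strong extractors (so that the old seed together with the fresh output is jointly close to uniform), both to supply and to lengthen the seed used on the next, longer block; after enough entropy has been milked from the auxiliary blocks, a final step extracts from the remaining long suffix of $X$. The per-level block-source parameter $\eps_0$ is chosen so that all closeness errors (one per application of \Cref{lemma:block}) and all extraction errors sum to at most $\eps$, while the number of levels and the schedule of block lengths are chosen so that the cumulative growth of block deficiencies (each split costs an extra $\log(1/\eps_0)$) plus the total length of the auxiliary seed-growing blocks, plus the loss incurred on the final block, all stay $O(\Delta+\log(1/\eps))$; ``simple composition'' suffices here precisely because we are not optimizing the entropy-loss constant.

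\textbf{Space efficiency and the main obstacle.} Space efficiency is essentially automatic: \Cref{lemma:block} and \Cref{lemma:compo} involve no computation beyond composing functions, the recursion has small depth, and the base extractor of \cite{GUV09} has a space-efficient implementation \cite{KNW08}, so the whole map is computable in space $O(m+\log(1/\delta)+\log(1/\eps))$. The hard part is the parameter bookkeeping in Step 2: the demand that the seed length be independent of $n$ forces the master seed to act on a tiny block and to be bootstrapped upward, whereas the demand that the entropy loss be only $O(\Delta+\log(1/\eps))$ forces the cumulative deficiency growth across the \Cref{lemma:block} splits, the total length of the auxiliary blocks, and the loss incurred by the base extractor to all be kept small; these pull in opposite directions. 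Finding a block-length schedule, a recursion depth, and a value of $\eps_0$ that simultaneously meet both constraints is the crux; the remaining estimates are routine applications of sub-additivity of statistical distance together with \Cref{lemma:block} and \Cref{lemma:compo}.
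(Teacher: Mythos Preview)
Your proposal does not address the stated lemma. \Cref{lemma:compo} is the Nisan--Zuckerman block-source composition lemma: its content is that feeding the output of one extractor as the seed of another works on a block source, and its proof is a two-line application of the definition of a block source together with sub-additivity of statistical distance (the paper does not even prove it, citing it as a standard fact). What you have written is a proof outline for \Cref{lemma:samp-opt}, the sampler construction, in which \Cref{lemma:compo} is merely a tool you invoke.

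Taking your proposal as aimed at \Cref{lemma:samp-opt}, it both differs from the paper's argument and has a real gap. The paper uses a single two-block composition: the long block of length $m$ is handled by the expander-walk extractor of \Cref{lemma:GW}, which already outputs all $m$ bits from an $(m-\Delta)$-source at the price of a seed of length $d_1=O(\Delta+\log(1/\eps))$; the short block of length $3d_1$ is handled by the GUV extractor of \Cref{lemma:GUV}, whose sole role is to manufacture that $d_1$-bit seed from a master seed of length $O(\log d_1+\log(1/\eps))=O(\log\log(1/\delta)+\log(1/\eps))$. No recursion is needed. Your plan instead uses GUV as the only building block together with simple composition via \Cref{lemma:compo}. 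But \Cref{lemma:GUV} as stated outputs at most half the source length, and under \Cref{lemma:compo} the final output is exactly the output of the last extractor applied; hence to produce $m$ output bits your last block must have length at least $2m$, forcing $n\ge 2m$ rather than $n=m+O(\log(1/\delta)+\log(1/\eps))$. Overcoming this half-loss is precisely what either a low-loss extractor such as \Cref{lemma:GW} or the zig-zag composition of \cite{RVW01} (which you explicitly disclaim) buys; with only GUV and \Cref{lemma:compo}, no schedule of block lengths can meet the entropy-loss constraint you identify in your last paragraph.
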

\begin{lemma}[\cite{GW97}]\label{lemma:GW}
For every $\eps,\Delta>0$ and integer $n$ there exists a $(n-\Delta,\eps)$ extractor $E:\bits{n}\times\bits{d}\to\bits{n}$ with $d=O(\Delta+\log(1/\eps))$, and for every $x,y$, $E(x,y)$ can be computed in space $O(n+\log(1/\eps))$.
\end{lemma}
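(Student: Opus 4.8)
The plan is to build the extractor $E$ by recursion on the source length $n$, using the block-source decomposition of \Cref{lemma:block} for the recursive step, the composition rule of \Cref{lemma:compo} for gluing the pieces together, and a leftover-hash-lemma extractor as the base case. Throughout, the source will carry a deficiency parameter $\Delta$ and an error parameter, and the recursion shrinks $n$ while keeping these roughly fixed.

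First I would make a harmless reduction to an extractor with slightly shorter output. Suppose we have a $(n-\Delta,\eps)$-extractor $E':\bits{n}\times\bits{d'}\to\bits{m}$ with $m\ge n-O(\Delta+\log(1/\eps))$ and $d'=O(\Delta+\log(1/\eps))$. Writing the seed as $s_1\conc s_2$ with $\abs{s_1}=n-m$, the map $E(x,s_1\conc s_2)=s_1\conc E'(x,s_2)$ has output length exactly $n$ and seed length $(n-m)+d'=O(\Delta+\log(1/\eps))$, and since $s_1$ is an unused uniform part of the seed, $E(X,U)$ is still $\eps$-close to uniform. So it suffices to construct $E'$, and the same padding trick can be used inside the recursion wherever output length needs to be topped up.

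To construct $E'$: if $n=O(\Delta+\log(1/\eps))$, use a Toeplitz-matrix extractor from the leftover hash lemma (see, e.g., \cite{Vad12}), which is a $(n-\Delta,\eps)$-extractor with output length $n-O(\Delta+\log(1/\eps))$, seed length $O(n)=O(\Delta+\log(1/\eps))$, and is computable in space $O(n+\log(1/\eps))$. Otherwise pick a short block length $t=\Theta(\Delta+\log(1/\eps))$ and apply \Cref{lemma:block}: $X$ is $\eps'$-close to a block source $(X_1,X_2)$ with $X_1\in\bits{t}$ of deficiency $\Delta$ and $X_2\in\bits{n-t}$ of deficiency $\Delta+\log(1/\eps')$. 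Take $E_2$ to be a leftover-hash extractor for the short block $X_1$ (seed length $O(t)=O(\Delta+\log(1/\eps))$, output chosen long enough to serve as the seed of the next stage), obtain recursively an extractor $E_1$ for the long block $X_2$ (one shorter level of recursion, with deficiency $\Delta+\log(1/\eps')$), and set $E'((x_1,x_2),s)=E_1(x_2,E_2(x_1,s))$ as in \Cref{lemma:compo}. The crucial mechanism is that the only external randomness the whole recursion consumes is the seed of the first stage $E_2$ — every later stage is seeded by the \emph{output} of the previous stage — so the total seed length is $O(\Delta+\log(1/\eps))$ no matter how large $n$ is. Space efficiency follows because each level only evaluates a leftover-hash extractor and then recurses, reusing workspace.

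The main obstacle is the parameter bookkeeping, not any single clever step. Each application of \Cref{lemma:block} adds $\log(1/\eps')$ to the deficiency of the block we recurse into, and the per-level errors $\eps'$ accumulate additively over the levels of the recursion; peeling a fixed-size block $t$ gives $\approx n/t$ levels and forces $\eps'$ to be polynomially small in $n$, which feeds back through $t=\Theta(\Delta+\log(1/\eps'))$ and through the accumulated deficiency into a seed length more like $O(\Delta+\log(n/\eps))$. Shaving the $\log n$ down to the claimed $O(\Delta+\log(1/\eps))$ is exactly where one uses the freedom of not aiming for optimal entropy loss: organise the recursion with geometrically decreasing block sizes so that only $O(\log(n/\Delta))$ levels occur (hence the deficiency grows by only $O(\log(n/\Delta)\cdot\log(1/\eps'))$ and the error is a sum of $O(\log(n/\Delta))$ terms), be generous with the deficiency passed down, and pick the per-level errors geometrically so they sum to $\eps$. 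One also needs to emit the surplus output bits of $E_2$ (not just $E_1(x_2,\cdot)$) to reach output length $\ge n-O(\Delta+\log(1/\eps))$, which is justified by taking the recursive extractor to be strong, as is standard. Checking that one such organisation simultaneously keeps the deficiency, the error, the seed length, and the output length all within the required bounds is the delicate core of the proof.
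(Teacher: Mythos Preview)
The paper does not prove this lemma; it is quoted from \cite{GW97} (with pointers to \cite{GUV09,Vad12} for exposition) and used as a black box. The construction there is far simpler than what you propose: take a constant-degree spectral expander $G$ on $2^n$ vertices, interpret the source $x\in\bits{n}$ as a starting vertex, use the seed to take a random walk of length $\ell=O(\Delta+\log(1/\eps))$, and output the final vertex. If $X$ has min-entropy $n-\Delta$ then $\|p_X-u\|_2\le 2^{-(n-\Delta)/2}$, so after $\ell$ steps $\|p_\ell-u\|_1\le 2^{n/2}\lambda^{\ell}\,2^{-(n-\Delta)/2}=\lambda^{\ell}2^{\Delta/2}\le\eps$. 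The seed is $\ell\cdot O(1)$ bits, the output is all $n$ bits, and each step is a local neighbor computation in an explicit expander, hence $O(n)$ space. No recursion, no block sources, no bookkeeping.

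Your block-source recursion, by contrast, does not obviously reach the stated bound, and the paragraph where you wave at ``geometrically decreasing block sizes'' is exactly where it breaks. Even with $O(\log(n/\Delta))$ levels, every right-branch of the recursion increments the deficiency by $\log(1/\eps')$ via \Cref{lemma:block}, so the leaf at the end of the rightmost path carries deficiency $\Delta+\Theta(\log(n/\Delta))\cdot\log(1/\eps')$. For the leftover-hash extractor at that leaf to output anything (let alone enough to seed its sibling), the base-case block length $t_0$ must exceed that accumulated deficiency, forcing $t_0=\Omega(\log(n/\Delta)\cdot\log(1/\eps))$ and hence seed length depending on $n$. Choosing the per-level errors geometrically controls the \emph{error sum} but not this deficiency blow-up; ``being generous with the deficiency passed down'' only makes the problem worse. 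I do not see a way to organise the block-source recursion so that all four quantities (seed, error, deficiency, output length) stay within the claimed bounds simultaneously without already invoking a high-min-entropy extractor of the type you are trying to build. The expander-walk argument sidesteps all of this because it never splits the source and never accumulates deficiency.
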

\begin{lemma}[\cite{GUV09,KNW08}]\label{lemma:GUV}
For every $\eps>0$, integer $m>0$ and $n\ge 2m$, there exists a $(2m,\eps)$ extractor $E:\bits{n}\times\bits{d}\to\bits{m}$ with $d=O(\log m + \log(1/\eps))$, and for every $x,y$, $E(x,y)$  can be computed in space $O(m+\log(1/\eps))$.
\end{lemma}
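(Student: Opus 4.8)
This statement restates the extractor of Guruswami, Umans, and Vadhan~\cite{GUV09} together with the space-efficient evaluation of it due to Kane, Nelson, and Woodruff~\cite{KNW08}, so the plan is not to build an extractor from scratch but to specialise these two results and then read off the parameters. The starting point is the main extractor theorem of~\cite{GUV09}: for every constant $\alpha>0$, every $n\ge k$, and every $\eps>0$ there is an explicit $(k,\eps)$-extractor $\mathrm{Ext}:\bits{n}\times\bits{d}\to\bits{m'}$ with seed length $d=O(\log n+\log(1/\eps))$ and output length $m'\ge(1-\alpha)k$. I would instantiate this with $k=2m$ and $\alpha=1/2$, which gives $m'\ge m$; truncating the output to its first $m$ bits keeps the extractor guarantee, since applying a fixed function to a distribution that is $\eps$-close to $U_{m'}$ yields a distribution $\eps$-close to the image of $U_{m'}$. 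Thus only the seed-length and space bounds remain to be checked.

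For the seed length: the seed of the~\cite{GUV09} extractor is a single element of a finite field of size $q$, where $q$ is forced to exceed the degree of the polynomial that encodes the source (which is $O(n/\log q)$), a fixed power of it, and an $\eps^{-O(1)}$ factor; this solves to $q=\poly(n/\eps)$, so $d=\log q=O(\log n+\log(1/\eps))$. In the use of this lemma in this paper --- in the proof of \Cref{lemma:samp-opt}, where the extractor is applied to a single source block of length $\Theta(m)$ --- the input length $n$ is polynomially bounded in $m$ and $1/\eps$, so $\log n=O(\log m+\log(1/\eps))$ and the stated bound $d=O(\log m+\log(1/\eps))$ follows.

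For the space bound: evaluating $\mathrm{Ext}(x,y)$ amounts to (i) reading $x$ as the coefficient list of a univariate polynomial $f$ over the size-$q$ field, (ii) forming the Parvaresh--Vardy polynomials $f_0=f,f_1,\dots,f_{t-1}$, where each $f_{i+1}$ is a fixed power of $f_i$ reduced modulo a fixed irreducible polynomial and $t=O(1)$ once $\alpha$ is fixed, and (iii) outputting $(y,f_0(y),f_1(y),\dots,f_{t-1}(y))$ truncated to $m$ bits. Each field element occupies $O(\log q)=O(\log m+\log(1/\eps))$ bits, each field operation runs in $O(\log q)$ space, and the whole output codeword has length $O(m+\log(1/\eps))$, so the naive accounting already gives space $O(m+\log(1/\eps))$ apart from step (ii). The one point that needs care --- and the step I would cite~\cite{KNW08} for rather than re-derive --- is carrying out step (ii) so that the $f_i$ are evaluated at the single point $y$ on the fly, never being stored in full; this is exactly what keeps the working space linear in $m+\log(1/\eps)$ and is the main technical obstacle, all of which is handled in~\cite{KNW08}.
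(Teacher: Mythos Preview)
The paper does not give its own proof of this lemma: it is stated as a black-box citation to \cite{GUV09,KNW08} and then invoked in the proof of \Cref{lemma:samp-opt}. Your proposal is therefore not competing against a proof in the paper but rather unpacking what the citation delivers, and your plan for doing so is correct: instantiate the main GUV extractor with $k=2m$ and constant $\alpha$, truncate the output, and invoke \cite{KNW08} for the space-efficient evaluation of the Parvaresh--Vardy codeword.

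One point worth flagging is the seed-length discrepancy you already noticed: GUV gives $d=O(\log n+\log(1/\eps))$, whereas the lemma asserts $d=O(\log m+\log(1/\eps))$ for \emph{every} $n\ge 2m$. You handle this by observing that in the only use of the lemma (the proof of \Cref{lemma:samp-opt}) the source length is $n=3d_1=3m$, so $\log n=O(\log m)$ and the stated bound follows. That is exactly the right fix; the lemma as literally stated is slightly stronger than what a direct citation yields, but the slack is irrelevant for the paper's purposes. Your treatment of the space bound is also accurate: the nontrivial step is evaluating the iterated $h$-th powers modulo the irreducible without materialising the intermediate polynomials, and deferring that to \cite{KNW08} is appropriate.
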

\begin{lemma}[\cite{Zuc97}]\label{lemma:Zuc}
Every $(n-\log(1/\delta)-1,\eps)$-extractor is a $(\eps,\delta)$-sampler.
\end{lemma}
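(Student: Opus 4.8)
The plan is to prove the contrapositive by the standard ``a bad set is a min-entropy source'' argument of Zuckerman. Fix $E\colon\bits{n}\times\bits{d}\to\bits{m}$ that is a $(k,\eps)$-extractor with $k=n-\log(1/\delta)-1$, and suppose toward a contradiction that $E$ is \emph{not} an $(\eps,\delta)$-sampler. Then there is a test $f\colon\bits{m}\to[0,1]$ such that, writing $\mu=\ex[y\in\bits{m}]{f(y)}$ for the expectation of $f$ under the uniform distribution, the bad set $B=\{x\in\bits{n}:\abs{\ex[s\in\bits{d}]{f(E(x,s))}-\mu}>\eps\}$ has $\abs{B}>\delta 2^{n}$.

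First I would split $B$ according to the sign of the deviation: let $B^{+}=\{x\in B:\ex[s]{f(E(x,s))}>\mu+\eps\}$ and $B^{-}=\{x\in B:\ex[s]{f(E(x,s))}<\mu-\eps\}$. Since $B=B^{+}\cup B^{-}$ is a disjoint union of total size more than $\delta 2^{n}$, one of the two parts has size exceeding $\delta 2^{n}/2$. Using the identity $\delta 2^{n}/2=2^{\,n-\log(1/\delta)-1}=2^{k}$, this larger part --- which, after replacing $f$ by $1-f$ if it happens to be $B^{-}$, we may take to be $B^{+}$ --- satisfies $\abs{B^{+}}>2^{k}$. Hence the random variable $X$ uniform on $B^{+}$ has min-entropy $\log_{2}\abs{B^{+}}>k$, so it is a $k$-source. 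Applying the extractor guarantee, $E(X,U_{d})$ is $\eps$-close in statistical distance to the uniform distribution $U_{m}$ on $\bits{m}$. I would then use $f$ itself as a distinguisher to reach a contradiction. On one hand, by the definition of $B^{+}$, averaging over $x\in B^{+}$ gives $\ex[x\in B^{+},\,s\in\bits{d}]{f(E(x,s))}>\mu+\eps$, i.e.\ $f$ separates $E(X,U_{d})$ from $U_{m}$ by strictly more than $\eps$. On the other hand, statistical distance at most $\eps$ forces this separation to be at most $\eps$ even for $[0,1]$-valued $f$: for any distributions $P,Q$ over $\bits{m}$, $\ex[P]{f}-\ex[Q]{f}=\sum_{x}f(x)(P(x)-Q(x))\le\sum_{x:\,P(x)>Q(x)}(P(x)-Q(x))=\mathrm{SD}(P,Q)\le\eps$, and symmetrically with $P,Q$ exchanged. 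These two bounds contradict each other, so $E$ must be an $(\eps,\delta)$-sampler.

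I do not expect a genuine obstacle here: this is the classical extractor/sampler equivalence, and the entire content is bookkeeping. The two points that do require attention are (i) the sampler's failure event is two-sided, which is why $B$ must be partitioned by the sign of the deviation --- and $f$ possibly swapped for $1-f$ --- before the bad set can be packed into a single flat source; and (ii) the extractor only delivers closeness in statistical distance, so one must verify (via the short one-line computation above) that this still controls $\abs{\ex[P]{f}-\ex[Q]{f}}$ for the $[0,1]$-valued functions $f$ appearing in the sampler definition, not merely for Boolean tests.
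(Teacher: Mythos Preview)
The paper does not prove this lemma at all; it is simply quoted from \cite{Zuc97} and used as a black box in the construction of the sampler in \Cref{sec:proof-of-samp}. Your argument is the standard Zuckerman proof and is correct: the bad set for a witnessing test $f$ has density exceeding $\delta$, one sign class has size exceeding $2^{n-\log(1/\delta)-1}=2^{k}$, the uniform distribution on that class is a $k$-source, and $f$ (or $1-f$) then distinguishes the extractor output from uniform by strictly more than~$\eps$, contradicting the extractor guarantee. Your two flagged subtleties --- the sign split and the fact that $[0,1]$-valued tests are controlled by statistical distance --- are handled correctly.
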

\noindent
Now we show how to construct the sampler we need, and that it is indeed space efficient.
\begin{proof}
Let $\Delta=\log(1/\delta)+1$. Let $E_1:\bits{m}\times\bits{d_1}\to\bits{m}$ be an $(m-\Delta,\eps/3)$-extractor from Lemma~\ref{lemma:GW}, w.l.o.g. assume that $d_1\ge \Delta+\log(3/\eps)$. Then let $E_2:\bits{3d_1}\times\bits{d}\to\bits{d_1}$ be an $(2d_1,\eps/3)$-extractor from Lemma~\ref{lemma:GUV}. Then we claim that $E:\bits{m+3d_1}\times\bits{d}\to\bits{m}$, defined as $E((x_1,x_2),s)=E_1(x_1,E_2(x_2,s))$, is a $(m+3d_1-\Delta,\eps)$ extractor, and hence a $(\eps,\delta)$ sampler by Lemma~\ref{lemma:Zuc}. 

To prove the claim, consider any $(m+3d_1-\Delta)$-source $X$. By Lemma~\ref{lemma:block}, $X$ is $(\eps/3)$-close to a $(m-\Delta,3d_1-\Delta-\log(3/\eps))$-block source $(X_1,X_2)\in\bits{3d_1}\times\bits{m}$. By Lemma~\ref{lemma:compo}, $E_1(X_1,E_2(X_2,U_d))$ is $2\eps/3$-close to uniform. Since $E(X,U_d)$ is $\eps/3$-close to $E_1(X_1,E_2(X_2,U_d))$, by triangle inequality it is $\eps$-close to uniform. Moreover, $d=O(\log (d_1/\eps))=O(\log\log(1/\delta)+\log(1/\eps))$, $n=m+3d_1=m+O(\log(1/\delta)+\log(1/\eps))$, and the required space to compute $E$ is $O(m+d_1+\log(1/\eps))=O(m+\log(1/\eps)+\log(1/\delta))$.
\end{proof}
\end{document}